\def\ps@pprintTitle{%
  \let\@oddhead\@empty
  \let\@evenhead\@empty
  \let\@oddfoot\@empty
  \let\@evenfoot\@oddfoot
}
\def\BibTeX{{\rm B\kern-.05em{\sc i\kern-.025em b}\kern-.08em
    T\kern-.1667em\lower.7ex\hbox{E}\kern-.125emX}}
\newcommand{\al}{G}
\newcommand{\N}{\ensuremath{\mathbb{N}}\xspace}
\newcommand{\entropy}[1]{\mathcal{#1}}
\newcommand{\cyl}[1]{{Cyl(#1)}}
\newcommand{\Z}{\mathrm{Z}}
\newcommand{\ZZ}{\mathbb{Z}}
\newcommand{\Hom}{\mathrm{Hom}}
\newcommand{\Aut}{\mathrm{Aut}}
\newcommand{\Ker}{\mathrm{Ker}}
\newcommand{\End}{\mathrm{End}}
\newcommand{\Imma}{\mathrm{Im}}
\newcommand{\locrule}{\ensuremath{f}}
\newcommand{\glorule}{\ensuremath{{\cal F}}}
\newcommand{\ie}{i.e.\@\xspace}
\newcommand{\gzd}{{\al}^{\ZZ}}
\newtheorem{theorem}{Theorem}
\newtheorem{definition}{Definition}
\newtheorem{remark}{Remark}
\newtheorem{example}{Example}
\newtheorem{lemma}{Lemma}
\newtheorem{corollary}{Corollary}
\newtheorem{proposition}{Proposition}
\newtheorem{conjecture}{Conjecture}
\newtheorem{question}{Question}
\journal{Journal of Computer and System Sciences}
\begin{document}

\begin{frontmatter}

\title{A Divide and Conquer Algorithm for Deciding Group Cellular Automata Dynamics}

\author[liceo]{Niccol\`o Castronuovo}
\ead{niccolo.castronuovo@studio.unibo.it}

\author[unimib]{Alberto Dennunzio\corref{cor}}
\ead{alberto.dennunzio@unimib.it}



\author[unibo]{Luciano Margara}
\ead{luciano.margara@unibo.it}
\cortext[cor]{Corresponding author}

\affiliation[liceo]{organization={Liceo ``A. Einstein''},
            city={Rimini},
            postcode={47923}, 
            country={Italy}}


\affiliation[unimib]{organization={Dipartimento di Informatica, Sistemistica e Comunicazione,
  Università degli Studi di Milano-Bicocca},
            addressline={Viale Sarca 336/14}, 
            city={Milano},
            postcode={20126}, 
            country={Italy}}
            
  


\affiliation[unibo]{organization={Department of Computer Science and Engineering, University of Bologna, Cesena Campus},
            addressline={Via dell'Universita 50}, 
            city={Cesena},
            postcode={47521}, 
            country={Italy}}
            








 \begin{abstract}

We prove that many dynamical properties of group cellular automata  (\ie, cellular automata defined on any finite group and with global rule which is an endomorphism)—including surjectivity, injectivity, sensitivity to  initial conditions, strong transitivity,  positive expansivity, and topological entropy—can be decided by decomposing them into a set of much simpler group cellular automata. To be more specific, we provide a novel algorithmic  technique allowing one to decompose the group cellular automaton to be studied into a finite number of group cellular automata, some of them defined on abelian groups, while  others, if any, defined on products of simple non-abelian isomorphic groups. 
It is worth noting that the  groups resulting from the decomposition only depend  on the original group and therefore they are completely independent of both the automaton and the property under investigation. As a result, they do not inherit any aspect of the complexity of the automaton under investigation.

We prove that the group cellular automata obtained by the decomposition preserve dynamical properties and turn out to be much easier to analyze if compared to the original cellular automaton. As a consequence of these results, we show that injectivity, surjectivity and sensitivity to initial conditions are decidable properties and that no strongly transitive, and therefore no positively expansive, group cellular automata defined
on non-abelian groups exist. Moreover, we prove that the topological entropy of a group cellular automaton can be computed, provided we know how to compute the topological entropy for group cellular automata defined on products of simple non-abelian isomorphic groups and on abelian groups.
\end{abstract}

%

\begin{keyword}
Cellular Automata \sep Group Cellular Automata \sep Dynamical Behavior \sep Chaos \sep Decidability
\end{keyword}

\end{frontmatter}

\section{Introduction}\label{secintro}

Cellular Automata (CA) serve as formal models for complex systems and can be viewed as discrete-time dynamical systems consisting of a regular grid of variables, each taking values from a finite set. The overall state of a CA (with a little abuse of notation, we will use CA also to denote a single cellular automaton) is defined by the values of all variables at a specific time $t$, and it evolves in discrete time steps according to a specified local rule. This rule updates in a synchronous and homogeneous way each variable on the basis of  the values at time $t-1$ of its neighboring variables (for an introduction to CA theory, see~\cite{hedlund69, Hadeler2017CellularAA, ceccherinigroupca}).

CA have been the subject of significant research and are applied in various fields, including computer science, physics, mathematics, biology, and chemistry, for purposes such as simulating natural phenomena, generating pseudo-random numbers, processing images, analyzing universal computation models, and cryptography (see for example~\cite{MARTINDELREY20051356,AnghelescuIS07,RubioEWRS04,kari2000,DennunzioFGM21INS}). 

One of the central challenges in CA theory is describing the global behavior of a CA based on the analysis of its local rule. While the local rule has a finite representation (e.g., a finite table), the global behavior can encompass an arbitrarily large, potentially infinite, amount of information. In fact, the grid of variables representing the global state of the CA may have any size and the desired global behavior might only emerge after an arbitrarily large number of time steps.

Many properties related to the temporal evolution of general CA have been proved to be undecidable, including non-trivial properties of their limit sets and fundamental dynamical properties like sensitivity to initial conditions, equicontinuity, topological transitivity, and chaos (see for example~\cite{ckps89,cks88,Hurd_Kari_Culik_1992,Kari94}). Since in practical applications one needs to know if the CA used for modelling a certain system exhibits some specific property, this can be a severe issue. 

Fortunately, the undecidability issue of dynamical properties of CA can be tackled by placing specific constraints on the model. In many cases, like the one we are exploring in this paper, the alphabet and the local rule are restricted to being a finite group and a homomorphism, respectively, giving rise to Group CA (GCA). It is important to note that these constraints do not at all hinder the effectiveness of such CA in practical applications. In fact, GCA can exhibit much of the complex behaviors of general CA and they are often used in various applications (see for example \cite{DennunzioFGM21INS,NandiKC94, RubioEWRS04}). 

During the last few decades, substantial efforts have been made to analyze the dynamical behavior of GCA on abelian finite groups, and more recently, on non-abelian (general) finite groups as well. 

Many fundamental global properties  of abelian GCA such as  injectivity, surjectivity, sentitivity to the  initial conditions, topological transitivity, ergodicity, positive expansivity, denseness of periodic orbits, and chaos have been fully characterized in terms of easy to check properties of their local rules (see \cite{CattaneoFMM00,DennunzioLFM09,ManziniM99a,ManziniM99} for GCA on $\ZZ/m\ZZ$ and \cite{DennunzioFGM2020TCS,Dennunzio20JCSS,DennunzioFMMP19,DBLP:journals/access/DennunzioFM23,DBLP:journals/isci/DennunzioFM24,kari2000} for GCA on general abelian groups). For GCA on $\ZZ/m\ZZ$, closed formulas for topological entropy and Lyapunov exponents were also  provided in \cite{DamicoMM03} and \cite{FinelliMM98}, respectively.
 
As to non-abelian GCA, preliminary results on injectivity and surjectivity were provided in \cite{BeaurK24} where the authors prove that, in dimensions $D>1$, injectivity and surjectivity remain decidable. Later, the dynamical behavior of GCA was investigated across various classes of finite groups, including simple, symmetric, alternating, dihedral, quaternion, and decomposable groups, with initial findings reported in \cite{GCA24}.

Surprisingly, the non-abelian nature of the group imposes significant restrictions on defining the local rule of GCA, resulting in a highly constrained class of GCA that is much more challenging to study. 

In this paper, we introduce a novel and general algorithmic technique that simplifies the analysis of the dynamical properties of \emph{any} GCA and the problem of deciding them, by reducing the GCA  to the study of a finite sequence of GCA, some of them defined on abelian groups and others on products of isomorphic simple groups. 

Specifically, we show  that several important properties (surjectivity, injectivity, sensitivity to  initial conditions, equicontinuity) hold for a GCA $\glorule$ defined on a finite group $G$ (not necessarily abelian) if and only if the same properties hold for a corresponding set $\{\glorule_1, \dots, \glorule_k\}$ of GCA, which are derived from $\glorule$ by means of our algorithmic decomposition techìnique and defined on much simpler finite groups $\{G_1, \dots, G_k\}$, where each  $G_i$ is either abelian or 
the product of simple non-abelian isomorphic groups. Using the same technique, we also show that the topological entropy of a GCA can be computed, provided we know how to compute the topological entropy for GCA defined on products of simple isomorphic groups and on abelian groups, which are two open but much simpler problems.


As previously recalled, the dynamical properties of GCA on abelian groups have been fully characterized in terms of easy to check properties of their local rules, while, regarding GCA defined on the product of simple non-abelian isomorphic  groups, we show in this paper that their dynamical behaviors are highly constrained and  can be thoroughly analyzed, especially in the surjective case.
Hence, the study of the dynamics of the GCA $\glorule$ on a group $G$ can be effectively reduced to that of the GCA $\glorule_1$, \ldots, $\glorule_k$ some of them defined on abelian groups and others on products of isomorphic simple groups, thus significantly simplifying the problem of studying the dynamical behavior of $\glorule$. 

A crucial point is then how to get the set of $\{\glorule_1, \dots, \glorule_k\}$ of GCA defined on the finite groups $\{G_1, \dots, G_k\}$ starting from any given GCA on $G$. We show that the those sets can be obtained  by starting with the group $G$ and repeatedly applying the quotient operation by any non-trivial normal and fully invariant subgroup (invariant under the action of any group endomorphism), our algorithmic decomposition technique being just the implementation of all this and based on a divide and conquer strategy (see Algorithm 1 and 2 in Section~\ref{section_spezzatino}).  At each quotienting step, we obtain two smaller groups (and correspondingly two GCA): the fully invariant normal subgroup and the quotient group. The quotienting process ends when we either obtain an abelian group or reach groups that do not admit any non-trivial normal and fully invariant subgroups. According to the theory, we know that these groups are, in fact, products of isomorphic copies of a simple group.

Our decomposition technique allowed us to prove that injectivity, surjectivity and sensitivity to  initial conditions/equicontinuity are decidable properties for GCA and that there are no strongly transitive nor positively expansive GCA on non-abelian groups. 


We are confident that the same decomposition technique used for one-dimensional GCA could also be successfully applied in dimensions $D>1$.

\noindent The rest of this paper is organized as follows.

Section~\ref{pre_group} contains  the basic group theory results needed throughout the paper. In Section~\ref{cellular_a} we recall the fundamental definitions and known facts about CA and their topological and dynamical properties. Section~\ref{general_res} is devoted to the study of fundamental properties of GCA, such as injectivity, surjectivity, sensitivity to initial conditions, equicontinuity, topological transitivity, and topological entropy. We prove a series of results that allow us to reduce determining whether a given property holds for a GCA on a group $G$ to checking if the same property holds for two GCA both defined  on two smaller groups: a suitable normal subgroup  $H$ of $G$ and the quotient group $G/H$.
We also prove that a GCA on a non-abelian group cannot be neither strongly transitive nor positively expansive. Furthermore, we provide several results concerning the equivalence of dynamical, topological, and metric properties, which are typically distinct in the context of general CA. In Section~\ref{section_spezzatino}, we provide our decomposition technique  allowing one to exploit the results from Section~\ref{general_res} to reduce the study of a given property for a GCA to the analysis of a number of GCA defined on just two types of groups: abelian groups and products of isomorphic copies of simple non-abelian groups. We also provide the decidabilty results of the dynamical properties for GCA. 
Section~\ref{conclusion} contains some concluding remarks and a list of open questions. Finally,~\ref{A1} contains basic definitions and results on group and field theory.

\section{Preliminary results  on groups}\label{pre_group}

In this section, we state some known results and prove a few preliminary lemmata about groups that will be useful in the subsequent discussion.
Readers unfamiliar with group theory can refer to~\ref{A1}.

The following lemma is a trivial consequence of the definitions of commutator subgroup and center of a group. 

\begin{lemma}\label{Lemma_commutator_prod}
Let $G$ be a direct product of groups, \ie, $G=G_1\times\dots \times G_k$ for some integer $k>1$. It holds that the commutator subgroup of $G$ is the direct product of the commutator subgroups of $G_1$, $G_2,\dots, G_k$ and the center of $G$ is the direct product of the centers of $G_1$, $G_2,\ldots, G_k$.
\end{lemma}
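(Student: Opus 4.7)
The plan is to handle the two claims separately, reducing each to a routine componentwise computation in the direct product. In both cases the key observation is that the group operation on $G=G_1\times\cdots\times G_k$ acts coordinatewise, so algebraic identities written in $G$ decompose into $k$ independent identities, one per factor. I would first treat the case $k=2$ and then extend to general $k$ by an obvious induction, or alternatively write the argument directly for arbitrary $k$ since it adds no extra difficulty.

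For the commutator subgroup, I would start from the definition $[G,G]=\langle [g,h] : g,h\in G\rangle$ where $[g,h]=g^{-1}h^{-1}gh$. Writing $g=(g_1,\dots,g_k)$ and $h=(h_1,\dots,h_k)$, coordinatewise multiplication gives
\[
[g,h]=\bigl([g_1,h_1],\,[g_2,h_2],\,\dots,\,[g_k,h_k]\bigr).
\]
Hence every generator of $[G,G]$ lies in $[G_1,G_1]\times\cdots\times[G_k,G_k]$, so $[G,G]\subseteq [G_1,G_1]\times\cdots\times[G_k,G_k]$. For the reverse inclusion, any element of the form $(1,\dots,1,[g_i,h_i],1,\dots,1)$ is itself the commutator in $G$ of two elements supported on the $i$-th coordinate, hence lies in $[G,G]$; taking products of such elements shows that $[G_1,G_1]\times\cdots\times[G_k,G_k]\subseteq[G,G]$.

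For the center, I would argue directly from the definition. An element $z=(z_1,\dots,z_k)\in G$ lies in $Z(G)$ iff $zg=gz$ for every $g=(g_1,\dots,g_k)\in G$; coordinatewise this says $z_ig_i=g_iz_i$ for all $i$ and all $g_i\in G_i$. Taking $g$ with arbitrary $i$-th coordinate and identity elsewhere, this is equivalent to $z_i\in Z(G_i)$ for every $i$, giving $Z(G)=Z(G_1)\times\cdots\times Z(G_k)$.

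There is no real obstacle here: both statements follow directly from the coordinatewise nature of the product operation. The only point that deserves a line of care is verifying the non-trivial inclusion $[G_1,G_1]\times\cdots\times[G_k,G_k]\subseteq[G,G]$ (generators in a product of subgroups must be recovered from generators in $[G,G]$), which is handled by the embedding trick of placing a single commutator in one slot and identities elsewhere.
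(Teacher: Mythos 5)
Your proof is correct and is exactly the routine componentwise argument the paper has in mind: the paper offers no proof at all, stating the lemma is a trivial consequence of the definitions, and your write-up simply supplies the expected details (including the one inclusion that needs a word, namely embedding single-factor commutators into $G$). The only cosmetic discrepancy is that you write $[g,h]=g^{-1}h^{-1}gh$ while the paper's appendix uses $[g,h]=ghg^{-1}h^{-1}$; this changes nothing since both conventions generate the same subgroup.
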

A subgroup $H$ of a group $G$ is \textit{characteristic} if $\varphi(H)\leq H$ for every $\varphi\in \Aut(G)$.
As an example, the center of a group is a characteristic subgroup. 
A stronger property is  fully invariance. A \textit{fully invariant} subgroup of a group $G$ is a subgroup $H$ of $G$ such that, for every $\phi \in End(G)$, $\phi(H)\leq H$.

In other terms, the restriction of an endomorphism of $G$ to a fully invariant subgroup $H$ induces an endomorphism on $H$. We also recall that a fully invariant subgroup is normal.

\begin{lemma}\label{Lemma_commutator_fullyinv}
The commutator subgroup of any group is a fully invariant subgroup. More generally, if $G_1$ and $G_2$ are groups and $f\in \Hom(G_1,G_2)$, then the restriction of $f$ to the commutator of $G_1$ is an homomorphism from $[G_1,G_1]$ to $[G_2,G_2]$.
\end{lemma}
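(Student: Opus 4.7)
The plan is to prove the more general statement first and then obtain the fully invariance of $[G,G]$ as the special case $G_1=G_2=G$ with $f$ an arbitrary endomorphism.

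First I would unwind the definition of the commutator subgroup: $[G_1,G_1]$ is the subgroup of $G_1$ generated by all elements of the form $[a,b]:=aba^{-1}b^{-1}$ with $a,b\in G_1$. The key computation is then a single line: for any homomorphism $f\in\Hom(G_1,G_2)$ and any $a,b\in G_1$,
\[
f([a,b]) \;=\; f(a)f(b)f(a)^{-1}f(b)^{-1} \;=\; [f(a),f(b)],
\]
which lies in $[G_2,G_2]$ by definition. So $f$ sends each generator of $[G_1,G_1]$ into $[G_2,G_2]$.

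Second, I would bootstrap from generators to the whole subgroup. An arbitrary element $g\in[G_1,G_1]$ has the form $g=c_1^{\varepsilon_1}\cdots c_n^{\varepsilon_n}$ with each $c_i$ a commutator and $\varepsilon_i\in\{+1,-1\}$. Since $f$ is a homomorphism, $f(g)=f(c_1)^{\varepsilon_1}\cdots f(c_n)^{\varepsilon_n}$, and by the first step each $f(c_i)$ is in $[G_2,G_2]$; hence $f(g)\in[G_2,G_2]$. Thus $f$ restricts to a map $[G_1,G_1]\to[G_2,G_2]$, and this restriction is a homomorphism because it is the restriction of a homomorphism to a subgroup on which it remains well-defined and inherits the multiplicativity.

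Finally, for the first sentence of the lemma, I would apply the above with $G_1=G_2=G$ and $f=\phi$ for an arbitrary $\phi\in\End(G)$: the computation shows $\phi([G,G])\leq[G,G]$, which is precisely the fully invariance of the commutator subgroup.

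There is no real obstacle here; the whole argument reduces to the observation that homomorphisms commute with the formation of commutators. The only point worth being explicit about is the passage from generators to the whole subgroup, which is standard but must be invoked so that the conclusion applies to every element of $[G_1,G_1]$ and not only to single commutators.
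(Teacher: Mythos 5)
Your proof is correct and is exactly the standard argument the paper has in mind (the paper states this lemma without proof, treating it as an immediate consequence of the fact that homomorphisms send commutators to commutators). The explicit passage from generators to the full subgroup is the right point to make precise, and nothing further is needed.
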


\begin{lemma}\label{pre_image_normal}
Let $G$ and $G'$ be two groups and let $f\in \Hom(G,G')$.
If $N'\trianglelefteq G'$ then the preimage  
$f^{-1}[N']$ is a normal subgroup of $G$.
If $f$ is surjective and $N\trianglelefteq G$, $f(N)$ is a normal subgroup of $G'$.
\end{lemma}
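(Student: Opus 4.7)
The plan is to verify both parts using only the definition of homomorphism and the elementary fact that a subset $H \leq G$ is a subgroup precisely when it is nonempty and closed under products and inverses.

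For the first claim, I would begin by showing $f^{-1}[N']$ is a subgroup of $G$. It contains the identity since $f(1_G)=1_{G'}\in N'$, and for $x,y\in f^{-1}[N']$ we have $f(xy^{-1})=f(x)f(y)^{-1}\in N'$ because $N'$ is a subgroup of $G'$. Normality is then the key computation: given any $g\in G$ and $n\in f^{-1}[N']$, apply $f$ to the conjugate to get $f(gng^{-1})=f(g)f(n)f(g)^{-1}$; since $f(n)\in N'$ and $N'\trianglelefteq G'$, this product lies in $N'$, so $gng^{-1}\in f^{-1}[N']$.

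For the second claim, $f(N)$ is a subgroup of $G'$ by the standard fact that the homomorphic image of a subgroup is a subgroup (again using $f(1_G)=1_{G'}$ and $f(n_1 n_2^{-1})=f(n_1)f(n_2)^{-1}$). To prove normality, pick any $g'\in G'$ and $n'\in f(N)$; by surjectivity of $f$ write $g'=f(g)$ for some $g\in G$, and by definition of $f(N)$ write $n'=f(n)$ for some $n\in N$. Then
\[
g' n' (g')^{-1} \;=\; f(g)\,f(n)\,f(g)^{-1} \;=\; f(gng^{-1}),
\]
and since $N\trianglelefteq G$ we have $gng^{-1}\in N$, so the conjugate lies in $f(N)$.

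There is no real obstacle here; the statement is a textbook fact and each step is a direct unwinding of definitions. The only subtle point worth flagging is why surjectivity is genuinely needed in the second half: without it, an arbitrary $g'\in G'$ need not be of the form $f(g)$, and conjugation by such a $g'$ can carry $f(N)$ outside itself (as the trivial example of the inclusion of a non-normal subgroup into an ambient group shows). Everything else is routine and the write-up should be short.
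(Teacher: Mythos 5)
Your proof is correct and complete; the paper states this lemma without proof, treating it as a standard textbook fact, and your argument is exactly the standard definition-unwinding one (including the correct observation of where surjectivity is needed in the second part). Nothing to add.
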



Now we introduce a notation that will be systematically used in the following. Let $e$ be the identity element of a group $G$. 
If $f\in \Hom(G^k,G)$, we will write $f=(h_1,\ldots,h_k)$ in which $h_i\in \End(G)$ is defined, for all $g\in G$, as
$h_i(g)=f(e,\ldots,e,g,e,\ldots,e)$, where $g$ occupies the $i$-th position in $f$ and all the other entries are equal to $e$.

\begin{lemma}\label{surj_normal}
Let $\locrule\in \Hom(G^k,G)$ with $\locrule=(h_1,\ldots,h_k)$.
Let $N\trianglelefteq G$. 
Then, for every $i$, $h_i(N)$ is normal in $\Imma(f)$.
In particular, if  $\locrule$ is surjective, every $\Imma(h_i)$ is normal in $G$.
\end{lemma}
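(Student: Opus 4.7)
The plan is to exploit the direct product structure of $G^k$ to reduce everything to a single conjugation identity. First I would observe that any tuple factors as
$$(g_1,\ldots,g_k) = (g_1,e,\ldots,e)(e,g_2,e,\ldots,e)\cdots(e,\ldots,e,g_k),$$
so because $\locrule$ is a homomorphism we obtain
$$\locrule(g_1,\ldots,g_k)=h_1(g_1)h_2(g_2)\cdots h_k(g_k).$$
In particular, every element of $\Imma(\locrule)$ has this form, and when $\locrule$ is surjective every $g\in G$ can be written this way.

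Next I would compute a conjugation inside $G^k$: fixing $i$ and $n\in N$, and denoting by $\nb{i}(n)$ the tuple with $n$ in slot $i$ and $e$ elsewhere, the conjugation
$$(g_1,\ldots,g_k)\,\nb{i}(n)\,(g_1,\ldots,g_k)^{-1}=(e,\ldots,e,g_i n g_i^{-1},e,\ldots,e)=\nb{i}(g_i n g_i^{-1})$$
holds componentwise. Applying $\locrule$ to both sides and using that $\locrule$ is a homomorphism gives
$$\locrule(g_1,\ldots,g_k)\,h_i(n)\,\locrule(g_1,\ldots,g_k)^{-1}=h_i(g_i n g_i^{-1}).$$
Since $N\trianglelefteq G$, the element $g_i n g_i^{-1}$ lies in $N$, so the right-hand side lies in $h_i(N)$. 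As $(g_1,\ldots,g_k)$ ranges over $G^k$, the element $\locrule(g_1,\ldots,g_k)$ ranges over all of $\Imma(\locrule)$, which proves that $h_i(N)$ is normal in $\Imma(\locrule)$.

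For the second assertion, assume $\locrule$ is surjective and take $N=G$: then $h_i(G)=\Imma(h_i)$, and by the previous paragraph $\Imma(h_i)$ is normal in $\Imma(\locrule)=G$.

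There is no real obstacle here beyond spotting the coordinatewise conjugation identity; once one notices that conjugating $\nb{i}(n)$ by an arbitrary tuple only affects the $i$-th coordinate, the claim follows by a direct computation using nothing more than the fact that $\locrule$ is a homomorphism and $N$ is normal in $G$.
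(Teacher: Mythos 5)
Your proof is correct and follows essentially the same route as the paper: the paper observes that the subgroup $M_i$ of $k$-tuples with $i$-th entry in $N$ and all other entries $e$ is normal in $G^k$ with $h_i(N)=f(M_i)$, and then invokes the general fact (its Lemma~\ref{pre_image_normal}) that the image of a normal subgroup under a homomorphism is normal in the image of that homomorphism. Your explicit conjugation computation of $(g_1,\ldots,g_k)\,\nb{i}(n)\,(g_1,\ldots,g_k)^{-1}$ is simply an unpacking of that citation, so no substantive difference.
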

\begin{proof}
This follows by Lemma \ref{pre_image_normal} and the fact that if $N\trianglelefteq G$ then the subgroup $M_i$ of $G^k$ consisting of the $k-$tuples with the $i$-th component in $N$ and all the others equal to $e$ is a normal subgroup of $G^k$ and $h_i(N)=f(M_i)$.
\end{proof}



\begin{lemma}\label{f_sur_preserva_Z}
Let $\locrule\in \Hom(G^k,G)$ be a surjective homomorphism. 
Then, $f(\Z^k_G)\subseteq \Z_G$.
In particular, if $f=(h_1,\ldots,h_k)$, it holds that $h_j(\Z_G)\subseteq \Z_G$ for every $j$.
\end{lemma}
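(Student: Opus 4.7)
The plan is to unfold directly what ``$f(\Z_G^k)\subseteq \Z_G$'' means: take an arbitrary tuple $z=(z_1,\ldots,z_k)\in \Z_G^k$ and show that $f(z)$ commutes with every element of $G$, using surjectivity to write an arbitrary element of $G$ as an image under $f$.

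More concretely, I would fix $z=(z_1,\ldots,z_k)$ with each $z_i\in \Z_G$, and let $g\in G$ be arbitrary. By surjectivity of $f$ there exists $x=(x_1,\ldots,x_k)\in G^k$ with $f(x)=g$. Then, using that $f$ is a homomorphism,
\[
f(z)\, g \;=\; f(z)\, f(x) \;=\; f(z_1 x_1,\ldots, z_k x_k),
\]
and symmetrically $g\, f(z) = f(x_1 z_1,\ldots, x_k z_k)$. Since each $z_i$ lies in $\Z_G$, we have $z_i x_i = x_i z_i$ for every $i$, so the two tuples inside $f$ coincide componentwise, giving $f(z)\,g = g\, f(z)$. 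As $g$ was arbitrary, $f(z)\in \Z_G$, which is the desired inclusion $f(\Z_G^k)\subseteq \Z_G$.

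For the ``in particular'' clause, by the notational convention $h_j(g)=f(e,\ldots,e,g,e,\ldots,e)$ with $g$ in position $j$. If $g\in \Z_G$, then the tuple $(e,\ldots,e,g,e,\ldots,e)$ belongs to $\Z_G^k$ (trivially, since $e$ is central), so the first part yields $h_j(g)=f(e,\ldots,e,g,e,\ldots,e)\in \Z_G$, whence $h_j(\Z_G)\subseteq \Z_G$ for every $j$.

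There is no real obstacle here: the statement is essentially a direct consequence of the definition of the center together with the fact that $f$ is a surjective homomorphism. The only thing one must be careful about is that surjectivity is used in an essential way (otherwise one could only conclude that $f(z)$ commutes with every element of $\Imma(f)$, which is weaker than lying in $\Z_G$).
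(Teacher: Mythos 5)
Your argument is correct and is essentially identical to the paper's proof: both show that $f(z)$ commutes with every $f(x)$ by moving the central entries componentwise through the homomorphism, then invoke surjectivity to conclude $f(z)\in \Z_G$, and derive the statement about the $h_j$'s by specializing to tuples with $e$ in all but one position. No differences worth noting.
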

\begin{proof}
Consider $(a_1,\ldots,a_k)\in \Z^k_G$ and $(b_1,\ldots,b_k)\in G^k$. We get 
\begin{align*}
f(a_1,\ldots,a_k)f(b_1,\ldots,b_k)&=f(a_1b_1,\ldots,a_kb_k)\\&=f(b_1,\ldots,b_k)f(a_1,\ldots,a_k)\enspace.
\end{align*}
Since $f$ is surjective this implies that $f(a_1,\ldots,a_k)\in \Z_G$.
In particular, $f(e,\ldots,e,a_i,e,\ldots,e)=h_i(a_i)\in \Z_G$ for every $a_i \in \Z_G.$
\end{proof}


\section{About CA and GCA} \label{cellular_a}

In this section, we review the fundamental definitions and key results related to CA and GCA. For additional definitions and results, we refer the reader to those introduced in~\cite{GCA24}.
\subsection{CA configurations} 
Let $G$ be a finite set. A CA \emph{configuration} is any function from $\ZZ$ to $G$, \ie, an element of $\gzd$.  Given a configuration $c\in\gzd$ and any integer $i\in\ZZ$, the value of $c$ at position $i$ is denoted by $c_i$, while for any $i,j\in\ZZ$ with $i\leq j$ we note $c_{[i,j]}=c_i\ldots c_j\in G^{j-i+1}$.

The set $G^{\mathbb Z}$ is also a topological space with the \textit{prodiscrete topology}, i.e., the product topology when each factor $G$ is given the discrete topology. For any $i,j\in\ZZ$ with $i\leq j$ and any $u\in G^{j-i+1}$, the \textit{cylinder} $C([i,j],u)$ 
is the subset of $G^{\mathbb Z}$ defined as $$C([i,j],u):= \{c\in\gzd: c_{[i,j]}=u\}$$
The cylinders form a clopen basis for the prodiscrete topology and, when equipped with that topology, $\gzd$ turns out to be a  a compact, Hausdorff, and totally disconnected topological space. Moreover,  $\gzd$ is Polish space, i.e., a separable completely metrizable topological space. 
Indeed, the set $G^{\mathbb Z}$ can be equipped with the standard Tychonoff ultrametric $d$ 
defined as 
$$
\forall c,c^\prime\in\gzd:\  d(c,c')=\begin{cases}
0 & \text{ if }  c=c' \\
2^{-\Delta(c,c')} &\text{otherwise}
\end{cases}
$$
where $\Delta(c,c')=\min\{|j|: j\in\mathbb Z \text{ and } c_j\neq c'_j\}$ and the topology induced by the Tychonoff metric coincides with the prodiscrete topology. Since it has no isolated points, the set $\gzd$ is also a Cantor space. 

\subsection{CA} A \emph{CA} on $G$ is any continuous function $\glorule: G^\ZZ \to G^\ZZ$ which also shift commuting, \ie, $\glorule\circ \sigma=\sigma \circ \glorule$, where the shift map $\sigma: G^\ZZ\to G^\ZZ$ is defined as follows
 $$\forall c\in G^\ZZ, \forall i\in \ZZ:\  \sigma(c)_i=c_{i-1}.
 $$
Any CA  can be equivalently defined by means of a \emph{local rule} $f:G^{2\rho +1} \to G$, where $\rho\in \N$ (see\cite{hedlund69}). Namely, a CA $\glorule$ with local rule  $f$ is defined as follows:
$$\forall c \in G^\ZZ,\ \forall i\in \ZZ:\ \glorule(c)_i=f(c_{i-\rho} ,\dots, c_{i+\rho}).$$ 

A CA $\glorule$ is said to be \textit{injective} (\textit{surjective})  if the map $\glorule$ is injective (surjective). We recall that injective CA are surjective and a CA is surjective if and only if every configuration has a finite and uniformly bounded number of pre-images~\cite{hedlund69}. 
A CA $\glorule$ is said to be \textit{open} if $\glorule$ is an open map with respect to the prodiscrete topology, i.e., if it maps open sets to open sets. 
\smallskip

A CA  $\glorule$ is \emph{topologically transitive} if for any pair of nonempty open subsets $U,V\subseteq\gzd$ there exists a natural $t>0$ such that $\glorule^t(U)\cap V\neq\emptyset$, while $\glorule$ is \emph{strongly transitive} if for any nonempty open subset $U\subseteq\gzd$ it holds that $\bigcup_{t\in\N} \glorule^t(U)=X$. Strongly transitive CA are topologically transitive and topologically transitive CA are surjective. Strongly transitive CA cannot be injective. 

 
A CA  $\glorule$ is \emph{sensitive to  initial conditions}  if there exists $\epsilon>0$ such  that for any $\delta>0$ and $c\in G^{\mathbb Z}$ there is a configuration $c'\in G^{\mathbb Z}$ with $0<d(c',c)<\delta$ such that $d(\glorule^t(c'), \glorule^t(c))\geq\epsilon$ for some natural $t$.
Sensitivity to  initial conditions  is the well-known basic component and essence of the chaotic behavior of discrete time dynamical systems.



A CA $\glorule$ has \emph{dense periodic orbits} (DPO) if the set of its periodic points is dense in $\gzd$, where a periodic point for $\glorule$ is any configuration $c\in\gzd$ such that $F^h(c)=c$ for some natural $h>0$.

Sensitivity to  initial conditions, topological transitivity and DPO are the features that together define the popular notion of \emph{chaos} according to the Devaney definition \cite{devaney89}.

A CA  $\glorule$ is said to be  \emph{equicontinuous} if for any $\epsilon>0$ there exists $\delta>0$ such that for all $c,c'\in\gzd$, $d(c,c')<\delta$ implies that $\forall k\in\N$, $d(\glorule^k(c'), \glorule^k(c'))<\epsilon$.

Note that there are CA that are neither sensitive to  initial conditions nor equicontinuous, and these are called almost equicontinuous. In the case of CA on abelian groups, however, this intermediate case does not exist: CA on abelian groups are either equicontinuous or sensitive to initial conditions.






A CA $\glorule$ is \emph{positively expansive} if for some constant $\varepsilon>0$ it holds that for any pair of distinct configurations $c,c'\in G^{\mathbb Z}$ there exists a natural number $t$ such that $d(\glorule^{t}(c),\glorule^{t}(c'))\geq\varepsilon$. We emphasize that positive expansivity is a strong form of chaos.  Indeed, on one hand, positive expansivity for a CA is a stronger condition than sensitivity to initial conditions. On the other hand, any positively expansive CA is also strongly transitive (and thus topologically transitive), and at the same time, it has DPO. Therefore, any positively expansive CA is chaotic according to Devaney's definition of chaos. Clearly, if a CA 
$\glorule$ is positively expansive, then it is surjective but not injective.

\smallskip

\textit{Topological entropy} is one of the
most studied properties of dynamical systems. Informally, the topological entropy measures the uncertainty of the forward evolution of any dynamical system in
the presence of incomplete description of initial configurations.
The definition of topological entropy $\entropy{H}(F)$ of a continuous map $F:X\to X$ over a compact space $X$ was introduced in \cite{Adler}.
For 1-dimensional CA $\glorule$, the  definition of topological entropy is equivalent to the following \cite{Hurd_Kari_Culik_1992}. Let $R(w,t)$ denote the number of distinct rectangles of width $w$ and height $t$ occurring in a space-time evolution diagram of $\glorule$ (for the exact definition of $R(w,t)$ see \cite{Hurd_Kari_Culik_1992}). Then, 
$$\entropy{H} (\glorule)=\lim_{w\to +\infty}\lim_{t\to +\infty}\dfrac{\log R(w,t)}{t}.$$


\subsection{GCA}
Let $G$ be a finite group with identity element $e$. The set $G^\ZZ$ is also a group, with the component-wise operation defined by the group operation of $G$, and we denote by $e^{\ZZ}$ the configuration taking the value $e$ at every integer position, \ie, $e^{\ZZ}$ is the identity element of the group $G^\ZZ$. Clearly, when equipped with the prodiscrete topology, $G^\ZZ$ turns out to be both a profinite and Polish group. A configuration $c\in\gzd$ is said to be \emph{finite} if the number of positions $i\in\ZZ$ such that $c_i\neq e$ is finite.

The (normalized) Haar measure $\mu$ on $G^{\mathbb Z}$ is the product measure induced by the uniform probability distribution
over $G$. In particular, for every  cylinder $C([i,j],u)$ we have $\mu(C([i,j],u))=\frac{1}{|G|^{j-i+1}}$.
If $H\leq G$, then $H^{\mathbb Z}$ is a closed subgroup of $G^{\ZZ}$. Moreover, $H\trianglelefteq G$ if and only if $H^{\mathbb Z}\trianglelefteq G^\ZZ$.
In this case, the prodiscrete topologies on $H^{\mathbb Z}$ and $(G/H)^{\mathbb Z}$ agree with the subspace topology on $H^{\mathbb Z}$ and with the quotient topology on $(G/H)^{\mathbb Z}$, respectively.

A CA $\glorule: \gzd\to\gzd$ is said to be a \emph{GCA} if $\glorule$ is an endomorphism of $G^{\ZZ}$. In that case, the local rule of $\glorule$ is a homomorphism $f:G^{2\rho+1} \to G$ (see~\cite{CR22} for a proof as far as an arbitrary algebraic structure is concerned).
The \emph{kernel of a GCA} $\glorule$ is $Ker(\glorule)=\{c\in\gzd: \glorule(c)=e^{\ZZ}\}$.   

Given any function $f:G^{2\rho+1}\to G$, by \cite[Thm.1]{GCA24}, $f\in \Hom(G^{2\rho+1},G)$ if and only if there exist $2\rho+1$ endomorphisms $h_{-\rho},\ldots,h_{\rho}\in \End(G)$, such that $f(g_{-\rho},\ldots,g_{\rho})=h_{-\rho}(g_{-\rho}) \cdots h_{\rho}(g_{\rho})$ for all $g_{-\rho},\ldots,g_{\rho}\in G$ and 
$\Imma(h_i)\subseteq C_G(\Imma(h_j))$ for every $i\neq j$.
In this case, according to the notation introduced in Section~\ref{pre_group}, we will write $f=(h_{-\rho},\ldots,h_{\rho})$. 
Notice that some of the $h_i$'s could be trivial but we will always assume, if not otherwise stated, that at least one between $h_{-\rho}$ and $h_{\rho}$ is non-trivial. In this case ${\rho}$ will be said to be \textit{the radius} of the GCA $\glorule$ and will be indicated also with $\rho(\glorule).$  

A GCA with local rule $f$  is called \textit{shift-like} if $\rho\geq 1$ and only one between the $h_i$'s is non trivial, while  it is called \textit{identity-like} if $\rho=0$. Surjective shift-like GCA are topologically transitive. 

We recall here the following Theorem (which is a slight generalization of \cite[Thm.5]{GCA24}) that states that, for simple non-abelian and quasi-simple groups, the structure of GCA is almost trivial. 

\begin{theorem}
Let $G$ be a simple  non-abelian or a quasi-simple  group. Then, any GCA on $G$ is either shift-like or identity-like. 
\end{theorem}
\begin{proof}
It follows from the fact that every endomorphism of a simple or a quasi-simple group is trivial or an automorphism. 
\end{proof}
%
%
\section{Dynamical properties of GCA through group quotients}\label{general_res}

In this section, we show how to reduce the study of a number of dynamical properties of any GCA on a group $G$ to that of two GCA defined on two smaller groups: a normal subgroup $H$ of $G$ and the quotient group
$G/H$.
\smallskip

In the following, if $G$ is a group and $H\trianglelefteq G,$ the coset of the element $x\in G$ in $G/H$ will be  $[x]:=xH$. Moreover, if $c=(\ldots c_{-1}c_0c_1\ldots)\in G^{\ZZ}$, then $[c]$ will denote the element of $(G/H)^{\mathbb Z}$ given by $(\ldots [c_{-1}][c_0][c_1]\ldots)$. 

\begin{definition}\label{tilde_barra_def}
Let $G$ be a finite group and $H\trianglelefteq G$. Let $\glorule$ be a GCA on $G$ such that $\glorule(H^{\mathbb Z})\subseteq H^{\mathbb Z}$. Then, the maps 
\begin{eqnarray*}
&&\overline{\glorule}_{H}:H^{\mathbb Z}\to H^{\mathbb Z}  \text{ and}\\
&&\widetilde{\glorule}_{H}:(G/H)^{\mathbb Z}\to (G/H)^{\mathbb Z}
\end{eqnarray*}
are defined as follows:
\begin{eqnarray}
&&\forall c\in H^{\mathbb Z}:\ \overline{\glorule}_H(c):=\glorule(c) \text{ and }\label{Fbar}\\
&&\forall [c]\in (G/H)^{\mathbb Z}:\ \widetilde{\glorule}_H([c]):=[\glorule(c)]. \label{Ftilde}
\end{eqnarray}
\end{definition}

Note that, by Equations~\eqref{Fbar} and~\eqref{Ftilde} and since $\glorule(H^{\mathbb Z})\subseteq H^{\mathbb Z}$,  $\overline{\glorule}_H$ and $\widetilde{\glorule}_H$   are well-defined GCA on
$H^{\mathbb Z}$ and $(G/H)^{\mathbb Z}$, respectively. From now on, when the group $H$ is clear from the context, we will simplify the notation by using $\overline{\glorule}$ and $\widetilde{\glorule}$ in place of $\overline{\glorule}_{H}$ and $\widetilde{\glorule}_{H}$.
Clearly, it holds that if $\glorule$ is a GCA on the group $G$ with local rule $f=(h_{-\rho},\ldots,h_{\rho})$ and $H\trianglelefteq G$, then $\glorule(H^{\mathbb Z})\subseteq H^{\mathbb Z}$ if and only if $f(H^{2\rho+1})\subseteq H $ if and only if $h_i(H)\subseteq H$ for every $-\rho\leq i\leq \rho.$ 

We also stress that if  $f=(h_{-\rho},\ldots, h_{\rho})$ is the  local rule of $\glorule$, then the local rules $\overline{f}$ and $\widetilde{f}$ of $\overline{\glorule}$ and $\widetilde{\glorule}$ are $\overline{f}=(h_{-\rho}|_H,\ldots,h_{\rho}|_H)$ and $\widetilde{f}=(\widetilde h_{-\rho},\ldots,\widetilde h_{\rho})$ where $h_i|_H$ is the restriction of $h_i$ to $H$ and $\widetilde{h}_i([x])=[h_i(x)]$ for every $x\in G.$

\vspace{0.5cm}

Let us begin considering the two properties that are perhaps the most widely known and studied: surjectivity and injectivity.


\begin{theorem}\label{surj_inj}
Let $G$ be a finite group and $H\trianglelefteq G$. Let $\glorule$ be a GCA on $G$ such that $\glorule(H^{\mathbb Z})\subseteq H^{\mathbb Z}$. Then, $\glorule$ is surjective (resp., injective) if and only if both $\widetilde{\glorule}$ and $\overline{\glorule}$ are surjective (resp., injective).
\end{theorem}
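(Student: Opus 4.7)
The plan is to exploit the short exact sequence of compact topological groups
\[
1 \to H^{\ZZ} \to G^{\ZZ} \xrightarrow{\pi} (G/H)^{\ZZ} \to 1,
\]
on which $\glorule$ restricts to $\overline{\glorule}$ on $H^{\ZZ}$ and descends to $\widetilde{\glorule}$ on $(G/H)^{\ZZ}$. Both biconditionals reduce to a short kernel/lift chase together with one classical CA fact.

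For injectivity, the direction ($\Leftarrow$) is a kernel chase: if $\glorule(c) = e^{\ZZ}$, then $\widetilde{\glorule}([c]) = [e^{\ZZ}]$ forces $c \in H^{\ZZ}$, and then $\overline{\glorule}(c) = e^{\ZZ}$ forces $c = e^{\ZZ}$. For the direction ($\Rightarrow$), $\overline{\glorule}$ is injective as the restriction of an injection. To deduce injectivity of $\widetilde{\glorule}$ I apply Hedlund's theorem (an injective CA on a finite alphabet is bijective) to $\overline{\glorule}$ itself, which is a CA on the finite alphabet $H$; this gives $\overline{\glorule}(H^{\ZZ}) = H^{\ZZ}$. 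Then any $c$ with $\glorule(c) \in H^{\ZZ}$ satisfies $\glorule(c) = \overline{\glorule}(c_0) = \glorule(c_0)$ for some $c_0 \in H^{\ZZ}$, so injectivity of $\glorule$ yields $c = c_0 \in H^{\ZZ}$, \ie $[c] = [e^{\ZZ}]$.

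For surjectivity, the direction ($\Leftarrow$) is a lift-and-correct argument: given $c \in G^{\ZZ}$, surjectivity of $\widetilde{\glorule}$ yields $c' \in G^{\ZZ}$ with $\widetilde{\glorule}([c']) = [c]$, whence $c \cdot \glorule(c')^{-1} \in H^{\ZZ}$ (using $H \trianglelefteq G$); surjectivity of $\overline{\glorule}$ then yields $h \in H^{\ZZ}$ with $\glorule(h) = c \cdot \glorule(c')^{-1}$, and $\glorule(h c') = c$ follows since $\glorule$ is a homomorphism. For the direction ($\Rightarrow$), surjectivity of $\widetilde{\glorule}$ is immediate by projecting any $\glorule$-preimage of a lift of $[c]$.

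The non-obvious step, which I expect to be the main obstacle, is deducing that $\overline{\glorule}$ is surjective from $\glorule$ being surjective: a $\glorule$-preimage of a configuration in $H^{\ZZ}$ need not itself lie in $H^{\ZZ}$, and a naive periodic-point reduction fails since surjective GCA on $G^{\ZZ}$ are not in general surjective on every set of spatially periodic configurations. My plan is to route through the Garden-of-Eden theorem of Moore--Myhill: on a finite alphabet, a CA is surjective if and only if it is pre-injective, \ie injective on pairs of configurations differing at only finitely many positions. From $\glorule$ surjective, Moore--Myhill gives pre-injectivity of $\glorule$, which transfers immediately to $\overline{\glorule}$: two $H$-valued configurations with equal $\overline{\glorule}$-images a fortiori have equal $\glorule$-images in $G^{\ZZ}$, and if they differ at only finitely many positions then pre-injectivity of $\glorule$ forces them equal. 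Applying Moore--Myhill in reverse to $\overline{\glorule}$, which is itself a CA on the finite alphabet $H$, yields the desired surjectivity on $H^{\ZZ}$.
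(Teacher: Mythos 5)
Your proposal is correct, and the two easy directions (the kernel chase for injectivity and the lift-and-correct argument for surjectivity, both in the $\Leftarrow$ direction) coincide with the paper's. Where you genuinely diverge is in the harder direction, $\glorule$ surjective $\Rightarrow$ $\overline{\glorule}$ and $\widetilde{\glorule}$ surjective. The paper runs everything through the GCA-specific characterization ``surjective $\iff$ $\Ker(\glorule)$ is finite'': the inclusion $\Ker(\overline{\glorule})\subseteq\Ker(\glorule)$ handles $\overline{\glorule}$ in one line, and surjectivity of $\widetilde{\glorule}$ is obtained by constructing an explicit injection from $\Ker(\widetilde{\glorule})$ into $\Ker(\glorule)$ (which then also delivers the injectivity statement for $\widetilde{\glorule}$ for free). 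You instead route the $\overline{\glorule}$ step through Moore--Myhill: pre-injectivity of $\glorule$ transfers verbatim to $\overline{\glorule}$, and the Garden of Eden theorem converts it back to surjectivity on $H^{\ZZ}$ --- a correct argument that does not use the group structure of the kernel at all and would survive in settings where the finite-kernel criterion is unavailable. Your treatment of $\widetilde{\glorule}$ (project a $\glorule$-preimage of a lift of $[c]$ for surjectivity; use injective-implies-bijective for $\overline{\glorule}$ plus injectivity of $\glorule$ to kill $\Ker(\widetilde{\glorule})$) is arguably more direct than the paper's kernel-counting map. What the paper's approach buys is uniformity --- one lemma ($\Ker$ finite vs.\ trivial) drives all four implications at once; what yours buys is generality in the $\overline{\glorule}$ step and a shorter path to $\widetilde{\glorule}$.
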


\begin{proof}

It is well known \cite{hedlund69} that a GCA $\glorule$ is surjective if and only if $\Ker(\glorule)$ is finite and is injective if and only if $\Ker(\glorule)$ contains only $e^{\mathbb Z}$. In particular, if $\glorule$ is injective it is also surjective. 
Clearly $\Ker(\overline{\glorule})\subseteq \Ker(\glorule)$. 
Thus, if $\glorule$ is surjective then $\overline{\glorule}$ is surjective and if $\glorule$ is injective then $\overline{\glorule}$ is injective. 
Moreover,
\begin{eqnarray*}
    \Ker(\widetilde{\glorule})&=&\{[c]\in \left(G/H\right)^\ZZ: \,\widetilde{\glorule}([c])=[e]^{\ZZ}\}\\
&=&\{[c]\in \left(G/H\right)^\ZZ: \, [\glorule(c)]=[e]^{\ZZ}\}\\
&=&\{[c]\in \left(G/H\right)^\ZZ: \,\exists h\in H^{\mathbb Z}, \,\glorule(c)h=e^{\ZZ}\}.
\end{eqnarray*}
Suppose now that $\glorule$ is surjective. 
Since in this case $\overline{\glorule}$ is also surjective, for every  $h\in H^{\mathbb Z}$ there exits $h'\in H^{\mathbb Z}$ such that $\glorule(h')=h$. Thus, the previous set can be written as
$$\{[c]\in \left(G/H\right)^\ZZ: \,\exists h'\in H^{\mathbb Z}, \glorule(ch')=e^{\ZZ}\}.$$
Hence, for every $[c]\in \Ker(\widetilde{\glorule})$ there exists an element $ch'\in \Ker(\glorule)$. Notice that if two elements $[c_1]$ and $[c_2]$ of $\Ker(\widetilde{\glorule})$ give rise to the same element $c_1h_1'=c_2h_2'$ of $\Ker(\glorule)$, then $c_1=c_2h_2'(h_1')^{-1}$ and so $[c_1]=[c_2]$. 
As a consequence, there exists an injective map from $\Ker(\widetilde{\glorule})$ to $\Ker(\glorule)$. 
Since $\glorule$ is surjective, $\Ker(\glorule)$ is finite. Thus, $\Ker(\widetilde{\glorule})$ is also finite and $\widetilde{\glorule}$ is surjective. 
If $\glorule$ is also injective, $\Ker(\glorule)=\{e^{\mathbb Z}\}$ and, hence, we get that $\Ker(\widetilde{\glorule})=\{[e]^{\mathbb Z}\}$.
Thus, $\widetilde{\glorule}$ is injective. 

Assume now that $\widetilde{\glorule}$ and $\overline{\glorule}$ are surjective. We want to show that $\glorule$ is surjective. 
Let $c\in G^{\mathbb Z}$. Since $\widetilde{\glorule}$ is surjective there exists $[d]\in \left(G/H\right)^{\mathbb Z}$ such that $\widetilde{\glorule}([d])=[c]$.
This is equivalent to state that there exits $h\in H^{\mathbb Z}$ with $\glorule(d)h=c$. Since $\overline{\glorule}$ is surjective, there exits $h'\in H^{\mathbb Z}$ such that $\glorule(h')=h$. In this way we get $\glorule(dh')=c$ and, therefore, $\glorule$ is surjective. 

To complete the proof we have to show that if $\widetilde{\glorule}$ and $\overline{\glorule}$ are injective then $\glorule$ is injective.
Let $c\in G^{\mathbb Z}$ such that $\glorule(c)=e^{\mathbb Z}$. Then, $\widetilde{\glorule}([c])=[e]^{\mathbb Z}$. Since $\widetilde{\glorule}$ is injective this implies that $[c]=[e]^{\mathbb Z}$, i.e., $c\in H^{\mathbb Z}$.
Thus $\overline{\glorule}(c)=\glorule(c)=e^{\mathbb Z}$ and, since $\overline{\glorule}$ is injective, $c=e^{\mathbb Z}$. This proves that $\glorule$ is injective. 
\end{proof}



In the next lemma, we prove that if a GCA 
$\glorule$ defined on a finite group
$G$ is surjective, then the kernel of $\glorule$ is entirely contained in $\Z_G^\ZZ$.

\begin{lemma}\label{ker_sta__nel_centro_se_sur}
Let $\glorule$ be a GCA on a finite group $G.$ If $\glorule$ is surjective then $\Ker(\glorule)\subseteq \Z_G^\ZZ.$
\end{lemma}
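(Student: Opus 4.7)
The plan is to exploit three properties of $\Ker(\glorule)$ simultaneously---finiteness, shift-invariance, and normality in $G^{\ZZ}$---to force every coordinate of a kernel configuration into $\Z_G$.

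First, I would collect the three facts that do all the work. Surjectivity of $\glorule$ gives finiteness of $\Ker(\glorule)$, exactly as recalled in the proof of Theorem~\ref{surj_inj}. Normality of $\Ker(\glorule)$ in $G^{\ZZ}$ follows from the fact that $\glorule$ is a group endomorphism: for any $c\in\Ker(\glorule)$ and any $d\in G^{\ZZ}$,
$$\glorule(dcd^{-1})=\glorule(d)\cdot e^{\ZZ}\cdot\glorule(d)^{-1}=e^{\ZZ}.$$
Shift-invariance of $\Ker(\glorule)$ is immediate from $\glorule\circ\sigma=\sigma\circ\glorule$, since $\sigma(e^{\ZZ})=e^{\ZZ}$.

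I then argue by contradiction. Suppose some $c\in\Ker(\glorule)$ has a coordinate $c_{i_0}\notin\Z_G$, and pick $g\in G$ with $gc_{i_0}g^{-1}\neq c_{i_0}$. Shift-invariance combined with finiteness forces the orbit $\{\sigma^k(c):k\in\ZZ\}\subseteq\Ker(\glorule)$ to be finite, so $c$ is spatially periodic of some period $p\geq 1$; in particular $c_{i_0+mp}=c_{i_0}$ for every integer $m$. For each $m\in\ZZ$, let $d^{(m)}\in G^{\ZZ}$ be the configuration whose value is $g$ at position $i_0+mp$ and $e$ elsewhere. Normality yields $d^{(m)}c(d^{(m)})^{-1}\in\Ker(\glorule)$, and this configuration coincides with $c$ except at position $i_0+mp$, where its value is the modified element $gc_{i_0}g^{-1}\neq c_{i_0}$. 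As $m$ ranges over $\ZZ$, these are pairwise distinct elements of $\Ker(\glorule)$, contradicting finiteness. Hence $c_i\in\Z_G$ for every $i\in\ZZ$, i.e., $c\in\Z_G^{\ZZ}$.

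The delicate point is the interplay between the two invariances. Knowing only that $c$ has infinitely many non-central coordinates would not by itself manufacture infinitely many kernel elements via coordinate-wise conjugation, since the perturbations might reshuffle rather than proliferate. It is the spatial periodicity of $c$, extracted from shift-invariance together with finiteness, that guarantees the existence of infinitely many positions where the local modification $gc_{i_0}g^{-1}$ is the \emph{same} non-trivial change, so that only the location of the perturbation varies and the resulting kernel elements are genuinely distinct.
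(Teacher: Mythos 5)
Your proof is correct and follows essentially the same route as the paper's: surjectivity gives a finite kernel, shift-invariance plus finiteness gives spatial periodicity of kernel configurations, and normality of $\Ker(\glorule)$ lets you conjugate by a single-site perturbation at a position carrying a non-central value. The only (immaterial) difference is in how the contradiction is closed: the paper observes that the single conjugate $g^{-1}cg$ fails to be $\sigma$-periodic, whereas you produce infinitely many pairwise distinct conjugates along the period; both violate finiteness of the kernel in the same way.
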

\begin{proof}
Let $\glorule$ be a GCA on a finite group $G.$
The kernel of $\glorule$ is clearly $\sigma$-invariant, i.e., if $c\in \Ker(\glorule)$ then $\sigma(c)\in \Ker(\glorule).$
If $\glorule$ is surjective, then $\Ker(\glorule)$
is a finite set. 
As a consequence, every element $c\in \Ker(\glorule)$  must be $\sigma$-periodic, i.e., there exists $m\in \mathbb Z$ such that $\sigma^m(c)=c.$
Let $c\in \Ker(\glorule)$ and assume by contradiction that there exists $i \in \mathbb Z$ such that $c_i\not\in \Z_G.$
Then, there exists $g_i\in G$ such that $g_i^{-1}c_ig_i\neq c_i.$
Consider the configuration $g\in G^{\mathbb Z}$ with $g_i$ in position $i$ and $g_j=e$ for every $j\neq i$. Since $\Ker(\glorule)$ is a normal subgroup of $G^{\mathbb Z}$,
then $g^{-1}cg\in \Ker(\glorule)$. Since $g^{-1}cg$ is  not $\sigma$-periodic, we get a contradiction. 
\end{proof}

Notice that, if a GCA $\glorule$ is surjective, its local rule $\locrule$ is surjective and therefore $\glorule(\Z_G^{\mathbb Z})\subseteq \Z_{G}^\ZZ$ by Lemma \ref{f_sur_preserva_Z}. Thus, if $\glorule$ is surjective, $\overline{\glorule}_{\Z_G}$ and $\widetilde{\glorule}_{\Z_G}$ are well-defined. 
\begin{corollary}
 Let $G$ be a finite group and let $\glorule$ be a surjective GCA on $G.$ It holds that $\glorule$ is injective if and only if  $\overline{\glorule}_{\Z_G}$ is injective.    
\end{corollary}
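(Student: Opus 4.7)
The plan is to combine Lemma~\ref{ker_sta__nel_centro_se_sur} with Lemma~\ref{f_sur_preserva_Z} to see that, whenever $\glorule$ is surjective, the kernel of $\glorule$ lives entirely inside the subgroup on which $\overline{\glorule}_{\Z_G}$ is defined, so injectivity of $\glorule$ becomes equivalent to injectivity of its restriction to $\Z_G^{\ZZ}$.

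More precisely, I would argue as follows. First note that the forward implication is immediate: if $\glorule$ is injective on $G^{\ZZ}$, then $\overline{\glorule}_{\Z_G}$, being the restriction of $\glorule$ to the subgroup $\Z_G^{\ZZ}$, is injective as well. For the converse, suppose $\overline{\glorule}_{\Z_G}$ is injective and pick any $c \in \Ker(\glorule)$. Since $\glorule$ is surjective, Lemma~\ref{ker_sta__nel_centro_se_sur} gives $c \in \Z_G^{\ZZ}$. By Lemma~\ref{f_sur_preserva_Z}, $\glorule(\Z_G^{\ZZ}) \subseteq \Z_G^{\ZZ}$, so $\overline{\glorule}_{\Z_G}$ is indeed well-defined and coincides with $\glorule$ on $\Z_G^{\ZZ}$. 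Therefore $\overline{\glorule}_{\Z_G}(c) = \glorule(c) = e^{\ZZ}$, and by injectivity of $\overline{\glorule}_{\Z_G}$ we conclude $c = e^{\ZZ}$. Hence $\Ker(\glorule) = \{e^{\ZZ}\}$, i.e., $\glorule$ is injective.

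There is essentially no obstacle here: the corollary is a direct consequence of the preceding lemma, which does all the real work by forcing $\Ker(\glorule)$ to lie in $\Z_G^{\ZZ}$. The only minor care needed is to check that $\overline{\glorule}_{\Z_G}$ is well-defined, and this is explicitly guaranteed by the remark following Lemma~\ref{ker_sta__nel_centro_se_sur}.
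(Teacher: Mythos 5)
Your argument is correct and follows the same route as the paper: Lemma~\ref{ker_sta__nel_centro_se_sur} forces $\Ker(\glorule)\subseteq \Z_G^{\ZZ}$, so $\Ker(\glorule)=\Ker(\overline{\glorule}_{\Z_G})$, and injectivity is equivalent to triviality of the kernel. Your explicit check that $\overline{\glorule}_{\Z_G}$ is well-defined via Lemma~\ref{f_sur_preserva_Z} matches the remark the paper makes just before the corollary.
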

\begin{proof}
By Lemma \ref{ker_sta__nel_centro_se_sur}, it follows that $\Ker(\glorule)\subseteq Z_G^{\mathbb Z}.$ 
Hence, it holds that $\Ker(\glorule)=\Ker(\overline{\glorule}).$
The thesis follows from the fact that a GCA is injective if and only if its kernel contains only $e^{\mathbb Z}.$
\end{proof}

\begin{corollary}
 Let $G$ be a finite group and let $\glorule$ be a surjective GCA on $G.$ It holds that  $\widetilde{\glorule}_{\Z_G}$ is bijective.   \end{corollary}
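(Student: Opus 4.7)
The plan is to deduce surjectivity of $\widetilde{\glorule}_{\Z_G}$ directly from Theorem~\ref{surj_inj}, and then establish injectivity by a short diagram chase that combines the preceding lemma with the surjectivity of $\overline{\glorule}_{\Z_G}$.

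First, note that by Lemma~\ref{f_sur_preserva_Z} the local rule of $\glorule$ maps $\Z_G^{2\rho+1}$ into $\Z_G$, so $\glorule(\Z_G^{\mathbb Z})\subseteq \Z_G^{\mathbb Z}$ and hence the hypothesis of Theorem~\ref{surj_inj} is satisfied with $H=\Z_G$. Applying that theorem yields at once that both $\widetilde{\glorule}_{\Z_G}$ and $\overline{\glorule}_{\Z_G}$ are surjective; this disposes of half of the claim.

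For injectivity of $\widetilde{\glorule}_{\Z_G}$, take any $[c]\in \Ker(\widetilde{\glorule}_{\Z_G})$. By definition this means $[\glorule(c)]=[e]^{\mathbb Z}$, i.e., $\glorule(c)\in \Z_G^{\mathbb Z}$. Since $\overline{\glorule}_{\Z_G}$ is surjective by the previous step, there exists $z\in \Z_G^{\mathbb Z}$ with $\glorule(z)=\glorule(c)$. Then $\glorule(cz^{-1})=e^{\mathbb Z}$, so $cz^{-1}\in \Ker(\glorule)$. Now Lemma~\ref{ker_sta__nel_centro_se_sur} tells us that $\Ker(\glorule)\subseteq \Z_G^{\mathbb Z}$, so $cz^{-1}\in \Z_G^{\mathbb Z}$, and multiplying by $z\in \Z_G^{\mathbb Z}$ we obtain $c\in \Z_G^{\mathbb Z}$, that is, $[c]=[e]^{\mathbb Z}$. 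Thus $\Ker(\widetilde{\glorule}_{\Z_G})=\{[e]^{\mathbb Z}\}$, which gives injectivity.

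There is no real obstacle here: the work was already done in Lemma~\ref{ker_sta__nel_centro_se_sur} (forcing the kernel of a surjective GCA into the center) and in Theorem~\ref{surj_inj} (which produces surjectivity of both $\overline{\glorule}_{\Z_G}$ and $\widetilde{\glorule}_{\Z_G}$). The only thing to watch out for is to invoke the surjectivity of $\overline{\glorule}_{\Z_G}$ to lift the element of $\Z_G^{\mathbb Z}$ obtained from $\glorule(c)$ back into $\Z_G^{\mathbb Z}$; without this, one could only conclude $\glorule(c)\in \Z_G^{\mathbb Z}$ and not that $c$ itself lies there.
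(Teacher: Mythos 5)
Your proof is correct and follows essentially the same route as the paper's: both obtain surjectivity of $\overline{\glorule}_{\Z_G}$ from Theorem~\ref{surj_inj}, lift $\glorule(c)\in\Z_G^{\mathbb Z}$ to some $z\in\Z_G^{\mathbb Z}$, and then invoke Lemma~\ref{ker_sta__nel_centro_se_sur} to force $c$ itself into $\Z_G^{\mathbb Z}$. Your version is slightly more explicit in verifying the hypothesis $\glorule(\Z_G^{\mathbb Z})\subseteq\Z_G^{\mathbb Z}$ via Lemma~\ref{f_sur_preserva_Z} and in deriving surjectivity of $\widetilde{\glorule}_{\Z_G}$, but the substance is identical.
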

\begin{proof}
Let $[c]$ be an element of $\Ker(\widetilde{\glorule}),$ that is to say
$\glorule(c)\in \Z_{G}^{\mathbb Z}.$
Since $\glorule$ is surjective, by Theorem~\ref{surj_inj}, $\overline{\glorule}_{\Z_{G}}$ is also surjective.
Thus, there exists $c'\in\Z_{G}^{\mathbb Z}$ such that $\glorule(c')=\glorule(c).$
Hence, $c=c'z$ for some $z\in \Ker(\glorule).$
Since, by Lemma~\ref{ker_sta__nel_centro_se_sur}$, \Ker(\glorule)\subseteq \Z_{G}^{\mathbb Z}$,  it follows that $c\in \Z_{G}^{\mathbb Z}$ and so $[c]=[e].$ As a consequence, we get that $\Ker(\widetilde{\glorule})=\{[e]\}$ and the thesis easily follows.
\end{proof}

If
$\glorule$ is a surjective GCA on a group $G$, then its local rule
$\locrule$ is clearly surjective.
One may wonder whether the converse holds.
The answer is negative, as the following example shows.

\begin{example}\label{f_sur_F_no}
Let $S$ be a non-trivial group and consider the group $G=S\times S.$ Consider the following two endomorphisms $h_{-1}$ and $h_{1}$ of $G$ defined by $h_{-1}(x,y):=(y,e)$ and $h_1(x,y)=(e,y)$ and let $h_0$ be the trivial endomorphism $h_0(x,y)=e.$ Notice that the images of these  endomorphisms commute element-wise so their product defines an homomorphism $f\in \Hom(G^3,G)$ such that $f=(h_{-1},h_0,h_1).$ Consider a GCA $\glorule$ on $G$ with local rule $f.$ Notice that $\Ker(h_1)=\Ker(h_2)=S\times \{e\}.$ As a consequence, any configuration $c\in G^{\mathbb Z}$ such that $c_i=(a_i,e)$ with $a_i\in S$ belongs to the kernel of $\glorule.$ Thus $\glorule$ is not surjective. On the other hand, the homomorphism $f$ is clearly surjective since $f((a_{-1},b_{-1}),(a_0,b_0),(a_1,b_1))=(b_{-1},b_1).$
\end{example}
The following proposition ensures that for GCA surjectivity  is equivalent to both openness and DPO.

\begin{proposition}\label{surj_open}
Let $\glorule$ be a GCA on a finite group $G$. Then, the following statements are equivalent:\\
$(1)$ $\glorule$ is surjective;\\
$(2)$ $\glorule$ is open;\\
$(3)$ $\glorule$ has DPO.
\end{proposition}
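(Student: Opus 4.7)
My plan is to split the equivalence into two independent parts: $(1)\Leftrightarrow(2)$ via the topological‐group structure of $\gzd$, and $(1)\Leftrightarrow(3)$ via the action of $\glorule$ on spatially periodic configurations.

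For $(1)\Rightarrow(2)$ I will invoke the open mapping theorem for compact Hausdorff topological groups: any continuous surjective homomorphism between such groups is open. Since $\gzd$ with the prodiscrete topology is a compact Hausdorff group and $\glorule$ is a continuous endomorphism, this applies at once; concretely, $\glorule$ factors as the open quotient map $\gzd \to \gzd/\Ker(\glorule)$ followed by a continuous bijection $\gzd/\Ker(\glorule)\to \gzd$ between compact Hausdorff spaces, which is automatically a homeomorphism. For $(2)\Rightarrow(1)$ I will observe that $\glorule(\gzd)$ is open by hypothesis and also closed (continuous image of a compact set inside a Hausdorff space), hence clopen. Shift‐commutativity together with $\sigma(\gzd)=\gzd$ gives $\sigma(\glorule(\gzd))=\glorule(\gzd)$, so $\glorule(\gzd)$ is a nonempty clopen $\sigma$‐invariant subset. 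The proof then closes by noting that the only such subset of $\gzd$ is $\gzd$ itself: if $S$ were proper, $S^c$ would be open and nonempty, and the topological transitivity of $(\gzd,\sigma)$ would produce an $n$ with $\sigma^n(S)\cap S^c\neq\emptyset$, contradicting $\sigma$‐invariance.

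For $(3)\Rightarrow(1)$ I will use a short density argument: any $\glorule$‐periodic configuration $c$ with $\glorule^h(c)=c$ lies in $\glorule(\gzd)$ since $c=\glorule(\glorule^{h-1}(c))$. Under DPO the image therefore contains a dense subset of $\gzd$, and being closed it must equal $\gzd$. For $(1)\Rightarrow(3)$ I will restrict $\glorule$ to the finite subgroup $P_n=\{c\in\gzd:\sigma^n(c)=c\}$ for each $n\in\N$, using shift‐commutativity to see that $\glorule(P_n)\subseteq P_n$. Surjectivity of $\glorule$ makes every fiber $\glorule^{-1}(c)$ finite (by Hedlund's theorem, already invoked in the proof of Theorem~\ref{surj_inj}); for $c\in P_n$ this fiber is also $\sigma^n$‐invariant, so $\sigma^n$ permutes it and must have a fixed point, which supplies a preimage of $c$ inside $P_n$. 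Hence $\glorule|_{P_n}\colon P_n\to P_n$ is a surjective homomorphism of a finite group, and therefore a bijection; every element of $P_n$ is then $\glorule$‐periodic, and density of $\bigcup_n P_n$ in $\gzd$ yields DPO.

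The most delicate step, I expect, is $(1)\Rightarrow(3)$: extracting a genuinely $\sigma^n$‐periodic preimage, rather than merely a $\sigma^{kn}$‐periodic one, hinges on combining the finiteness of the fiber with its $\sigma^n$‐invariance so that a single pigeonhole/permutation argument lands inside $P_n$ itself. The step $(2)\Rightarrow(1)$ is conceptually routine once one has isolated the right ingredients—clopenness and shift‐invariance of the image, plus topological transitivity of the shift on the full shift—and the remaining implications are either one‐line density arguments or a direct appeal to the open mapping theorem.
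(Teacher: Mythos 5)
Your implications $(1)\Rightarrow(2)$, $(2)\Rightarrow(1)$ and $(3)\Rightarrow(1)$ are all correct: the first coincides with the paper's appeal to the open mapping theorem for topological groups, and the other two are clean elementary arguments (a clopen, shift-invariant, nonempty image must be all of $\gzd$ by transitivity of the full shift; periodic points lie in the closed image). The genuine gap is in $(1)\Rightarrow(3)$, precisely at the step you yourself flagged as delicate. A permutation of a finite nonempty set need not have a fixed point, so the $\sigma^n$-invariance and finiteness of the fiber $\glorule^{-1}(c)$ only produce a preimage in $P_{nk}$ for some $k\geq 1$, not in $P_n$. Worse, the conclusion you are after is simply false: for the XOR automaton on $\ZZ/2\ZZ$ given by $\glorule(c)_i=c_i+c_{i+1}$, which is surjective, the constant configuration $1^{\ZZ}\in P_1$ has exactly the two preimages $(01)^{\ZZ}$ and $(10)^{\ZZ}$, both lying in $P_2\setminus P_1$, so $\glorule|_{P_1}$ is not onto $P_1$. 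Moreover $\glorule(1^{\ZZ})=0^{\ZZ}=\glorule(0^{\ZZ})$, so $1^{\ZZ}$ is eventually periodic but not $\glorule$-periodic; hence the set of $\glorule$-periodic points need not contain $\bigcup_n P_n$ at all, and no pigeonhole refinement of your argument can establish that every element of $P_n$ is $\glorule$-periodic.

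What survives of your scheme is only that every spatially periodic configuration has a spatially periodic preimage of possibly larger period, which does not yield DPO. The paper closes the cycle differently: having established $(1)\Rightarrow(2)$, it invokes the known and genuinely nontrivial fact that every open one-dimensional CA has DPO (a Boyle--Kitchens-type theorem, cited from the literature), and then uses the easy implication $(3)\Rightarrow(1)$. If you want a self-contained proof of $(1)\Rightarrow(3)$ for GCA, the natural repair is to pass to the eventual image $E_n=\bigcap_{k}\glorule^k(P_n)$, a subgroup of $P_n$ on which $\glorule$ acts bijectively and whose elements are therefore genuinely $\glorule$-periodic, and then to prove that $\bigcup_n E_n$ is dense in $\gzd$; that density statement is where the real work lies and it is not supplied by your current argument.
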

\begin{proof}
It is well known that a surjective homomorphism of Polish group is open \cite[Thm. 1.5]{hofmann2007open}.
On the other hand, for any ($1$-dimensional) CA $\glorule$, the openness of $\glorule$ implies that $\glorule$ has DPO which, in turn, implies the surjectivity of $\glorule$  \cite{DennunzioFGM2020TCS}.
\end{proof}

We now turn our attention to several other dynamical properties of GCA. As in the case of injectivity and surjectivity, our goal is to reduce their study to that of GCA defined on smaller groups.

Before proceeding, however, we prove 
the equivalence of various forms of chaotic behavior of a dynamical system in the particular case of GCA. For the definition of \textit{total transitivity} and \textit{topologically weakly mixing} and \textit{topologically mixing} map, we address  the reader to~\cite{Moothathu2005}, while for the definition of \textit{ergodically strongly mixing}, \textit{ergodically weakly  mixing} and \textit{ergodic} map, see \cite{aoki1994topological}. Recall that in our case these definitions refer to the Haar measure and to the prodiscrete topology on the group $G^{\mathbb Z}.$  

\begin{theorem}\label{equiv_prop}
Let $\glorule$ a GCA on a finite group $G$. The following statements are equivalent:\\
$(1)$ $\glorule$ is topologically transitive;\\
$(2)$ $\glorule$ is totally transitive;\\
$(3)$ $\glorule$ is topologically weakly  mixing;\\
$(4)$ $\glorule$ is topologically mixing;\\
$(5)$ $\glorule$ is ergodically weakly mixing;\\
$(6)$ $\glorule$ is ergodically strongly mixing;\\
$(7)$ $\glorule$ is ergodic. 
\end{theorem}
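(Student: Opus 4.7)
The strategy is to establish the cyclic chain
\[
(6)\Rightarrow(5)\Rightarrow(7)\Rightarrow(6) \quad \text{and} \quad (4)\Rightarrow(3)\Rightarrow(2)\Rightarrow(1)\Rightarrow(4),
\]
together with two cross-implications between the topological and measure-theoretic sides. The implications $(6)\Rightarrow(5)\Rightarrow(7)$ and $(4)\Rightarrow(3)\Rightarrow(2)\Rightarrow(1)$ are immediate from the general theory of topological dynamics and ergodic theory, so I would first record them with a short paragraph and cite the standard references.

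Next, I would observe that the Haar measure $\mu$ on $G^{\ZZ}$ assigns strictly positive mass to every nonempty cylinder, and that cylinders form a basis of the prodiscrete topology. In particular every nonempty open set contains a cylinder of positive measure. Applying measure-theoretic weak or strong mixing to such cylinders yields $(5)\Rightarrow(3)$ and $(6)\Rightarrow(4)$. Thus the remaining task is to prove $(1)\Rightarrow(7)$ and $(7)\Rightarrow(6)$, which together close both circles.

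For $(7)\Rightarrow(6)$, I would argue that topological transitivity already forces surjectivity of $\glorule$ (the closed invariant set $\glorule(G^\ZZ)$ contains a dense orbit, so it equals $G^\ZZ$); hence $\glorule$ is a surjective continuous endomorphism of the compact group $G^\ZZ$ and therefore preserves the Haar measure $\mu$. At that point I would invoke the Halmos--Rokhlin theorem for surjective continuous endomorphisms of compact metric groups, which states that ergodicity, ergodic weak mixing, and ergodic strong mixing are equivalent in that setting. This simultaneously gives $(7)\Rightarrow(5)\Rightarrow(6)$, closing the measure-theoretic circle.

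Finally, for $(1)\Rightarrow(7)$ I would exploit the fact that $\glorule$ is a $\sigma$-commuting continuous group endomorphism. Given a $\glorule$-invariant Borel set $A$ with $\mu(A)>0$, the idea is to show $\mu(A)=1$ by using topological transitivity to translate positive-measure information into a density statement on a basis of cylinders, and then using the homogeneity provided by the group structure (left translations preserve $\mu$ and the $\sigma$-action is strongly mixing) to upgrade this to full measure. \textbf{Main obstacle.} This last step, $(1)\Rightarrow(7)$, is the genuinely delicate one: classically it follows on compact abelian groups from Pontryagin duality and character arguments, but the non-abelian setting requires a more careful argument combining $\sigma$-commutation of $\glorule$ with the Haar symmetry of $\mu$. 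An alternative route, if the direct argument proves awkward, would be to defer the equivalence to after the decomposition machinery of Section~\ref{section_spezzatino}: once one reduces to abelian groups (where the equivalence is classical) and to products of simple non-abelian groups (where the structure of GCA is essentially shift-like or identity-like, by the theorem above), all seven properties collapse to easily checkable statements and the equivalence follows.
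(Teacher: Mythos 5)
Your overall architecture coincides with the paper's: everything is reduced to (a) standard implications among transitivity/mixing notions for general CA, (b) the equivalence of ergodicity, ergodic weak mixing and ergodic strong mixing for continuous endomorphisms of compact groups, and (c) a bridge between the topological and measure-theoretic sides. Steps (a) and (b) of your plan are fine and match the paper's citations (Moothathu for $(1)\Leftrightarrow(2)\Leftrightarrow(3)$, the compact-group mixing theorem for $(5)\Leftrightarrow(6)\Leftrightarrow(7)$, and the positive-measure-of-cylinders argument for $(6)\Rightarrow(4)$).

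The genuine gap is exactly where you flag it: $(1)\Rightarrow(7)$. What you offer there is not a proof but a programme (``translate positive-measure information into a density statement \ldots\ and upgrade to full measure''), and the non-abelian case is precisely where the naive character/duality argument breaks down. The missing ingredient is a known theorem: a surjective continuous endomorphism (more generally, an affine transformation) of a compact metric group is ergodic with respect to Haar measure if and only if it is topologically transitive \cite[Cor.~5.5]{walters1975ergodic}; the paper simply invokes this to get $(1)\Leftrightarrow(7)$. Your proposed fallback --- deferring the equivalence to the decomposition machinery of Section~\ref{section_spezzatino} --- does not close the gap either, for two reasons. First, it is logically circular: the decomposition results are developed \emph{after} and partly \emph{using} this theorem. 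Second, and more substantively, the decomposition of topological transitivity is itself incomplete: Theorem~\ref{transitive} only shows that transitivity of $\overline{\glorule}$ and $\widetilde{\glorule}$ implies transitivity of $\glorule$ and that transitivity of $\glorule$ passes to $\widetilde{\glorule}$, while the converse direction for $\overline{\glorule}$ is left open (Question~\ref{conjecture_trans}). So one cannot ``reduce to abelian and simple-product factors'' and transport transitivity back and forth freely. With the Walters citation in place your argument closes; without it, the chain $(1)\Rightarrow(7)\Rightarrow(6)\Rightarrow(4)$ that your plan relies on is unsupported.
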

\begin{proof}
The following chains of implications are true: \\
$(1)\iff (2) \iff (3)$: holds for general CA \cite[Prop. 9.2]{Moothathu2005}.\\
$(6)\implies (4) \implies (1)$: holds for general CA   \cite{mfcs19}.\\
$(5)\iff (6) \iff (7)$: follows from the fact that they are equivalent for continuous endomorphisms of compact groups  \cite[Thm. 2]{Chu} and the fact that a global rule of any GCA on $G$ is a continuous endomorphism of the compact group $G^{\mathbb Z}$ (equipped with the prodiscrete topology).\\
$(1) \iff (7)$: follows from the fact that a surjective endomorphism (and more generally an affine transformation) of a compact metric group is ergodic (with respect to the Haar measure) if and only if it is topologically transitive  \cite[Cor. 5.5]{walters1975ergodic}. 
\end{proof}
Theorem~\ref{equiv_prop} allows us, in what follows, to consider only topological transitivity, as the other properties listed above are equivalent to it.





\begin{theorem}\label{transitive}
Let $G$ be a finite group and $H\trianglelefteq G$. Let $\glorule$ be a GCA on $G$ such that $\glorule(H^{\mathbb Z})\subseteq H^{\mathbb Z}$. 
If $\widetilde{\glorule}$ and $\overline{\glorule}$ are topologically transitive, then $\glorule$ is topologically transitive. 
Moreover, if $\glorule$ is topologically transitive, then $\widetilde{\glorule}$ is topologically transitive. 
\end{theorem}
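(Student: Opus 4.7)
The plan is to treat the two implications separately and, before handling the harder one, to invoke Theorem~\ref{equiv_prop} so that topological transitivity of $\widetilde{\glorule}$ and $\overline{\glorule}$ can be upgraded to topological mixing. This upgrade is what will let me pick a \emph{common} time $t$ that works for both quotient factors simultaneously, which turns out to be the crux of the argument.

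First I would handle the easier direction ``$\glorule$ transitive $\Rightarrow$ $\widetilde{\glorule}$ transitive'' by a pullback argument. Let $\pi\colon G^{\ZZ}\to(G/H)^{\ZZ}$ be the projection $c\mapsto[c]$; it is continuous and surjective, and by the very definition of $\widetilde{\glorule}$ one has $\pi\circ\glorule=\widetilde{\glorule}\circ\pi$. For any non-empty open $\widetilde{U},\widetilde{V}\subseteq(G/H)^{\ZZ}$, the preimages $U:=\pi^{-1}(\widetilde{U})$ and $V:=\pi^{-1}(\widetilde{V})$ are non-empty open in $G^{\ZZ}$; transitivity of $\glorule$ gives $t>0$ and $c\in U$ with $\glorule^t(c)\in V$, and then $\widetilde{\glorule}^t([c])=[\glorule^t(c)]\in\widetilde{V}$ with $[c]\in\widetilde{U}$.

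For the other direction, assume $\widetilde{\glorule}$ and $\overline{\glorule}$ are topologically mixing. It suffices to check the transitivity condition on cylinders, so fix $U=C([i,j],u)$ and $V=C([k,l],v)$ in $G^{\ZZ}$; their projections $\widetilde{U}=C([i,j],[u])$ and $\widetilde{V}=C([k,l],[v])$ are cylinders in $(G/H)^{\ZZ}$. By mixing of $\widetilde{\glorule}$ there is $T_1$ such that, for every $t\ge T_1$, one can pick $[d]\in\widetilde{U}$ with $\widetilde{\glorule}^t([d])\in\widetilde{V}$; lifting $[d]$ to a configuration $c$ that agrees with $u$ on $[i,j]$ yields $c\in U$ satisfying $\glorule^t(c)_r=v_r h_r^{*}$ for $r\in[k,l]$, where the tuple $h^{*}=(h_r^{*})_{r\in[k,l]}$ lies in the finite set $H^{l-k+1}$. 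Since that set is finite, applying topological mixing of $\overline{\glorule}$ to each of the finitely many pairs $(A,B_\tau)$ with $A:=C([i,j],(e,\ldots,e))$ and $B_\tau:=C([k,l],\tau)$ in $H^{\ZZ}$, and taking the maximum of the resulting thresholds, produces a single $T_2$ such that for every $t\ge T_2$ and every $\tau\in H^{l-k+1}$ there is some $a\in A$ with $\overline{\glorule}^t(a)_r=\tau_r$ on $[k,l]$.

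For any $t\ge\max(T_1,T_2)$, I would first apply the lifting step to obtain $c\in U$ and the corresponding $h^{*}$, and then apply the correction step with $\tau_r=(h_r^{*})^{-1}$ to obtain $a\in H^{\ZZ}$ with $a_r=e$ on $[i,j]$ and $\overline{\glorule}^t(a)_r=(h_r^{*})^{-1}$ on $[k,l]$. Setting $c':=c\cdot a$ gives $c'\in U$ (because $a$ is trivial on $[i,j]$) and, by the homomorphism property of $\glorule$ together with $\glorule(H^{\ZZ})\subseteq H^{\ZZ}$, $\glorule^t(c')_r=\glorule^t(c)_r\cdot\overline{\glorule}^t(a)_r=v_r h_r^{*}(h_r^{*})^{-1}=v_r$ on $[k,l]$, so $c'\in U\cap\glorule^{-t}(V)$. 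The main obstacle is precisely this time-synchronization between the two subsystems; it is resolved by upgrading transitivity to mixing via Theorem~\ref{equiv_prop} and by using the finiteness of $H$ to make the second threshold uniform in $\tau$.
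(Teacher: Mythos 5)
Your proof is correct, and its overall architecture coincides with the paper's: the easy direction is the same pullback argument through $\pi$, and the hard direction uses the same gluing idea --- lift a $\widetilde{\glorule}$-transit to a configuration $c\in U$, record the $H$-valued defect $h^{*}$ on the target window, and cancel it by multiplying with a configuration $a\in H^{\ZZ}$ that is trivial on the source window and whose $\overline{\glorule}$-orbit produces $(h^{*})^{-1}$ at the same time $t$. Where you genuinely diverge is at the crux you correctly identify, namely synchronizing the time $t$ between the two subsystems. The paper stays inside elementary CA combinatorics: it invokes Moothathu's theorem that a product of transitive CA is transitive, together with the uniformity fact that for a transitive CA a single $n_k$ realizes the transit for \emph{all} pairs of words of length $k$, which yields one $n_k$ serving both $\widetilde{\glorule}$ and $\overline{\glorule}$ and all possible defects at once. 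You instead upgrade transitivity to topological mixing via Theorem~\ref{equiv_prop} (legitimately --- $\widetilde{\glorule}$ and $\overline{\glorule}$ are GCA on the finite groups $G/H$ and $H$, and Theorem~\ref{equiv_prop} is proved independently of Theorem~\ref{transitive}, so there is no circularity) and then exploit cofinality of mixing times plus the finiteness of $H^{l-k+1}$ to get a uniform threshold over all possible corrections $\tau$. Your route is arguably cleaner at this step, since mixing hands you common times for free, but it leans on heavier machinery: the proof of Theorem~\ref{equiv_prop} rests on ergodic-theoretic results for endomorphisms of compact groups, whereas the paper's argument is self-contained at the symbolic-dynamics level and would survive in settings where the transitivity--mixing equivalence is unavailable. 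Both resolutions are sound; the group-theoretic bookkeeping in your gluing step (order of multiplication of $v_rh_r^{*}(h_r^{*})^{-1}$, triviality of $a$ on $[i,j]$, and $\glorule^t(a)=\overline{\glorule}^t(a)$ for $a\in H^{\ZZ}$) is handled correctly.
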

\begin{proof}






Let $\cyl{G}$ be the clopen basis of cylinders for $G^{\mathbb Z}$. The corresponding clopen basis for $H^{\mathbb Z}$ is $\cyl{H}=\cyl{G}\cap H^{\mathbb Z}$.
Let $\pi$ be the projection map $\pi:G\to G/H$ defined by 
$\pi(g)=gH=[g]$. With  slight abuse of notation we will use $\pi$ to denote also the corresponding projection from $G^{\ZZ}$ to $(G/H)^{\ZZ}.$








First, we show that if $\glorule$ is topologically transitive, then $\widetilde{\glorule}$ is also topologically transitive. 
Let $\widetilde{U}$ and $\widetilde{V}$ be two nonempty open sets in $(G/H)^{\mathbb Z}$. Consider the open sets $\pi^{-1}(\widetilde{U})$ and 
$\pi^{-1}(\widetilde{V})$. There exists $n\geq 0$ and $x\in \pi^{-1}(\widetilde{U})$ such that $\glorule^n(x)\in \pi^{-1}(\widetilde{V})$.
Hence, $\widetilde{\glorule}^n([x])=[\glorule^n(x)]\in V$ with $[x]\in U$ and the map $\widetilde{\glorule}$ is topologically transitive. 

Now we prove that, if $\widetilde{\glorule}$ and $\overline{\glorule}$ are topologically transitive, then $\glorule$ is topologically transitive as well. We exploit the following 
well-known result by Moothathu~\cite{Mo05}:  if $F$ and $G$ are any two topologically transitive CA over the alphabets $A$ and $B$, respectively, then the product CA $F\times G$ over the alphabet $A\times B$ is also topologically transitive. We also use the following fact: if a CA $F$ over the alphabet $A$ is topologically transitive, then for every $k>0$ and for every pair of integers $i,j,$  there exists $n_k$ such that, for every $X,Y\in A^k$  the condition  $F^{n_k}(C([i,i+k-1],X))\cap C([j,j+k-1],Y)\neq \emptyset$ holds. 
Hence, by Moothathu's result, if $F$ and $G$ are two topologically transitive CA over the alphabets $A$ and $B,$ for every $k>0$ and for every pair of integers $i,j,$ there exists $n_k$ such that, for every $X,Y\in A^k$ and $Z,W\in B^k,$ both the conditions $F^{n_k}(C([i,i+k-1],X))\cap C([j,j+k-1],Y)\neq \emptyset$ and $G^{n_k}(C([i,i+k-1],Z))\cap C([j,j+k-1],W)\neq \emptyset$ hold. 

Now consider the case $F=\overline{\glorule}$ and $G=\widetilde{\glorule}$. Assume that both these GCA are topologically transitive. We are going to show that  $\glorule$ is also topologically transitive.  

Let $i,j,k$ be any three integers, with $k>0$.  Consider any two words $u=g_1g_{2}\cdots g_{k}, u'=g'_1\cdots g'_{k}\in G^k$ and let $[u]=[g_1]\cdots[g_{k}]$ and $[u']=[g'_1]\cdots [g'_{k}]$ be the corresponding words in $(G/H)^k$.
Let $n_k$ be the positive integer from Moothathu's result above mentioned.
Thus, it holds that $\widetilde{\glorule}^{n_k}(C([i,i+k-1],[u]))\cap C([j,j+k-1],[u'])\neq \emptyset$, where both $C([i,i+k-1],[u])$ and $C([j,j+k-1],[u'])$
are cylinders in $(G/H)^{\mathbb Z}.$
This means that there exists a configuration $c$ in $(G/H)^{\mathbb Z}$ with  $c\in C([i,i+k-1],[u])$ such that $\widetilde{\glorule}^{n_k}(c)=c'$ where $c'\in C([j,j+k-1],[u']).$

 Equivalently, if  $g$ and $g'$ are any two configurations in $G^{\mathbb Z}$ such that $g\in C([i,i+k-1],u)$, $g'\in C([j,j+k-1],u')$, $[g]=c$, and $[g']=c'$, we get $\glorule^{n_k}(g)=g'h'$ for some $h'\in H^{\mathbb Z}.$ Denote by $h'_i$ the $i$-th element of $h'.$
Now consider the words $e^k$ and ${h'_1}^{-1}\cdots {h'_k}^{-1}$ in $H^k.$
Again by Moothathu's result, $\overline{\glorule}^{n_k}(C([i,i+k-1],e^k))\cap C([j,j+k-1],h'_1\cdots h'_k)\neq \emptyset$, where the two are cylinders appearing in this intersection condition are subsets of $H^{\mathbb Z}$. 
This means that there exists a configuration $h\in C([i,i+k-1],e^k)$ such that $\overline{\glorule}^{n_k}(h)=h''$ for some $h''\in C([j,j+k-1],{h'_1}^{-1}\cdots {h'_k}^{-1}).$
Now consider $\glorule^{n_k}(gh)$. It holds that
$\glorule^{n_k}(gh)=\glorule^{n_k}(g)\glorule^{n_k}(h)=g'h'h''.$  
This shows that $$\glorule^{n_k}(C([i,i+k-1],u))\cap C([j,j+k-1],u')\neq \emptyset\enspace,$$ and therefore $\glorule$ is topologically transitive. 
\end{proof}
We leave the following question open, as we have not been able to answer it.
\begin{question}\label{conjecture_trans}
Let $G$ be a finite group and $H\trianglelefteq G$. Let $\glorule$ be a topologically transitive GCA on $G$ such that $\glorule(H^{\mathbb Z})\subseteq H^{\mathbb Z}$. 
Is $\overline{\glorule}$ topologically transitive ?
\end{question}

A positive answer to Question \ref{conjecture_trans}  would allow us to conclude that 
$\glorule$
 is topologically transitive if and only if 
$\overline{\glorule}$
  and 
$\widetilde{\glorule}$
  are topologically transitive.



\bigskip
In what follows, we consider two widely studied dynamical properties that are more restrictive than topological transitivity—namely, strong transitivity and positive expansivity—and prove that no GCA on a finite non-abelian group satisfies these properties.

\begin{theorem}\label{str_trans_su_nonabel_nonesiste}
Let $\glorule$ be a GCA on a finite non-abelian group $G$. It holds that  $\glorule$ is neither strongly transitive nor positively expansive. 
\end{theorem}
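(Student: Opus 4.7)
The plan is to use the implication, recalled earlier in the paper, that any positively expansive CA is strongly transitive, so it suffices to rule out the existence of a strongly transitive GCA $\glorule$ on a finite non-abelian group $G$. I would argue by contradiction: assume such a $\glorule$ exists.

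The key step is a reformulation of strong transitivity in terms of preimages of points. The condition $\bigcup_{t\geq 0}\glorule^t(U) = G^{\ZZ}$ for every nonempty open $U$ is equivalent, via the elementary observation that $c\in\glorule^t(U)$ iff $U\cap(\glorule^t)^{-1}(c)\neq\emptyset$, to requiring that the set $\bigcup_{t\geq 0}(\glorule^t)^{-1}(c)$ be dense in $G^{\ZZ}$ for every $c\in G^{\ZZ}$. Specializing to the choice $c=e^{\ZZ}$, this forces $\bigcup_{t\geq 1}\Ker(\glorule^t)$ to be dense in $G^{\ZZ}$. Now, since strong transitivity implies topological transitivity and hence surjectivity of $\glorule$, each iterate $\glorule^t$ is again a surjective GCA, so Lemma~\ref{ker_sta__nel_centro_se_sur} applied to $\glorule^t$ gives $\Ker(\glorule^t)\subseteq \Z_G^{\ZZ}$ for every $t\geq 1$, and therefore $\bigcup_{t\geq 1}\Ker(\glorule^t)\subseteq \Z_G^{\ZZ}$.

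To close the argument, observe that the non-abelianness of $G$ forces $\Z_G\subsetneq G$, so $\Z_G^{\ZZ}$ is a proper closed subset of $G^{\ZZ}$ (closed because $G$ carries the discrete topology, proper because any configuration with at least one coordinate outside $\Z_G$ fails to belong to it). A subset of a proper closed set cannot be dense, contradicting the density established in the previous step. I expect no genuine structural obstacle: the whole proof collapses to the preimage reformulation of strong transitivity combined with the centralizing property of kernels of surjective GCA already isolated in Lemma~\ref{ker_sta__nel_centro_se_sur}. The only small verification needed is that each $\glorule^t$ is itself a surjective GCA, so that the lemma applies uniformly in $t$; this is immediate because compositions of surjective, continuous, shift-commuting endomorphisms of $G^{\ZZ}$ retain all three properties.
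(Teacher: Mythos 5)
Your proof is correct and follows essentially the same route as the paper: both rest on Lemma~\ref{ker_sta__nel_centro_se_sur} applied to every surjective iterate $\glorule^t$ to conclude $\Ker(\glorule^t)\subseteq \Z_G^{\ZZ}$, and both derive the contradiction from the fact that $\Z_G^{\ZZ}$ is a proper subset of $G^{\ZZ}$ when $G$ is non-abelian. Your preimage-density reformulation of strong transitivity is just the dual phrasing of the paper's choice of a cylinder $U$ of configurations avoiding $\Z_G$ at coordinate $0$ with $e^{\ZZ}\notin\bigcup_t\glorule^t(U)$, so the two arguments coincide.
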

\begin{proof}
Assume that there exists a strongly transitive GCA $\glorule$ on a finite non-abelian group $G$. 
Since $\glorule$ is strongly transitive, $\glorule^m$ is surjective for every $m\in \mathbb N.$ By Lemma \ref{ker_sta__nel_centro_se_sur}, we have that $\Ker(\glorule^m)\subseteq \Z_G^{\mathbb Z}$ for every $m\in \mathbb N.$

Now consider a cylinder $U$ containing only configurations composed by elements of $G^{\mathbb Z}\setminus Z_G^{\mathbb Z}$ as, for example, the cylinder containing all the configurations $c$ such that $c_0$ is a fixed element which does not belong to $\Z_G.$ Since  $\Ker(\glorule^m)\subseteq \Z_G^{\mathbb Z},$ we get $e^{\mathbb Z}\not\in \bigcup_{m\in \mathbb N} \glorule^m(U).$ 
This contradicts the assumption that $\glorule$ is strongly transitive.
The second assertion follows from the fact that any positively expansive CA is strongly transitive. 
\end{proof}

The previous theorem does not come as a surprise. In fact, similar results hold in other contexts. 
As an example, it is known that if a compact connected topological group admits a positively expansive endomorphism, it must be abelian \cite{Lam}.

We now proceed to consider another well-studied dynamical property that is less restrictive than topological transitivity, namely sensitivity to initial conditions.

First of all, we prove a technical result about the structure of the local rule of the power of a given GCA. It will be useful in the sequel. 





\begin{lemma}\label{local_rule_of_powers}
Let $\glorule$ be a GCA over a group $G.$ For every $n\geq 1,$ let $\rho_n$ be the radius of the GCA $\glorule^n$ and let $f^{(n)}=(h^{(n)}_{-\rho_n},\ldots, h^{(n)}_{\rho_n})$ be its local rule.
 It holds that $\rho_n\leq n\rho_1$ and 
 $$h^{(n)}_j(x)=\prod_{\substack{(i_1,i_2,\ldots,i_n)\in \{-\rho_1,\ldots, \rho_1\}^n:\\ i_1 + i_2 + \cdots + i_n = j}}h^{(1)}_{i_n}(h^{(1)}_{i_{n-1}}(\cdots h^{(1)}_{i_{1}}(x)\cdots))$$ for every $x\in G$ and for every $-\rho_n \leq j\leq \rho_n.$ 
\end{lemma}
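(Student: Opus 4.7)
The plan is to prove both assertions—the radius bound $\rho_n \leq n\rho_1$ and the explicit formula for $h^{(n)}_j$—simultaneously by induction on $n$. The base case $n=1$ is immediate: the only tuple in $\{-\rho_1,\ldots,\rho_1\}^1$ summing to $j$ is $(j)$ itself, so the right-hand side collapses to $h^{(1)}_j(x)$, and the radius bound is trivial.

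For the inductive step, I would exploit $\glorule^{n+1}=\glorule\circ \glorule^n$ and evaluate $h^{(n+1)}_j$ on the sparse configuration $c^{(x,j)}$ taking value $x$ at position $j$ and $e$ elsewhere, so that by definition $h^{(n+1)}_j(x)=(\glorule^{n+1}(c^{(x,j)}))_0$. Applying the inductive formula to $\glorule^n$ gives $(\glorule^n(c^{(x,j)}))_k=h^{(n)}_{j-k}(x)$, since every other factor appearing in the expansion of that coordinate is trivial. Substituting into the local rule of $\glorule$ then yields
\begin{equation*}
h^{(n+1)}_j(x) \;=\; \prod_{k=-\rho_1}^{\rho_1} h^{(1)}_k\bigl(h^{(n)}_{j-k}(x)\bigr).
\end{equation*}

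The next step is to expand $h^{(n)}_{j-k}(x)$ by the inductive hypothesis and to distribute $h^{(1)}_k$ across the nested product, which is legitimate because $h^{(1)}_k$ is a group homomorphism. Setting $i_{n+1}:=k$, every tuple $(i_1,\ldots,i_n)$ summing to $j-k$ extends uniquely to a tuple $(i_1,\ldots,i_{n+1})\in\{-\rho_1,\ldots,\rho_1\}^{n+1}$ summing to $j$; letting $k$ range over $\{-\rho_1,\ldots,\rho_1\}$ produces precisely the full set of such tuples exactly once. This delivers the claimed formula at level $n+1$. The bound $\rho_{n+1}\leq \rho_n+\rho_1\leq (n+1)\rho_1$ follows at once, since for $|j|>(n+1)\rho_1$ no admissible tuple exists, the product is empty, and $h^{(n+1)}_j$ must therefore be the trivial endomorphism.

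The one delicate point, given that $G$ may be non-abelian, is the ambiguity of a product indexed by a set of tuples. I would resolve it by fixing a canonical ordering: the outer index $i_{n+1}$ increases from $-\rho_1$ to $\rho_1$, and within each fixed outer index the order is inherited from the inductive hypothesis. The characterization of local rules of GCA recalled earlier guarantees $\Imma(h^{(1)}_k)\subseteq C_G(\Imma(h^{(1)}_{k'}))$ for $k\neq k'$, so terms with distinct outer indices commute in $G$; hence the displayed formula is independent of the chosen ordering, and the induction closes cleanly. This commutation bookkeeping is where I expect most of the care to be required—everything else is a routine unfolding of the local rule.
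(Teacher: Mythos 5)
Your proof is correct and is essentially the argument the paper has in mind: the paper simply asserts that the formula ``immediately follows by the definition of GCA,'' and your induction via $\glorule^{n+1}=\glorule\circ\glorule^{n}$, evaluated on sparse configurations, is the routine unfolding being alluded to. Your handling of the ordering issue is also sound (and in fact one can push the commutation argument one step further: any two composite terms indexed by \emph{distinct} tuples commute, by induction on the outermost index at which the tuples agree, so the product is fully order-independent).
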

\begin{proof}
The proof immediately follows  by the definition of GCA. Notice that the radius of $\glorule^n$ can be smaller than $n\rho_1,$ where $\rho_1$ is the radius of $\glorule,$ since the endomorphisms $h^{(n)}_{-n\rho_1}$ and $h^{(n)}_{n\rho_1}$ could be trivial. 
\end{proof}
The following lemma (proved also in \cite{BeaurK24} in a different setting and using a different technique) guarantees that a GCA is either equicontinuous or sensitive to initial conditions. 
\begin{lemma}\label{equic_or_sens}
Let $\glorule$ a GCA over a finite group $G$. Then, $\glorule$ is either equicontinuous or sensitive to initial conditions.
\end{lemma}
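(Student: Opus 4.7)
The plan is to combine the classical K\r{u}rka dichotomy for CA with the group structure of a GCA. Recall that for any CA on a finite alphabet, failure of sensitivity to initial conditions forces the existence of at least one \emph{equicontinuity point}; I will then use the endomorphism property to \emph{propagate} equicontinuity from one configuration to every configuration, yielding equicontinuity of $\glorule$.

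First, I would invoke the well-known result that for every CA on a finite alphabet, either the CA is sensitive to initial conditions, or there exists an equicontinuity point $c_0\in\gzd$: for every $\epsilon>0$ there exists $\delta>0$ such that $d(c_0,x)<\delta$ implies $d(\glorule^t(c_0),\glorule^t(x))<\epsilon$ for every $t\in\N$. Assume that $\glorule$ is not sensitive and fix such a $c_0$. The goal is to prove that \emph{every} $c\in\gzd$ is an equicontinuity point with the same constants.

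Next, I would record two basic facts provided by the group structure. For any fixed $d\in\gzd$, the pointwise left-translation map $L_d:\gzd\to\gzd$ defined by $L_d(x)=dx$ is an isometry of the Tychonoff metric, because left cancellation in $G$ gives $(dx)_i=(dy)_i$ if and only if $x_i=y_i$, so $L_d$ preserves the set of coordinates on which two configurations disagree. Moreover, since $\glorule$ is an endomorphism of the group $\gzd$, so is each iterate $\glorule^t$; in particular, $\glorule^t(dy)=\glorule^t(d)\,\glorule^t(y)$ for all $t\in\N$ and all $y\in\gzd$.

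Then, to transfer equicontinuity from $c_0$ to an arbitrary $c\in\gzd$, I set $d:=c\,c_0^{-1}\in\gzd$. For any $y\in\gzd$ with $d(c,y)<\delta$, the configuration $x:=d^{-1}y$ satisfies
\[
d(c_0,x)\;=\;d\bigl(L_d(c_0),L_d(x)\bigr)\;=\;d(c,y)\;<\;\delta,
\]
so the equicontinuity of $c_0$ yields $d(\glorule^t(c_0),\glorule^t(x))<\epsilon$ for every $t$. Applying the isometry $L_{\glorule^t(d)}$ and using $\glorule^t(du)=\glorule^t(d)\glorule^t(u)$ with $u=c_0$ and $u=x$, this inequality rewrites as $d(\glorule^t(c),\glorule^t(y))<\epsilon$ for every $t$. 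Thus the \emph{same} $\delta=\delta(\epsilon)$ witnesses equicontinuity of $\glorule$ at every $c\in\gzd$, which is exactly (uniform) equicontinuity of $\glorule$.

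The main obstacle I anticipate is handling the non-abelian case cleanly: one must keep left and right multiplications straight. The correct pairing is $d=c\,c_0^{-1}$ together with left translation $L_d$, since the argument then depends only on left cancellation and on the homomorphism identity $\glorule^t(du)=\glorule^t(d)\glorule^t(u)$, both of which hold in any group. This is what allows a single equicontinuity point to upgrade to full equicontinuity and thereby closes the dichotomy.
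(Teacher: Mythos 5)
Your argument is correct, but it is a genuinely different route from the one in the paper: there the lemma is disposed of in one line by citing a general theorem stating that a continuous endomorphism of a completely metrizable group is either equicontinuous or sensitive (Jiang et al., Thm.~3.11 of the reference \cite{JIANG2025129033}), whereas you give a self-contained proof specific to CA. Your two ingredients are sound: K\r{u}rka's dichotomy (a one-dimensional CA that is not sensitive has an equicontinuity point) is a standard consequence of shift-commutation, and the translation step is clean --- $L_d$ is an isometry by left cancellation, $\glorule^t$ is an endomorphism of $\gzd$, and the choice $d=c\,c_0^{-1}$ with \emph{left} translation makes everything work in the non-abelian case, so a single equicontinuity point upgrades to uniform equicontinuity with the same $\delta(\epsilon)$. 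What each approach buys: the paper's citation is shorter and applies to arbitrary continuous endomorphisms of Polish groups (not only shift-commuting maps), while your argument stays inside CA theory and makes the mechanism transparent. One remark: you do not actually need K\r{u}rka's theorem. The negation of sensitivity already hands you, for each $\epsilon>0$, some point $c_\epsilon$ and some $\delta_\epsilon>0$ such that every $y$ with $d(c_\epsilon,y)<\delta_\epsilon$ satisfies $d(\glorule^t(c_\epsilon),\glorule^t(y))<\epsilon$ for all $t$; your translation argument then shows the same $\delta_\epsilon$ works at every point, and since this holds for every $\epsilon$ you get equicontinuity directly. That variant is purely group-theoretic and matches the spirit of Lemma~\ref{lemma_sensitive_equiv} in the paper.
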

\begin{proof}
The assertion is true for continuous endomorphisms of completely
metrizable groups \cite[Thm. 3.11]{JIANG2025129033}.
\end{proof}
The following result establishes a condition equivalent to sensitivity to initial conditions over metric groups. 

\begin{lemma}\label{lemma_sensitive_equiv}
Let $\mathcal G$ be a metric group and $\phi$ an endomorphism of $\mathcal G$. Then, $\phi$ is sensitive to initial conditions if and only if there exists $\epsilon>0$ such that, for any $\delta>0$ there is an element $g\in \mathcal G$ and an integer $t\geq 0$ such that $0<d(g,e)<\delta$ and $d(\phi^t(g),e)\geq\epsilon$.
\end{lemma}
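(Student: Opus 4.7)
The proof is a translation argument exploiting two ingredients: the fact that $\phi$ is a group homomorphism (so $\phi^t$ commutes with the group operation), and the (left-)invariance of the metric on the group $\mathcal{G}$. In the intended application, $\mathcal{G}$ will be $G^{\mathbb{Z}}$ with the Tychonoff metric, which is bi-invariant, so invariance is available. I will prove each implication separately, with $\epsilon$ matching between the two sides.

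For the \emph{only if} direction, I would simply specialize the definition of sensitivity to the basepoint $c = e$. Given $\epsilon>0$ from sensitivity, for any $\delta>0$ there exists $c'\in\mathcal{G}$ with $0<d(e,c')<\delta$ and some $t\geq 0$ with $d(\phi^t(e),\phi^t(c'))\geq\epsilon$. Since $\phi^t(e)=e$, the element $g:=c'$ is exactly what the right-hand condition asks for.

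For the \emph{if} direction, suppose the right-hand condition holds with constant $\epsilon>0$; I would prove that $\phi$ is sensitive with the same $\epsilon$. Fix an arbitrary $c\in\mathcal{G}$ and $\delta>0$, and pick $g\in\mathcal{G}$ and $t\geq 0$ with $0<d(g,e)<\delta$ and $d(\phi^t(g),e)\geq\epsilon$ using the hypothesis. Set $c':=cg$. By left-invariance of $d$,
\[
d(c,c') = d(c,cg) = d(e,g) \in (0,\delta),
\]
and since $\phi^t$ is a homomorphism we have $\phi^t(c')=\phi^t(c)\phi^t(g)$, whence, again by left-invariance,
\[
d(\phi^t(c),\phi^t(c')) = d\bigl(\phi^t(c),\phi^t(c)\phi^t(g)\bigr) = d(e,\phi^t(g)) \geq \epsilon.
\]
This verifies the sensitivity condition at the arbitrary point $c$.

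The argument is essentially bookkeeping; there is no substantive obstacle. The one point to be careful about is that invariance of the metric is used twice in the backward direction (once to transfer $d(c,cg)$ to $d(e,g)$, and once to transfer $d(\phi^t(c),\phi^t(c)\phi^t(g))$ to $d(e,\phi^t(g))$), so the hypothesis that $\mathcal{G}$ is a \emph{metric group} must be read as including the invariance of the metric — an assumption that is automatic in the cellular automaton setting where $\mathcal{G}=G^{\mathbb{Z}}$ carries the bi-invariant Tychonoff metric.
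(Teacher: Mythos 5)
Your proof is correct and is essentially the paper's argument: both directions proceed by translating the perturbation at the identity to an arbitrary basepoint using the homomorphism property of $\phi^t$ together with the invariance of the metric (you multiply on the right, the paper on the left — an immaterial difference). Your explicit remark that ``metric group'' must be read as carrying an invariant metric is a fair point; the paper uses the same invariance silently, and it does hold for the Tychonoff metric on $G^{\mathbb{Z}}$, which is the only case needed.
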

\begin{proof}
If $\phi$ is sensitive  to initial conditions the asserted condition is trivially satisfied. Suppose now that the condition holds. We want to prove that $\phi$ is sensitive, i.e., that there exists $\epsilon>0$ such that, for any $\delta>0$ and any configuration $g'\in \mathcal G$, there is a configuration $g''\in \mathcal G$ and an integer $t\geq 0$ such that $0<d(g'',g')<\delta$ and $d(\phi^t(g''),\phi^t(g'))\geq\epsilon$.

Fix $\epsilon>0.$ Since the condition holds, for every $\delta>0$ there exist $g\in \mathcal{G}$ and an integer $t\geq 0$ with $0<d(g,e)<\delta$ and $d(\phi^t(g),e)\geq \epsilon.$

Given any element $g'$ we get 
 $d(gg',g')=d(g,e)$ and $d(\phi^t(gg'),\phi^t(g'))=d(\phi^t(g)\phi^t(g'),\phi^t(g'))$$=d(\phi^t(g),\phi^t(e))$. Thus we can take $g''=gg'.$
\end{proof}

We now prove that sensitivity to initial conditions for arbitrary GCA $\glorule$ can be characterized in terms of the radius of the iterates $\glorule^n$ of $\glorule$. This generalizes the case of GCA over $\mathbb Z_m$ \cite[Lemma 4.1]{ManziniM99}.

\begin{lemma}\label{radius_sensitive}
 Let $\glorule$ be a GCA on the group $G$. It holds that $\glorule$ is sensitive  to initial conditions if and only if $\limsup_{n\to \infty}\rho(\glorule^n)=\infty.$
 Equivalently, $\glorule$ is equicontinuous if and only if $\limsup_{n\to \infty}\rho(\glorule^n)<\infty.$  
 \end{lemma}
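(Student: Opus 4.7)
The plan is to use the dichotomy of Lemma \ref{equic_or_sens} so that the two asserted equivalences collapse to one: it suffices to show that equicontinuity is equivalent to $\limsup_{n\to\infty}\rho(\glorule^n)<\infty$. I will prove the two implications separately, exploiting that any iterate $\glorule^n$ of a GCA is itself a GCA whose local rule has the product structure described in Lemma \ref{local_rule_of_powers}, so that ``radius'' makes literal sense.

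For the easy direction, assume $\limsup_{n\to\infty}\rho(\glorule^n)=R_{\infty}<\infty$. Since only finitely many indices $n$ can produce radii exceeding $R_{\infty}$, and each individual radius is finite, there is a uniform bound $R\in\mathbb{N}$ with $\rho(\glorule^n)\leq R$ for every $n$. This means that, for every $n\geq 0$ and every $i\in\ZZ$, the value $\glorule^n(c)_i$ depends only on $c_{[i-R,i+R]}$. Given $\epsilon>0$, pick $k$ with $2^{-k}<\epsilon$ and set $\delta=2^{-(k+R)}$; if $d(c,c')<\delta$ then $c$ and $c'$ agree on $[-(k+R),k+R]$, so their $n$-th images agree on $[-k,k]$ for every $n$, yielding $d(\glorule^n(c),\glorule^n(c'))\leq 2^{-k}<\epsilon$. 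This gives equicontinuity.

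For the converse, assume $\limsup_{n\to\infty}\rho(\glorule^n)=\infty$ and use Lemma \ref{lemma_sensitive_equiv}, which lets me work only with perturbations of $e^{\ZZ}$, with $\epsilon=1$. Given $\delta>0$, choose $n$ so large that $\rho_n:=\rho(\glorule^n)$ satisfies $2^{-\rho_n}<\delta$. Writing $f^{(n)}=(h^{(n)}_{-\rho_n},\ldots,h^{(n)}_{\rho_n})$, by the very definition of radius at least one of $h^{(n)}_{-\rho_n}$, $h^{(n)}_{\rho_n}$ is non-trivial; call it $h^{(n)}_j$, so $|j|=\rho_n$, and pick $g\in G$ with $h^{(n)}_j(g)\neq e$. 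Let $c\in G^{\ZZ}$ be the finite configuration with $c_j=g$ and $c_i=e$ for $i\neq j$. Then $d(c,e^{\ZZ})=2^{-\rho_n}<\delta$, while, using $h^{(n)}_k(e)=e$ for all $k$, the product formula for $f^{(n)}$ gives $\glorule^n(c)_0=h^{(n)}_j(g)\neq e$, hence $d(\glorule^n(c),e^{\ZZ})\geq 1$. By Lemma \ref{lemma_sensitive_equiv}, $\glorule$ is sensitive to initial conditions.

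The only subtle point is the first paragraph, namely extracting a uniform bound $R$ from the finiteness of $\limsup \rho(\glorule^n)$ (the limsup being finite only bounds the tail, but the remaining finitely many indices contribute finite radii that can simply be absorbed). The sensitivity direction is entirely constructive and poses no real obstacle, as long as one is careful that in the convention $\glorule(c)_i=\prod_k h_k(c_{i+k})$ a single non-$e$ entry placed at position $j$ of $c$ propagates to position $0$ after applying the rule, matching the endomorphism $h_j$.
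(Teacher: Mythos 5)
Your proof is correct and follows essentially the same route as the paper's: a uniform radius bound yields equicontinuity in one direction, and in the other direction an unbounded radius lets you place a single non-kernel element at the extremal position $\pm\rho_n$ and propagate it to the origin, invoking Lemma \ref{lemma_sensitive_equiv} with $\epsilon=1$. The only cosmetic difference is that you prove equicontinuity directly in the finite-limsup case, whereas the paper concludes only non-sensitivity there and lets the dichotomy of Lemma \ref{equic_or_sens} do the rest; both are fine.
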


 \begin{proof}
 If $\limsup_{n\to \infty}\rho(\glorule^n)$ is finite then there exists $k>0$ such that $\rho(\glorule^n)<k$ for every $n$. Then, for every $\epsilon>0$, if $g\in G^{\mathbb Z}$ is any configuration sufficiently close to $e^{\mathbb Z}$, it holds that $d(\glorule^n(g),e^{\mathbb Z})<\epsilon$ and, hence, by Lemma~\ref{lemma_sensitive_equiv},   $\glorule$ is not sensitive  to initial conditions. 

Suppose now that $\limsup_{n\to \infty}\rho(\glorule^n)=\infty$,
i.e., for every $k$ we can find $n$ such that $\rho(\glorule^n)>k$. 
Set $\rho(\glorule^n)=\rho_n$ and let $f^{(n)}=(h_{-\rho_n}^{(n)},\ldots, h_{\rho_n}^{(n)})$ be the local rule of $\glorule^n.$
Without loss of generality we can assume that $h^{(n)}_{-\rho_n}$ is a non-trivial endomorphism. 
Consider a configuration $g\in G^{\mathbb Z}$ such that $g_{-\rho_n}\notin \Ker(h^{(n)}_{-\rho_n})$ and $g_i=e$ for every $i\neq -\rho_n$.
Clearly $d(g,e^{\mathbb Z})=\frac{1}{2^{\rho_n}}$
and $\glorule^n(g)(0)=h^{(n)}_{-\rho_n}(g_{-\rho_n})\neq e$. Hence, $d(\glorule^n(g),e^{\mathbb Z})=1$. By Lemma~\ref{lemma_sensitive_equiv}, this shows that $\glorule$ is sensitive  to initial conditions with $\epsilon=1$. 
\end{proof}
\begin{remark}\label{observ_equic}
 A trivial consequence of Lemma \ref{radius_sensitive} is the following. With the same notations used in Lemma \ref{local_rule_of_powers}, we can state that a GCA $\glorule$ is equicontinuous if and only if there exists a positive constant $K$ such that the endomorphisms $h^{(n)}_j$ are trivial for every $n\geq 1$ and for every $j$ such that 
 $|j|>K.$
\end{remark}

\begin{theorem}\label{sensitivity}
Let $G$ be a finite group and $H\trianglelefteq G$. Let $\glorule$ be a GCA on $G$ such that $\glorule(H^{\mathbb Z})\subseteq H^{\mathbb Z}$. It holds that $\glorule$ is equicontinuous if and only if both $\widetilde{\glorule}$ and $\overline{\glorule}$ are equicontinuous. Equivalently,   $\glorule$ is sensitive to initial conditions  if and only if either $\widetilde{\glorule}$ or $\overline{\glorule}$ is sensitive to initial conditions.   
\end{theorem}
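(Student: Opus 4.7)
The two statements are equivalent by Lemma \ref{equic_or_sens}, so it suffices to prove the equicontinuity version. By Lemma \ref{radius_sensitive}, equicontinuity of $\glorule$ (respectively $\overline{\glorule}$, $\widetilde{\glorule}$) is characterised by a uniform bound on the radii of its iterates. The easy direction is then immediate: writing $f^{(n)}=(h^{(n)}_{-\rho_n},\ldots,h^{(n)}_{\rho_n})$ for the local rule of $\glorule^n$, the observation made right after Definition \ref{tilde_barra_def} shows that the local rules of $\overline{\glorule}^n$ and $\widetilde{\glorule}^n$ are $(h^{(n)}_j|_H)_j$ and $(\widetilde{h^{(n)}_j})_j$; hence $\rho(\overline{\glorule}^n),\rho(\widetilde{\glorule}^n)\leq \rho(\glorule^n)$, and any uniform bound on the left-hand side transfers.

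For the converse, assume $\rho(\overline{\glorule}^n)\leq K_1$ and $\rho(\widetilde{\glorule}^n)\leq K_2$ for every $n$ and set $K=\max(K_1,K_2)$. The plan is to prove $\rho(\glorule^n)\leq K_1+K_2+\rho(\glorule)$ for every $n$, which by Lemma \ref{radius_sensitive} yields equicontinuity of $\glorule$. Whenever $|j|>K$ the two hypotheses force $h^{(n)}_j|_H=0$ and $h^{(n)}_j(G)\subseteq H$, so $h^{(n)}_j$ factors through a homomorphism $\beta^{(n)}_j\colon G/H\to H$. I then expand $h^{(n)}_j$ via Lemma \ref{local_rule_of_powers} as a commuting product of compositions $h^{(1)}_{i_n}\circ\cdots\circ h^{(1)}_{i_1}$ over tuples with $\sum_s i_s=j$, calling an index $i_s$ \emph{bad} when $|i_s|>K$. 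If a tuple has two bad indices at positions $s<t$, then after $h^{(1)}_{i_s}$ the running value already lies in $H$ (since $\widetilde{h^{(1)}_{i_s}}=0$), the intermediate good endomorphisms preserve $H$, and $h^{(1)}_{i_t}$ subsequently annihilates $H$ (since $h^{(1)}_{i_t}|_H=0$); hence such a composition is trivial, and only tuples with at most one bad index contribute.

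Fixing a set-theoretic section $r\colon G/H\to G$, each $h^{(1)}_k$ decomposes into the diagonal data $h^{(1)}_k|_H$ and $\widetilde{h^{(1)}_k}$ together with a cocycle $\delta_k\colon G/H\to H$ (for bad $k$ the map $\delta_k$ coincides with $\beta^{(1)}_k$ and is an honest homomorphism). Splitting each surviving composition according to the position $t$ carrying the (possibly absent) bad or off-diagonal factor, and reorganising the inner sums into $\widetilde{h^{(t-1)}_{j_{\mathrm{pre}}}}$ on the right and $h^{(n-t)}_{j_{\mathrm{post}}}|_H$ on the left via the image-commuting condition of the local rule, yields an expression of the form
\[
\beta^{(n)}_j \;=\; \sum_{t=1}^{n}\sum_{k}\sum_{j_{\mathrm{pre}}+k+j_{\mathrm{post}}=j} h^{(n-t)}_{j_{\mathrm{post}}}|_H \,\circ\, \delta_k \,\circ\, \widetilde{h^{(t-1)}_{j_{\mathrm{pre}}}}.
\]
By the equicontinuity of $\overline{\glorule}$ and $\widetilde{\glorule}$, a non-zero summand requires $|j_{\mathrm{pre}}|\leq K_2$, $|j_{\mathrm{post}}|\leq K_1$ and $|k|\leq\rho(\glorule)$; hence $|j|\leq K_1+K_2+\rho(\glorule)$. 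Combined with the trivial bound $|j|\leq K$ for the diagonal contributions, this gives the desired uniform bound on $\rho(\glorule^n)$. The main technical hurdle will be the telescoping step when $H$ is non-abelian, where the cocycle $\delta_k$ is not itself a homomorphism; I will handle it by exploiting the image-commuting condition to keep all summands in pairwise commuting subgroups (so the product is well defined and order-independent) and, if necessary, by first reducing modulo the commutator $[H,H]$—which is preserved by $\glorule$ since it is fully invariant in $H$ (Lemma \ref{Lemma_commutator_fullyinv})—and iterating the argument along the derived series of $H$.
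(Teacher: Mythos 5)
Your overall strategy for the converse direction---reduce everything to a uniform bound on $\rho(\glorule^n)$ via Lemma \ref{radius_sensitive} and the expansion of Lemma \ref{local_rule_of_powers}---is exactly the paper's, and your easy direction is correct. The gap is in the central computation: the displayed telescoping formula for $\beta^{(n)}_j$ is never established, and it does not hold as stated in general. First, the middle factor $\delta_k$ ranges over all $k$ with $|k|\leq\rho(\glorule)$, and for ``good'' $k$ (those with $|k|\leq K$, which are the ones that actually occur) $\delta_k$ is only a set-theoretic cocycle, not a homomorphism; the step where you ``reorganise the inner sums into $\widetilde{h^{(t-1)}_{j_{\mathrm{pre}}}}$'' requires pulling a product through $\delta_k$, which you cannot do. Second, even when $H$ is abelian the $H$-components of a product $r(\bar a)h\cdot r(\bar b)h'$ combine as $\left(r(\bar b)^{-1}h\,r(\bar b)\right)h'$ rather than $hh'$, so the diagonal/off-diagonal bookkeeping is twisted by a conjugation action unless $H$ is central; the image-commuting condition on the $h_i$'s controls the interaction between different components, not the interaction between $H$ and the section $r(G/H)$. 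Third, your fallback of inducting along the derived series of $H$ terminates at the perfect core of $H$, which is nontrivial whenever $H$ is not solvable, so it cannot close the argument for general $G$ and $H$. Note also that the ``at most one bad index'' observation is vacuous whenever $K\geq\rho(\glorule)$ (no $i_s$ satisfies $|i_s|>K$), so it does no work in the typical case.

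The repair is to abandon the decomposition of the one-step endomorphisms and instead split each composition $h^{(1)}_{i_n}\circ\cdots\circ h^{(1)}_{i_1}$ at an intermediate \emph{partial sum}, which is what the paper does. With $\widetilde K$ and $\overline K$ the two radius bounds (Remark \ref{observ_equic}) and $K=\widetilde K+\overline K+2\rho_1$, any tuple with $i_1+\cdots+i_n=j$ and $|j|>K$ admits an index $s$ whose partial sum $\ell=i_1+\cdots+i_s$ satisfies $|\ell|>\widetilde K$ and $|j-\ell|>\overline K$, because partial sums move by at most $\rho_1$ per step. Grouping the product accordingly expresses $h^{(n)}_j$ through terms of the form $h^{(n-s)}_{j-\ell}\circ h^{(s)}_{\ell}$, and each such term is trivial: triviality of $\widetilde h^{(s)}_{\ell}$ forces $\mathrm{Im}\bigl(h^{(s)}_{\ell}\bigr)\subseteq H$, and triviality of $\overline h^{(n-s)}_{j-\ell}$ then annihilates that image. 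Every object in this argument is a component of an iterated local rule, to which the hypotheses apply directly, so no sections, cocycles, or solvability assumptions are needed.
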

\begin{proof}
Denote by $f^{(n)}=(h^{(n)}_{-\rho_n},\ldots, h^{(n)}_{\rho_n})$, $\widetilde f^{(n)}=(\widetilde h^{(n)}_{-\rho_n},\ldots, \widetilde h^{(n)}_{\rho_n})$ and $\overline f^{(n)}=(\overline h^{(n)}_{-\rho_n},\ldots, \overline h^{(n)}_{\rho_n})$ the local rules of $\glorule^n$, $\widetilde{\glorule
 }^n$ and   $\overline{\glorule}^n,$ respectively. We assume, as usual, that, for every $n\geq1,$ at least one between the endomorphisms $h^{(n)}_{-\rho_n}$ and $h^{(n)}_{\rho_n}$ is non-trivial. However, the radius of $\overline{\glorule}^n$ ($\widetilde{\glorule}^n,$ respectively) could be smaller than $\rho_n$ since $\overline{h}^{(n)}_{-\rho_n}$ and $\overline{h}^{(n)}_{\rho_n}$ ($\widetilde{h}^{(n)}_{-\rho_n}$ and $\widetilde{h}^{(n)}_{\rho_n},$ respectively) could be both trivial.

Hence,  we get that  $\rho(\widetilde{\glorule}^n)\leq \rho(\glorule^n)$ and $\rho(\overline{\glorule}^n)\leq \rho(\glorule^n)$ for every $n$.
Thus, by  Lemma \ref{radius_sensitive}, 
 if at least one between $\widetilde{\glorule}$ and $\overline{\glorule}$ is sensitive to initial conditions,   $\glorule$ is also sensitive to initial conditions. Equivalently, if $\glorule$ is equicontinuous,  $\widetilde{\glorule
 }$ and $\overline{\glorule}$ are also equicontinuous. 

We now prove the converse implication: if both $\overline{\glorule}$ and $\widetilde{\glorule}$ are equicontinuous, then $\glorule$ is, too.  So, assume that $\widetilde{\glorule
 }$ and $\overline{\glorule}$ are equicontinuous. By Remark~\ref{observ_equic}, there exist
 \begin{itemize}
     \item a constant $\widetilde{K}$ such that every $G/H$-endomorphism $\widetilde h^{(n)}_j$  is trivial for every $n\geq 1$ and every $j$ such that $| j|>\widetilde{K},$ and
     \item a constant $\overline{K}$ such that every $H$-endomorphism $\overline h^{(n)}_j$ is trivial for every $n\geq 1$ and every $j$ such that $|j|>\overline{K}.$
 \end{itemize}
  Set $K=\widetilde{K}+\overline{K}+2\rho_1$. 
  
We want to show that every endomorphism $h^{(n)}_j$ is trivial for every $n\geq 1$ and every $j$ such that $|j|>K.$
To this aim recall that, by Lemma \ref{local_rule_of_powers}, $$h^{(n)}_j(x)=\prod_{\substack{(i_1,i_2,\ldots,i_n)\in \{-\rho_1,\ldots, \rho_1\}^n \\ i_1 + i_2 + \cdots + i_n = j}}h^{(1)}_{i_n}(h^{(1)}_{i_{n-1}}(\cdots h^{(1)}_{i_{1}}(x)\cdots))$$ for every $x\in G$ and for every $-\rho_n\leq j\leq \rho_n.$ 

 Assume that $|j|>K.$
 Note that, given our definition of $K,$ if $(i_1,\dots ,i_n)\in \{-\rho_1,\ldots, \rho_1\}^n$ is such ${i_1}+\dots +{i_n}=j$, then the set 
 \begin{align*}
 A_{(i_1,\ldots ,i_n)}&:=\{(s,\ell)\in \ZZ_+\times\ZZ \,:\, \\ &\qquad {i_1}+\cdots +{i_s}={\ell},\,|{\ell}|>\widetilde{K}\,\mbox{ and }\, |j-\ell|>\overline{K} \}
 \end{align*}
 is non-empty and there exists an element $(s,\ell)$ in this set with minimal value of $s.$ Denote by $(\widehat{s},\widehat{\ell})_{i_1,\ldots,i_{n}}$ such an element.

Consider now the following set of pairs $(\widehat{s},\widehat{\ell})_{(i_1,\ldots,i_n)}$ when $(i_1,\ldots,i_n)$ varies among the $n$-tuples in $\{-\rho_1,\ldots, \rho_1\}$ such that ${i_1}+\cdots+{i_n}=j$:
\begin{align*}
\mathcal{S}_{j,n}&:=\{(\widehat{s},\widehat{\ell})_{(i_1,\cdots,i_n)}\,:\\ &\quad (i_1,\ldots,i_n)\in\{-\rho_1,\ldots, \rho_1\}^n\,\mbox{ s.t. }\, {i_1}+\cdots+{i_n}=j \}\enspace.
\end{align*}
Thus, we can write $h^{(n)}_j$ as
$$h^{(n)}_j(x)=\prod_{(\widehat{s},\widehat{\ell})\in\mathcal S_{j,n}}\prod_{\mathcal B_{\widehat s,\widehat{\ell}}}\prod_{\mathcal C_{\widehat s,\widehat{\ell}}}h^{(1)}_{i_n}(h^{(1)}_{i_{n-1}}(\cdots h^{(1)}_{i_{1}}(x)\cdots))\enspace,$$
where the second product is over the set 
 \begin{align*}
 \mathcal B_{\widehat s,\widehat{\ell}}&:=\{(i_1,i_2,\ldots, i_{\widehat{s}})\in {\{-\rho_1,\ldots, \rho_1\}}^{\widehat s} \,:\\
 &\qquad {i_1}+{i_2}+\cdots+{i_{\widehat s}}={\widehat{\ell}}\}
 \end{align*}
 and the third is over the set 
 \begin{align*}
 \mathcal C_{\widehat s,\widehat{\ell}}&:=\{(i_{\widehat{s}+1},\ldots, i_{n})\in \{-\rho_1,\ldots, \rho_1\}^{n-\widehat s} \,:\\
 &\qquad {i_{\widehat{s}+1}}+\cdots+{i_{n}}={j}-\widehat{\ell}\,\}\enspace.
\end{align*} 
The last expression for $h^{(n)}_j$ can be in turn rewritten as $$h^{(n)}_j(x)=\prod_{(\widehat s,\widehat{\ell})\in\mathcal S_{j,n}}h_m^{(n-\widehat s)}(h^{(\widehat{s})}_{\widehat{\ell}}(x))\enspace,$$ where $m=j-{\widehat{\ell}}.$

 Note that, by our previous assumptions, the endomorphisms $\widetilde h^{(\widehat s)}_{\widehat{\ell}}$ and $\overline{h}^{(n-\widehat s)}_m$ are trivial.  Hence, $h^{(\widehat s)}_{\widehat{\ell}}(x)\in H$ for every $x\in G$ and 
 $h_m^{(n-\widehat s)}(h^{(\widehat{s})}_{\widehat{\ell}}(x))=e$ for every $x \in G.$ Therefore, we get that $h^{(n)}_j$ is trivial and this concludes the proof.
 \end{proof}
%
%
We end this section by considering the topological entropy of a GCA.
\begin{theorem}\label{entropy}
Let $G$ be a finite group and $H\trianglelefteq G$. Let $\glorule$ be a GCA on $G$ such that $\glorule(H^{\mathbb Z})\subseteq H^{\mathbb Z}$.  Then,
$$\entropy{H}(\glorule)=\entropy{H}(\widetilde{\glorule})+\entropy{H}(\overline{\glorule}).$$
\end{theorem}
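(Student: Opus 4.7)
The plan is to prove the two inequalities separately, exploiting the short exact sequence of compact metrizable groups
\[
1 \to H^{\mathbb Z} \hookrightarrow G^{\mathbb Z} \xrightarrow{\pi} (G/H)^{\mathbb Z} \to 1,
\]
where $\pi(c) = [c]$ is the coordinate-wise projection. Since $\glorule$ restricts to $\overline{\glorule}$ on $H^{\mathbb Z}$ and descends to $\widetilde{\glorule}$ on the quotient, this realizes $((G/H)^{\mathbb Z}, \widetilde{\glorule})$ as a topological factor of $(G^{\mathbb Z}, \glorule)$ with fiber dynamics given by $(H^{\mathbb Z}, \overline{\glorule})$.

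For the lower bound $\entropy{H}(\glorule) \geq \entropy{H}(\widetilde{\glorule}) + \entropy{H}(\overline{\glorule})$, I would work directly with the rectangle-count definition. Fixing a spatial window of width $w$ and the time interval $[0, t-1]$, consider the space-time pattern map $P: G^{\mathbb Z} \to G^{wt}$ sending $c$ to $(\glorule^s(c)_{[i, i+w-1]})_{s=0}^{t-1}$. Since $\glorule$ is a homomorphism, so is $P$; analogously one obtains group homomorphisms $\overline{P}: H^{\mathbb Z} \to H^{wt}$ and $\widetilde{P}: (G/H)^{\mathbb Z} \to (G/H)^{wt}$. The coordinate-wise projection $G^{wt} \to (G/H)^{wt}$ induces a surjective group homomorphism $P(G^{\mathbb Z}) \twoheadrightarrow \widetilde{P}((G/H)^{\mathbb Z})$ whose kernel contains $\overline{P}(H^{\mathbb Z})$, since any pattern with entries in $H$ projects to the trivial pattern in $(G/H)^{wt}$. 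Applying Lagrange's theorem to these finite subgroups of $G^{wt}$ yields
\[
R_\glorule(w, t) \geq R_{\widetilde{\glorule}}(w, t) \cdot R_{\overline{\glorule}}(w, t),
\]
and then taking logarithms, dividing by $t$, and passing to the iterated limit gives the bound.

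For the upper bound I would invoke Bowen's inequality applied to the factor map $\pi$:
\[
\entropy{H}(\glorule) \leq \entropy{H}(\widetilde{\glorule}) + \sup_{y \in (G/H)^{\mathbb Z}} h(\glorule, \pi^{-1}(y)),
\]
where $h(\glorule, \cdot)$ denotes Bowen's conditional topological entropy along fibers. Each fiber is a coset $c_y \cdot H^{\mathbb Z}$, and because $\glorule$ is a group homomorphism with $\glorule(H^{\mathbb Z}) \subseteq H^{\mathbb Z}$, one has $\glorule^n(c_y h) = \glorule^n(c_y) \cdot \overline{\glorule}^n(h)$ for every $h \in H^{\mathbb Z}$. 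Since left multiplication in $G^{\mathbb Z}$ is an isometry in the Tychonoff metric, $(n, \epsilon)$-separated set counts on $\pi^{-1}(y)$ under $\glorule$ coincide with those on $H^{\mathbb Z}$ under $\overline{\glorule}$. Hence $h(\glorule, \pi^{-1}(y)) = \entropy{H}(\overline{\glorule})$ uniformly in $y$, and the upper bound follows.

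The main obstacle is the upper bound: one must carefully set up Bowen's conditional topological entropy and rigorously justify the fiber-entropy identification via the isometry of left translations. This step is the topological counterpart of Yuzvinskii's classical addition formula for measure-theoretic entropy of endomorphisms of compact metric groups, and it is what turns the inequality into an equality.
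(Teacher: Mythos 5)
Your argument is correct, but it takes a genuinely different route from the paper. The paper disposes of the statement in one line by invoking the Addition Theorem for the topological entropy of continuous endomorphisms of compact groups (Yuzvinskii's formula, in the form established in the cited works of Yuzvinskii and Giordano Bruno--Virili), applied to the endomorphism $\glorule$ of the compact group $G^{\mathbb Z}$ and the closed, normal, $\glorule$-invariant subgroup $H^{\mathbb Z}$. You instead re-derive that addition formula in this special case: the lower bound $\entropy{H}(\glorule)\geq \entropy{H}(\widetilde{\glorule})+\entropy{H}(\overline{\glorule})$ comes from a purely combinatorial count --- the space-time pattern map $P$ is a homomorphism into the finite group $G^{wt}$, the coordinate-wise projection restricts to a surjection $P(G^{\mathbb Z})\twoheadrightarrow \widetilde{P}((G/H)^{\mathbb Z})$ whose kernel contains $\overline{P}(H^{\mathbb Z})$, and Lagrange gives $R_{\glorule}(w,t)\geq R_{\widetilde{\glorule}}(w,t)\,R_{\overline{\glorule}}(w,t)$ --- while the upper bound uses Bowen's fiber inequality together with the observation that each fiber of $\pi$ is a left coset of $H^{\mathbb Z}$, left translation is an isometry for the Tychonoff metric, and hence every fiber has conditional entropy exactly $\entropy{H}(\overline{\glorule})$. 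Both halves check out (in particular the identity $d(\glorule^{i}(c_yh),\glorule^{i}(c_yh'))=d(\overline{\glorule}^{i}(h),\overline{\glorule}^{i}(h'))$ does make the $(n,\epsilon)$-separated counts on fibers coincide with those on $H^{\mathbb Z}$). What the paper's citation buys is generality and brevity; what your argument buys is a largely self-contained proof whose only external input is Bowen's inequality, and whose lower bound is an elementary counting argument specific to one-dimensional CA via the rectangle-count formula for $\entropy{H}$ --- a nice illustration of why the abstract addition theorem holds in this setting. The only points worth making explicit if you wrote this up are that $R_{\glorule}(w,t)=|P(G^{\mathbb Z})|$ by shift-invariance (so the count is position-independent) and that the iterated limits defining $\entropy{H}$ exist and respect the termwise product inequality, both of which are routine.
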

\begin{proof}
The proposition holds for endomorphisms of compact groups (see \cite[Thm. 2]{Juzvinskii_1971} and \cite{GIORDANO_BRUNO_VIRILI_2017}).
\end{proof}
It is well known that an equicontinuous map over a compact metric space has zero topological entropy \cite[Prop. 2.11]{SUN}.
For GCA we conjecture that the converse is also true.

\begin{conjecture}\label{conj_sensitivity_top_entropy}
Let $\glorule$ be a GCA over a finite group $G$.
Then, $\glorule$ is equicontinuous if and only if its topological entropy is zero. Equivalently, $\glorule$ is sensitive  to initial conditions if and only if its topological entropy is positive. 
\end{conjecture}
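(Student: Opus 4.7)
The plan is to handle the two directions separately. The implication ``equicontinuous $\Rightarrow$ zero topological entropy'' is the easy one: it holds for any equicontinuous map on a compact metric space \cite[Prop.~2.11]{SUN}, as already recalled in the paper just before the conjecture. Combined with Lemma~\ref{equic_or_sens} (which gives the dichotomy equicontinuous/sensitive for GCA), this also takes care of the corresponding ``positive entropy $\Rightarrow$ sensitive'' direction. So the only substantive content is the converse: if $\glorule$ is sensitive, then $\entropy{H}(\glorule)>0$.

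My strategy for this converse is to reduce to two base cases by means of the decomposition procedure of Section~\ref{section_spezzatino}. Repeated application of the quotient construction along fully invariant normal subgroups produces finitely many GCA $\glorule_1,\ldots,\glorule_k$ on groups $G_1,\ldots,G_k$ that are either abelian or products of isomorphic simple non-abelian groups. By Theorem~\ref{entropy}, at every decomposition step the entropy splits as $\entropy{H}(\glorule)=\entropy{H}(\widetilde{\glorule})+\entropy{H}(\overline{\glorule})$, so iterating we get $\entropy{H}(\glorule)=\sum_i \entropy{H}(\glorule_i)$. By Theorem~\ref{sensitivity}, $\glorule$ is sensitive if and only if at least one of the $\glorule_i$ is sensitive. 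Hence it is enough to prove the conjecture for the two base classes of groups.

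For the abelian case, my plan is to combine Lemma~\ref{radius_sensitive}, which identifies sensitivity with $\limsup_n \rho(\glorule^n)=\infty$, with the closed-form characterizations and formulas for topological entropy of abelian GCA in \cite{ManziniM99,DamicoMM03,DennunzioFGM2020TCS,kari2000}: these show that positive entropy is governed by exactly the same ``spreading'' condition on the iterated local rules. For the case of a product $S^r$ of isomorphic simple non-abelian groups, I would use the rigid structure of $\End(S^r)$ (each image-coordinate is either trivial or an automorphism applied to some source coordinate), which is the appropriate analogue of the shift-like/identity-like dichotomy of Theorem~1. The plan is then to show that if $\rho(\glorule^n)\to\infty$, some iterate contains a non-trivial automorphism component at a position $|j|$ arbitrarily large, giving a genuine ``shift-like'' subsystem. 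Counting the preimages along that coordinate yields a uniform lower bound of the form $R(w,t)\geq |S|^{\Omega(t)}$, whence $\entropy{H}(\glorule)>0$.

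The main obstacle I expect is exactly this last step. In the non-abelian product case it is delicate to turn unbounded radius growth into a uniform lower bound on $R(w,t)$: the growth of $\rho(\glorule^n)$ could in principle come from long compositions of endomorphisms whose images shrink along the way (for instance by repeatedly projecting onto smaller subproducts of copies of $S$), so the ``shift-like'' coordinate present in $\glorule^n$ might be diluted into an ever-smaller invariant factor. Excluding this requires a careful analysis of the semigroup generated inside $\End(S^r)$ by the components $h_{-\rho},\ldots,h_{\rho}$, exploiting the fact that every endomorphism of $S$ is either trivial or an automorphism to obtain, after finitely many iterations, a ``purified'' shift-like block acting on a non-trivial invariant subgroup. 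I strongly suspect this is precisely the reason the authors left the statement as a conjecture rather than as a theorem.
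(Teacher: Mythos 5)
The statement you are addressing is left \emph{open} in the paper: Conjecture~\ref{conj_sensitivity_top_entropy} is stated without proof, the authors only recording the easy direction (equicontinuity implies zero entropy, via \cite[Prop. 2.11]{SUN} together with the dichotomy of Lemma~\ref{equic_or_sens}). So there is no paper proof to compare against, and the question is whether your sketch actually closes the conjecture. Your reduction step is sound as far as it goes: by Theorem~\ref{entropy} the entropy of $\glorule$ is the sum of the entropies of the pieces produced by the decomposition, by Theorem~\ref{sensitivity} sensitivity of $\glorule$ is equivalent to sensitivity of at least one piece, and topological entropy is nonnegative, so it would indeed suffice to establish the equivalence on the two base classes. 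The problem is that neither base case is settled by what you invoke.

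For the abelian case, the closed-form entropy formula of \cite{DamicoMM03} is available only for GCA on the cyclic groups $\ZZ/m\ZZ$. The decomposition cannot get you below general abelian (or, with Algorithm~1, elementary abelian $(\ZZ/p\ZZ)^n$) factors, and GCA on those are linear CA whose local rules are matrices of Laurent polynomials; for these the computation of topological entropy — indeed even deciding its positivity — is a well-known open problem, which is exactly why the paper says only that it ``strongly believes'' entropy is computable for abelian GCA. So ``sensitive $\Rightarrow$ positive entropy'' is not in the literature you cite for the abelian base case, and you do not prove it. For the non-abelian product case, Theorem~\ref{simple_iso_copies_} does give what you want, but only for \emph{surjective} GCA on a product that is minimal with respect to the action (items (2)--(4) combine to show sensitivity $\iff \sum_i i r_i \neq 0 \iff \entropy{H}(\glorule)>0$); the entire permutation analysis of Section~\ref{section_spezzatino} relies on surjectivity of the local rule, and a sensitive but non-surjective GCA on $S^m$ is not covered by your preimage-counting sketch. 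You candidly flag this last step as an obstacle, and you are right to: these two gaps are precisely why the statement is a conjecture rather than a theorem, and your proposal does not close them.
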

%
\section{Algorithimic decomposition technique: iterated quotienting}\label{section_spezzatino}
In Section~\ref{general_res} we have shown that the study of a number of dynamical properties of a GCA $\glorule$ can be reduced to that of the same properties of  two GCA, namely $\overline{\glorule}$ and $\widetilde{\glorule}$, defined on smaller groups. However, these smaller groups do not guarantee that the analysis of the dynamical properties of 
$\overline{\glorule}$ and $\widetilde{\glorule}$
will be simpler than that of the original GCA $\glorule$.

In this section, we show that by iterating the decomposition of the original group a finite number of times, and using suitable quotient groups at each step, we eventually obtain a collection of GCA for which the analysis of the relevant properties becomes significantly simpler, if not immediate. To this end, we begin with some definitions and preliminary results.

Let $G$ be a finite group. 
A group word is a formal expression in terms of variables and group operations (multiplication, inverses, and identity). A \textit{verbal subgroup} of a group $G$ is defined as the subgroup generated by all elements obtained by substituting elements of $G$ into the variables of a given set of group words. As an example, the verbal subgroup of a group $G$ generated by the group word $x y x^{-1} y^{-1}$ is the commutator of $G$.
\smallskip

A finite group $G$ will be called \textit{invariantly simple} (\textit{characteristically simple}, \textit{verbally simple}, respectively) if the only fully invariant (characteristic, verbal, respectively) subgroups of $G$ are the trivial subgroup and $G$ itself.

\begin{theorem}\label{equiv_verbally}
For any  finite group $G$, the following statements are equivalent:\\
$(1)$ $G$ is characteristically simple;\\
$(2)$ $G$ is invariantly simple;\\
$(3)$ $G$ is verbally simple; \\
$(4)$ $G$ is the direct product of isomorphic simple groups.
\end{theorem}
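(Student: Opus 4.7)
My plan is to prove the cyclic chain of implications $(1) \Rightarrow (2) \Rightarrow (3) \Rightarrow (4) \Rightarrow (1)$. The implications $(1) \Rightarrow (2)$ and $(2) \Rightarrow (3)$ follow immediately from the chain of inclusions \emph{verbal $\subseteq$ fully invariant $\subseteq$ characteristic} among the subgroup classes: any endomorphism sends word-values to word-values, so verbal subgroups are fully invariant; and any automorphism is an endomorphism, so fully invariant subgroups are characteristic. Consequently, if $G$ has no nontrivial proper characteristic subgroup, the same holds \emph{a fortiori} for the narrower classes of fully invariant and verbal subgroups.

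For $(4) \Rightarrow (1)$, suppose $G = S^n$ with $S$ simple and let $H$ be a characteristic subgroup. Since $\Aut(G)$ contains the coordinate permutations of the $n$ factors, $H$ is invariant under them. If $S$ is non-abelian simple, the triviality of the center $Z(S)$ forces the normal subgroups of $S^n$ to be precisely the products $\prod_{i \in I} S_i$ for $I \subseteq \{1,\dots,n\}$; among these, only $\{e\}$ and $G$ itself are invariant under every coordinate permutation. If $S = \ZZ/p\ZZ$, then $G \cong \mathbb{F}_p^n$ and $\Aut(G) = GL_n(\mathbb{F}_p)$ acts transitively on nonzero vectors, so the only invariant subgroups are $\{e\}$ and $G$.

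The substantive step is $(3) \Rightarrow (4)$. Since $[G,G]$ is the verbal subgroup corresponding to the word $[x,y]$, verbal simplicity forces either $[G,G] = \{e\}$ (so $G$ is abelian) or $[G,G] = G$ (so $G$ is perfect). In the abelian case, for each prime $p$ dividing $|G|$, Cauchy's theorem provides an element of order $p$, so the homomorphism $g \mapsto g^p$ has nontrivial kernel and is therefore not surjective; hence the verbal subgroup $G^p$ is a proper subgroup of $G$, and verbal simplicity gives $G^p = \{e\}$, i.e., $G$ has exponent $p$. The same reasoning applied to a second prime divisor $q \ne p$ would force exponent $q$ as well, a contradiction, so $|G|$ is a prime power and $G \cong (\ZZ/p\ZZ)^m$, a direct product of isomorphic simple groups.

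The main obstacle is the non-abelian case of $(3) \Rightarrow (4)$, where $G$ is perfect and one must show that $G$ is a direct product of isomorphic non-abelian simple groups. My plan here is to invoke the classical structure theorem stating that every finite characteristically simple group is a direct product of isomorphic simple groups (this realizes $(1) \Rightarrow (4)$ via a standard minimal-normal-subgroup argument), and to reduce the non-abelian case of $(3) \Rightarrow (4)$ to the auxiliary claim that perfectness plus verbal simplicity implies characteristic simplicity. The delicate point is that characteristic subgroups need not be verbal (for instance, in $\ZZ/p\ZZ \oplus \ZZ/p^2\ZZ$ the $p$-torsion subgroup is fully invariant but not verbal), so one cannot translate absence of nontrivial verbal subgroups into absence of nontrivial characteristic subgroups directly. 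To overcome this, I would analyze the socle of $G$ and the intersection of its maximal normal subgroups, exploiting perfectness together with the rigidity imposed by verbal subgroups built from iterated commutators and power words such as $[x^n,y]$ to show that any nontrivial proper characteristic subgroup of a perfect verbally simple finite group would yield a nontrivial proper verbal subgroup, a contradiction. Alternatively, one may appeal to classical results in combinatorial group theory linking verbal and characteristic subgroups for finite perfect groups, which would give a shorter but less self-contained proof.
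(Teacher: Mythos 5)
Your chain $(1)\Rightarrow(2)\Rightarrow(3)$, your proof of $(4)\Rightarrow(1)$, and your treatment of the abelian case of $(3)\Rightarrow(4)$ are all correct, and in fact more self-contained than the paper's proof, which disposes of every implication by citation. The problem is the non-abelian case of $(3)\Rightarrow(4)$ --- the only genuinely hard implication in the theorem --- which you do not actually prove. You reduce it to the auxiliary claim that a finite perfect verbally simple group is characteristically simple, correctly observe that this does not follow formally (characteristic subgroups need not be verbal), and then offer only a plan (``analyze the socle \ldots{} exploiting \ldots{} iterated commutators and power words such as $[x^n,y]$'') followed by a hedge (``alternatively, one may appeal to classical results''). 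As written this is a gap, not a proof. Moreover, the specific tools you name are unlikely to carry the argument: in a perfect group every iterated commutator word has verbal subgroup equal to the whole group, so those words carry no information, and the characteristic subgroups one would need to rule out (the solvable radical, the centralizer of the socle, a single minimal normal subgroup) are not visibly verbal. Showing that any such subgroup forces a nontrivial proper verbal subgroup is essentially the content of the theorem itself, so the sketch is circular in spirit.

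The paper closes exactly this step by citing H.~Neumann, \emph{Varieties of Groups}, Corollary 53.57 (a finite verbally simple group is characteristically simple), combined with the standard fact that a finite characteristically simple group is a direct product of isomorphic simple groups. If you commit to that citation --- which is the ``classical result'' you allude to but never identify --- your argument becomes complete and is then essentially the paper's proof, with the added value that your $(4)\Rightarrow(1)$ and the abelian half of $(3)\Rightarrow(4)$ are worked out rather than asserted. If instead you want a self-contained proof, the missing lemma needs an actual argument; one natural route is to take a maximal normal subgroup $N$ of the perfect group $G$, note that $G/N$ is non-abelian simple, observe that the verbal subgroup of $G$ determined by the laws of $G/N$ lies in $N$ and hence is trivial, so $G$ belongs to the variety generated by a finite simple group, and then analyze such $G$ --- but none of this analysis appears in your proposal.
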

\begin{proof}
By definition, it immediately follows that if $G$ is characteristically simple then it is invariantly simple, and that if $G$ is invariantly simple then it is verbally simple. 
It is also well-known that a finite characteristically simple group is the direct product of isomorphic simple groups \cite[Thm. 1.4]{gorenstein2007finite}.
Moreover, a finite verbally simple group is characteristically simple   \cite[Cor. 53.57]{neumann2012varieties}.
On the contrary, it is easy to see that the direct product of isomorphic simple groups is characteristically simple.
\end{proof}
We now introduce a recursive procedure, namely, a divide and conquer one, called \texttt{Decomposition} that, given a GCA $\glorule$ on a finite group $G$, produces a finite set 
$\{(\glorule_1,G_1),\dots,(\glorule_k,G_k)\}$
of GCA $\glorule_i$ on invariantly simple groups $G_i$ such that   
$\glorule$ on $G$ is surjective, injective, equicontinuous  (and maybe topologically transitive) if and only if 
all the $\glorule_i$ on $G_i$ satisfy the same property.
Moreover, the topological entropy of $\glorule$ on $G$ is the sum 
of the topological entropies of all the $\glorule_i$ on $G_i$.
\begin{algorithm}[h]
 \KwIn{A GCA $(\glorule, G)$}
 \KwOut{A finite set $\{(\glorule_1,G_1),\dots,(\glorule_k,G_k)\}$ of GCA, where each $G_i$ is an invariantly simple group.}
\SetKwFunction{Decomposition}{Decomposition}
\SetKwProg{Fn}{Function}{:}{}
\Fn{\Decomposition{$\glorule$,$G$}}
{
\If{$G$ is invariantly simple}{
   \KwRet{$\{(\glorule, G)\}$}
 }
Let $H$ be any non-trivial proper fully invariant subgroup of $G$\;
 a  $\gets$ \Decomposition{$\widetilde{\glorule},G/H$}\;
  b  $\gets$ \Decomposition{$\overline{\glorule},H$}\;
  \KwRet{$a\cup b$}
}
\caption{Decomposition}
\end{algorithm}
%
Note that the output of the procedure \texttt{Decomposition} depends on the sequence of fully invariant subgroups $H$'s chosen at step $4$ (in all the recursive calls). However, 
 any of the possible outputs of \texttt{Decomposition} will work for our purposes. 
The following result is a trivial consequence of Theorems~\ref{surj_inj},~\ref{transitive},~\ref{sensitivity} and~\ref{entropy}. 


\begin{theorem}\label{spezzatino}
 Let $\glorule$ be a GCA over a finite group $G$. The following statements hold.\\
$(1)$ $\glorule$ is injective (resp., surjective) (resp., equicontinuous) if and only if all the GCA in the set {\tt Decomposition($\glorule,G$)} 
 are injective (resp., surjective) (resp, equicontinuous), while $\glorule$ is sensitive to initial conditions if and only if at least one GCA in that set is sensitive to initial conditions, too.\\
$(2)$ If all of the GCA in the set  {\tt Decomposition($\glorule,G$)} are topologically transitive, then $\glorule$ is topologically transitive. \\ 
$(3)$ The topological entropy  of $\glorule$
is equal to the sum of the topological entropies of the GCA in the set 
 {\tt Decomposition($\glorule,G$)}.
\end{theorem}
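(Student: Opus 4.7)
The plan is to proceed by induction on the depth of recursion of the procedure \texttt{Decomposition}, which clearly terminates since every recursive call is made on a group of strictly smaller order (both $|H|<|G|$ and $|G/H|<|G|$, as $H$ is non-trivial and proper). The crucial preliminary observation is that at every invocation, the chosen non-trivial proper fully invariant subgroup $H$ of $G$ satisfies the inclusion $\glorule(H^{\mathbb{Z}})\subseteq H^{\mathbb{Z}}$ required by the theorems of Section~\ref{general_res}. Indeed, if $f=(h_{-\rho},\ldots,h_{\rho})$ is the local rule of $\glorule$, then each $h_i$ lies in $\End(G)$, and full invariance of $H$ yields $h_i(H)\subseteq H$ for every $i$. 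This gives $f(H^{2\rho+1})\subseteq H$ and hence $\glorule(H^{\mathbb{Z}})\subseteq H^{\mathbb{Z}}$, so that both $\overline{\glorule}$ on $H$ and $\widetilde{\glorule}$ on $G/H$ are well-defined GCA to which the reduction theorems apply.

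In the base case, $G$ is invariantly simple and {\tt Decomposition($\glorule,G$)}$=\{(\glorule,G)\}$, so all four claims are tautological. In the inductive step, let $H$ be the fully invariant subgroup selected at step~4 of the outermost recursive call, so that {\tt Decomposition($\glorule,G$)} is the union of {\tt Decomposition($\widetilde{\glorule},G/H$)} and {\tt Decomposition($\overline{\glorule},H$)}. By the induction hypothesis applied to $\widetilde{\glorule}$ on $G/H$ and $\overline{\glorule}$ on $H$, the four claims already hold for each of these two sub-decompositions; it then only remains to combine them with the corresponding reduction theorem at the current level: Theorem~\ref{surj_inj} for injectivity and surjectivity; Theorem~\ref{sensitivity} for equicontinuity and, by contrapositive together with Lemma~\ref{equic_or_sens}, for sensitivity to initial conditions; Theorem~\ref{transitive} for the one-sided implication of~(2); and the additivity formula of Theorem~\ref{entropy} for the topological entropy statement in~(3).

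Concerning sensitivity in~(1), one observes that this property behaves dually to equicontinuity: since Theorem~\ref{sensitivity} states that equicontinuity of $\glorule$ is equivalent to equicontinuity of both $\widetilde{\glorule}$ and $\overline{\glorule}$, its negation (via Lemma~\ref{equic_or_sens}) gives that $\glorule$ is sensitive if and only if at least one of $\widetilde{\glorule}$ and $\overline{\glorule}$ is sensitive. Applying the induction hypothesis to this dichotomy then yields the desired ``exists'' characterization over the whole set {\tt Decomposition($\glorule,G$)}.

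The main obstacle I expect is purely organizational rather than mathematical: one must carefully manage the recursion so that each iterated restriction/quotient appearing in the decomposition is correctly identified as an application of Definition~\ref{tilde_barra_def} to a GCA whose domain is a fully invariant subgroup or the corresponding quotient, so that the two-factor biconditionals and the entropy additivity of the cited theorems really do chain together along the recursion tree. Once this bookkeeping is in place, each of the three statements follows by a one-line induction, which is why the theorem is presented as a trivial consequence of the results of Section~\ref{general_res}.
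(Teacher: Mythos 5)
Your proposal is correct and follows exactly the route the paper intends: the paper dispatches this theorem as a ``trivial consequence'' of Theorems~\ref{surj_inj},~\ref{transitive},~\ref{sensitivity} and~\ref{entropy}, and your write-up simply makes explicit the underlying induction on the recursion depth of \texttt{Decomposition}, together with the key observation (also implicit in the paper) that full invariance of $H$ guarantees $\glorule(H^{\mathbb{Z}})\subseteq H^{\mathbb{Z}}$ so that the reduction theorems apply at every level.
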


We now give an explicit strategy for selecting the subgroups $H$'s 
that will allow us to simplify the problem of deciding a number of dynamical properties of the GCA under consideration.

Let $G$ be a finite group.
Set $G^{(0)}:=G$ and, for every $i\geq 0$, $G^{(i+1)}:=[G^{(i)},G^{(i)}]$.
The series $\{G^{(i)}\}$ is called \textit{ the derived series} of $G$. 
The derived series of $G$ eventually reaches a \textit{perfect group}, i.e., a group equal to its own commutator. Denote such a group by $\widehat{G}$.
The group $G$ is said to be \textit{solvable} if $\widehat{G}=\{e\}$. 
Solvable groups are a central topic in algebra and includes the widely studied class of \textit{nilpotent} groups \cite{Rotman_groups}.
Moreover, every finite group of odd order is solvable by the celebrated Feit-Thompson theorem. 
We recall that the commutator subgroup $[G,G]$ of a group $G$ is fully invariant and thus normal in $G.$ Moreover, the quotient of $G/[G,G]$ is abelian. 
\begin{algorithm}[h]
 \KwIn{A GCA $(\glorule, G)$}
 \KwOut{A finite set $\{(\glorule_1,G_1),\dots,(\glorule_k,G_k)\}$ of GCA where  $G_i$'s are either abelian or non-abelian invariantly simple groups. If $G$ is solvable then all $G_i$'s are abelian}
\SetKwFunction{ExplicitDecomposition}{Explicit-Decomposition}
\SetKwFunction{Decomposition}{Decomposition}
\SetKwProg{Fn}{Function}{:}{}
\Fn{\ExplicitDecomposition{$\glorule$,$G$}}
{
\If{$[G,G]=\{ e \}$}{
  \KwRet{$\{(\glorule,G)\}$}
  }
\If{$[G,G]=G$}{
  \KwRet{$\Decomposition (\glorule,G)$}
  }
$H$ $\gets$ $[G,G]$\;
a  $\gets$ \ExplicitDecomposition{$\widetilde{\glorule},G/H$}\;
b  $\gets$ \ExplicitDecomposition{$\overline{\glorule},H$}\;
\KwRet{$a\cup b$}
}
\caption{Explicit Decomposition}
\end{algorithm}
It is not hard  to verify  that, if $G$ is a solvable finite group, then all the GCA produced by {\tt Explicit-Decomposition($\glorule$,$G$)} are abelian.
Since a complete characterization of surjectivity, injectivity, and sensitivity  to initial conditions for GCA over abelian groups exists 
\cite{
Dennunzio20JCSS,kari2000}, we also have a characterization of these properties in the case of solvable groups.


%
%
\smallskip 

We now consider the case of non-solvable groups. In this case, the set {\tt Explicit-Decomposition($\glorule$,$G$)} also contains  GCA on products of isomorphic, non-abelian simple groups.

Let $S$ be a finite non-abelian simple group and let $G$ be the product of $m$ copies of $S$, \ie, $G=S_1\times\cdots\times S_m$, where $S_i\cong S$ for every $i$.
In the following we will identify the subgroup $\{g\in G : g_i\in G \,\mbox{and}\, g_j=e\;\forall j\neq i\}$ of $G$ with $S_i$.

Since the kernel of every endomorphism $h$ of $G$ is a normal subgroup  and a normal subgroup of a direct product of non-abelian simple groups is the direct product of some of them (see, for instance,~\cite[p. 174]{dummit2003abstract}), it follows that $\Ker(h)=\prod_{t\in I} S_t$, where $I$ is a nonempty proper subset of $\{1,\ldots,m\}$.

Consider now a \textit{surjective} GCA $\glorule$ over $G.$ Thus the local rule $\locrule=(h_{-\rho},\ldots,h_{\rho})$ of $\glorule$ is also surjective. 

By Lemma~\ref{surj_normal}, $\Imma(h_i)$ is also a normal subgroup of $G$ and, hence, it holds that $\Imma(h_i)=\prod_{t\in J}S_t$, where $J$ is a nonempty proper subset of $\{1,\ldots,m\}$.

If $i\neq j$, $\Imma(h_j)\subseteq C_G(\Imma(h_i))$. Since $S_t$ is non-abelian, this implies that the factors $S_t$ appearing in $\Imma(h_j)$ are distinct from the factors $S_t$ appearing in $\Imma(h_i)$. Since $\locrule$ is surjective, the $\Imma(h_i)$'s form a partition of the factors $S_t$'s.

If the endomorphism $h_i$ has $r_i$ factors in the image it must have $m-r_i$ factors in the kernel.

Suppose that there exists a simple group $S_r$ among the factors of $G$ which belongs to the factors of $\Ker(h_i)$ for every $i$. In that case the restriction of $\locrule$ to $S_r^k$ induces a GCA over $S_r$ whose local rule is trivial, but this contradicts the fact that $\glorule$ is surjective (see Example~\ref{f_sur_F_no}). 

In particular, the following fact holds.

\begin{remark}\label{FsurKerempty}
If $G$ is a finite group which is the product of simple isomorphic non-abelian groups, then any GCA
$\glorule$ with local rule $f=(h_{-\rho},\ldots,h_{\rho})$ on $G$ is surjective if and only if $$\bigcap_{-\rho\leq j\leq \rho}\Ker(h_j)=\emptyset.$$
\end{remark}


We can conclude that every factor $S_t$ appearing in $G$ is a factor of exactly one of the images of the $h_i$'s and there are no factors $S_t$ which are factors of every kernel of the $h_i$'s. Thus, for every $S_t$, there exists precisely an $i$ such that $S_t$ is not a factor of $\Ker(h_i)$.

In other terms, \textit{the action of the family of  endomorphisms }$h_i$\textit{'s can be thought as a permutation over the set of the factors }$S_t$\textit{ of }$G$. Denote this permutation by $\pi_{\locrule}$ and  by $o$ the order of the permutation $\pi_f$.

We illustrate these facts by an example.

\begin{example}
Consider a local rule of the form $\locrule=(h_{-1},h_0,h_1)$ and a group $G=S_1\times S_2\times S_3\times S_4$. Suppose that $\Imma(h_{-1})=S_2\times S_4$, $\Imma(h_0)=S_3$,  $\Imma(h_1)=S_1$, $\Ker(h_{-1})=S_3\times S_4$, $\Ker(h_0)=S_1\times S_2\times S_3$, and $\Ker(h_1)=S_1\times S_2\times S_4$. We can represent the situation as follows. 

\vspace{0.5cm}
\begin{center}
\begin{tikzpicture}[block/.style={draw,circle, minimum size=1cm, thick, text centered}]
\node[block] (A1) at (0, 3) {$S_1$};
\node[block] (A2) at (2, 3) {$S_2$};
\node[block] (A3) at (4, 3) {$S_3$};
\node[block] (A4) at (6, 3) {$S_4$};
\node[block] (B1) at (0, 0) {$S_1$};
\node[block] (B2) at (2, 0) {$S_2$};
\node[block] (B3) at (4, 0) {$S_3$};
\node[block] (B4) at (6, 0) {$S_4$};

\node[block, draw=none] (h1) at (0.4, 1.8) {$h_{-1}$};
\node[block, draw=none] (h2) at (1.8, 1.8) {$h_{1}$};
\node[block, draw=none] (h3) at (4.3, 1.8) {$h_{-1}$};
\node[block, draw=none] (h4) at (5.7, 1.8) {$h_0$};

\draw[->, thick] (A1) -- (B2); 
\draw[->, thick] (A2) -- (B4); 
\draw[->, thick] (A3) -- (B1); 
\draw[->, thick] (A4) -- (B3); 
\end{tikzpicture}
\end{center}
\vspace{0.5cm}

In this case the permutation $\pi_{\locrule}$ is the permutation whose only cycle is $(1,2,4,3)$ and its order is $o=4$.
    
\end{example}

In the following, with a slight abuse of notation, we will identify the endomorphism $h_i$ with the restriction of $h_i$ to those factors of $G$ on which $h_i$ acts non-trivially. This restriction is clearly an automorphism of those factors. 

For any $1\leq i\leq m$, consider $S_i$, the $i$-th factor of $G$. Then, there exists exactly one $h_j$ acting non-trivially on $S_i$ whose image is another $S_{i^{(1)}}$. Over the factor $S_{i^{(1)}}$ there exists exactly one $h_{j^{(1)}}$ acting non-trivially and so on. After $o$ steps we will return on the initial factor $S_i$ applying the endomorphism $h_{j^{(o-1)}}$ to the factor $S_{i^{(o-1)}}$. Define the automorphism $\hat h_i$ of $S_i$ in the following way: 
\[\hat h_i:=h_{j^{(o-1)}} \circ \ldots\circ h_{j^{(1)}}\circ h_j\enspace,
\] 
and denote by $o_i$ the order $\hat h_i$ as an automorphism (i.e., $o_i$ is the smallest positive number such that the composition of $\hat h_i$ with itself $o_i$ times is the identity map on $S_i$).

\begin{example}
Consider the same local rule of the previous example. In this case $\hat h_1=h_1\circ h_0\circ h_{-1}\circ h_{-1}$, $\hat h_2=h_{-1}\circ h_1\circ h_0\circ h_{-1}$,  $\hat h_3=h_0\circ h_{-1}\circ h_{-1}\circ h_1$, and $\hat h_4=h_{-1}\circ h_{-1}\circ h_1\circ h_0$.
\end{example}

\begin{definition}
A group $G$ which is the product $S_1\times\ldots \times S_m$ of finite, non-abelian, isomorphic simple groups $S_i$ is said to be \emph{minimal} with respect to the action of a given GCA $\glorule$ over $G$ if there are no two non-empty, disjoint sets $I,J$ with $I\cup J=\{1,2,\ldots,m\}$, such that $\glorule(\left(\prod_{i\in I}S_i\right)^{\mathbb Z})\subseteq \left(\prod_{i\in I}S_i\right)^{\mathbb Z}$ and $\glorule(\left(\prod_{i\in J}S_i\right)^{\mathbb Z})\subseteq \left(\prod_{i\in J}S_i\right)^{\mathbb Z}.$
\end{definition}

When $\glorule$ is surjective, the group $G$ is minimal with respect to the action of $\glorule$
if and only if the corresponding permutation $\pi_f$ above defined is a single cycle. Notice that if $G$ is not minimal with respect to $\glorule$ it is possible to decompose the dynamics of $\glorule$ into the product of the dynamics of $\glorule$ restricted to its minimal components. In fact each minimal component $H$ satisfies $\glorule(H^{\mathbb Z})\subseteq H^{\mathbb Z}$. So, in this case we can continue the decomposition previously illustrated  one step further. 

\begin{lemma}\label{lemma_product_simples}
Let $G=S_1\times\ldots\times S_m$ be a product of finite, non-abelian, isomorphic simple groups. Let $\glorule$ be a surjective GCA on $G$ with  local rule $\locrule=(h_{-\rho},\ldots,h_{\rho})$. Let $\Imma(h_i)=\prod_{t\in J_i}S_t$, where $J_i$ is a nonempty proper subset of $\{1,\ldots,m\}$ and let $r_i=|J_i|$. Let $o$ and $o_i$, $1\leq i\leq m$, be defined as above. If $G$ is minimal with respect to $\glorule$, then it holds that 
$$\glorule^{o\gcd(o_1,\ldots,o_m)}=\sigma^{-\gcd(o_1,\ldots,o_m){\sum_i ir_i}}.$$  
\end{lemma}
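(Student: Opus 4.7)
The plan is to track how $\glorule$ acts on configurations supported on a single factor of $G$ and then to assemble the answer using the cycle structure of $\pi_{\locrule}$. I would first observe that any $c\in G^{\mathbb Z}$ splits as $c=\prod_{k=1}^m c^{(k)}$ with $c^{(k)}\in S_k^{\mathbb Z}$ the projection of $c$ onto the $k$-th factor at every position. Because the subgroups $S_k^{\mathbb Z}$ pairwise centralize one another in $G^{\mathbb Z}$ and $\glorule$ is a homomorphism, it suffices to compute $\glorule^{od}$ separately on each $S_k^{\mathbb Z}$, where $d:=\gcd(o_1,\dots,o_m)$.

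The next step is to use, for $c\in S_k^{\mathbb Z}$, the combinatorial setup preceding the lemma: exactly one $h_j$ is non-trivial on $S_k$, namely the $h_{j_k}$ with $\Imma(h_{j_k}|_{S_k})=S_{\pi_{\locrule}(k)}$, so all other terms in $\glorule(c)_i=\prod_j h_j(c_{i+j})$ vanish. Hence $\glorule(c)=h_{j_k}\circ\sigma^{-j_k}(c)$ lies in $S_{\pi_{\locrule}(k)}^{\mathbb Z}$, with $h_{j_k}$ acting positionwise, so that it commutes with $\sigma$. Iterating along the cycle and using minimality (so that $\pi_{\locrule}$ is a single $m$-cycle), after $o$ steps one returns to $S_k^{\mathbb Z}$ and obtains $\glorule^o|_{S_k^{\mathbb Z}}=\hat h_k\circ\sigma^{-T_k}$, where $T_k:=\sum_{s=1}^o j_s^{(k)}$ accumulates the shifts along the cycle starting at $S_k$. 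A key observation is that $T_k$ does not depend on $k$: starting from a different factor merely permutes the same $o$ summands cyclically; write $T$ for this common value.

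Finally, I would identify $T$ with $\sum_i i r_i$: in traversing the unique cycle of $\pi_{\locrule}$, each endomorphism $h_j$ is used exactly $r_j$ times, once for each of the $r_j$ factors on which it acts non-trivially, since the cycle visits every factor exactly once. Hence $T=\sum_j j\, r_j=\sum_i i\, r_i$. A short argument shows that all the $\hat h_i$'s are pairwise conjugate through the partial isomorphisms $h_{j_s^{(k)}}$ along the cycle, so they share a common order and $d=o_1=\dots=o_m$; in particular $\hat h_k^d=\mathrm{id}_{S_k}$. Iterating the formula $\glorule^o|_{S_k^{\mathbb Z}}=\hat h_k\circ\sigma^{-T}$ a further $d$ times and exploiting that $\sigma$ commutes with positionwise endomorphisms then gives $\glorule^{od}|_{S_k^{\mathbb Z}}=\hat h_k^d\circ\sigma^{-dT}=\sigma^{-dT}$. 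Reassembling via the factor decomposition yields $\glorule^{od}(c)=\sigma^{-dT}(c)$, which is the claimed identity. The main obstacle I anticipate is the shift bookkeeping together with proving that $T_k$ is independent of the starting factor; both depend critically on the single-cycle structure provided by the minimality hypothesis.
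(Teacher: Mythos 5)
Your proof is correct and follows essentially the same route as the paper's: track each factor $S_k$ along the single cycle of $\pi_{\locrule}$, obtain $\glorule^o|_{S_k^{\mathbb Z}}=\hat h_k\circ\sigma^{-\sum_i ir_i}$, and then kill the automorphism part by raising to the power $\gcd(o_1,\ldots,o_m)$. In fact your write-up is more complete than the paper's two-line computation: the observation that the $\hat h_i$'s are pairwise conjugate along the cycle, hence all of the same order, is exactly what is needed to justify $\hat h_k^{\gcd(o_1,\ldots,o_m)}=\mathrm{id}$ (otherwise one would only get the identity after $\mathrm{lcm}(o_1,\ldots,o_m)$ iterations), and the paper leaves this point implicit.
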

\begin{proof}


Let $c\in G^{\mathbb Z}$ be any configuration. Thus, we can write $c_i=(a^{(i)}_1,\ldots,a^{(i)}_m)$ with $a^{(i)}_j\in S_j$. Consider the configuration $\glorule^o(c)$. The element $\glorule^o(c)_i$ is $$(\hat h_1(a^{(i+\sum jr_j)}_{1}),\ldots,\hat h_m(a^{(i+\sum jr_j)}_{m})).$$
Therefore,  the element $\glorule^{o\gcd(o_1,\ldots,o_m)}(c)_i$ is $$(a^{(i+\gcd(o_1,\ldots,o_m)\sum jr_j)}_{1},\ldots,a^{(i+\gcd(o_1,\ldots,o_m)\sum jr_j)}_{m}).$$
\end{proof}

We illustrate the previous proof by an example.

\begin{example}
\label{extable}
Consider the same local rule $f$ of the previous examples. Notice that $r_1=2$, $r_2=1$ and $r_3=1$ (because $\Imma(h_{-1})$ has two factors, while  $\Imma(h_0)$ and $\Imma(h_1)$ have one factor each).
Consider the GCA $\glorule$ having $f$ as local rule.
In this case $\sum_i ir_i=-1$, $n'=\gcd(o_1,o_2,o_3,o_4)$, and $n=on'=4n'$. Table~\ref{tab:placeholder_label} represents  the behavior of $\glorule$.  
\begin{table*}
\caption{The dynamical behavior of the GCA $\glorule$ from Example~\ref{extable}.}
\begin{center}
\begin{tabular}{cc|c|c|c|c}
       \cline{2-6}
\rule{0pt}{2.5ex} $c $  & $\cdots$ & $\cdots$ & $a_1^{(i)}a_2^{(i)}a_3^{(i)}a_4^{(i)}$ & $\cdots$ & $\cdots$ \\ \cline{2-6}
\rule{0pt}{2.5ex} $\glorule(c) $ & $\cdots$ & $\cdots$ & $h_1(a_3^{(i+1)})h_{-1}(a_1^{(i-1)})h_0(a_4^{(i)})h_{-1}(a_2^{(i-1)})$ & $\cdots$ & $\cdots$ \\ \cline{2-6}
\rule{0pt}{2.5ex} $\glorule^2(c) $ &$\cdots$ & $\cdots$ & $\cdots$ & $\cdots$ & $\cdots$ \\ \cline{2-6}
\rule{0pt}{2.5ex} $\glorule^3(c) $ &$\cdots$ & $\cdots$ & $\cdots$ & $\cdots$ & $\cdots$ \\ \cline{2-6}
\rule{0pt}{2.5ex} $\glorule^4(c) $ & $\cdots$ & $\cdots$   & $\hat h_1(a_1^{(i-1)})\hat h_{2}(a_2^{(i-1)})\hat h_3(a_3^{(i-1)})\hat h_{4}(a_4^{(i-1)})$ & $\cdots$ & $\cdots$ \\ \cline{2-6}
\rule{0pt}{2.5ex} $\vdots$  & $\vdots$ & $\vdots$ & $\vdots$ & $\vdots$ & $\vdots$ \\ \cline{2-6}
\rule{0pt}{2.5ex} $\glorule^n(c) $ & $\cdots$ & $\cdots$ & $a_1^{(i-n')}a_2^{(i-n')}a_3^{(i-n')}a_4^{(i-n')}$ & $\cdots$ & $\cdots$ \\ \cline{2-6}
\end{tabular}
\end{center}
\label{tab:placeholder_label}
\end{table*}

\end{example}

The following example shows that the condition about the minimality of $G$ with respect to $\glorule$ in Lemma \ref{lemma_product_simples} is necessary.

\begin{example}
Let $S$ be a simple, non-abelian group and consider the group $G=S\times S.$ 
Consider the two endomorphisms $h_{-1}$ and $h_{1}$ of $G$ defined by $h_{-1}(x,y):=(x,1),$ and $h_1(x,y)=(1,y).$ Notice that the images of these two endomorphisms commute element-wise so their product defines an homomorphism $f\in \Hom(G^3,G)$ such that $f=(h_{-1},h_0,h_1),$ where $h_0$ is the trivial endomorphism.  Consider the GCA $\glorule$ over $G$ with local rule $f$.

Notice that $\glorule$ is nothing but the product of two shift-like GCA, namely, the shift $\sigma$ and its inverse $\sigma^{-1}$. Indeed, $f((x,y),(x',y'))=(x,y')$ and, hence, if $c\in G^{\mathbb Z}$ is the configuration  such that the element in position $i$ is $c_i=(x_i,y_i)$, then $\glorule(c)$ is the configuration such that $\glorule(c)_i=(x_{i-1},y_{i+1})$.   

It is clear that, in this case, Lemma~\ref{lemma_product_simples} fails. In fact, there are no powers of $\glorule$ that are shifts. We stress that this  fact does not constitute a contradiction since $G$ is not minimal with respect to $\glorule.$ The two factors $S\times\{e\}$ and $\{e\}\times S$ are actually invariant under $\glorule.$
\end{example}

As a consequence of Lemma \ref{lemma_product_simples} we have the following result. 
\begin{theorem}\label{simple_iso_copies_}
Let $G$ be a product of $m$ finite, non-abelian, isomorphic simple groups. Let $\glorule$ be a GCA on $G$ with local rule $\locrule=(h_{-\rho},\ldots,h_{\rho})$.  Suppose that $G$ is minimal with respect to $\glorule.$ The following facts hold:\\
$(1)$ $\glorule$ is injective if and only if $\glorule$ is surjective if and only if $$\bigcap_{-\rho\leq i\leq \rho}\Ker(h_i)=\{e\};$$\\
$(2)$ $\glorule$ is topologically transitive if and only if it is surjective and $\sum_i ir_i\neq 0$, where the $r_i'$s are defined as in Lemma \ref{lemma_product_simples}; \\
$(3)$ if $\glorule$ is surjective, $\glorule$ is sensitive to initial conditions  if and only if it is topologically transitive; \\
$(4)$ if $\glorule$ is surjective, the topological entropy of $\glorule$ is $$\entropy{H}(\glorule)=\frac{\left | \sum_iir_i\right | \log(|G|)}{o},$$ where $o$ and the $o_i$'s are defined as in Lemma \ref{lemma_product_simples};\\
$(5)$ $\glorule$ is neither strongly transitive nor positively expansive. 
\end{theorem}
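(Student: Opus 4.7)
The plan is to dispatch each of the five assertions by reducing them, via Lemma \ref{lemma_product_simples}, to properties of a shift map, together with the general structural results already established in the paper.

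For (1), the surjectivity characterization is exactly Remark \ref{FsurKerempty}. To obtain the equivalence with injectivity, I would invoke Lemma \ref{ker_sta__nel_centro_se_sur}: if $\glorule$ is surjective then $\Ker(\glorule)\subseteq\Z_G^{\ZZ}$. Since each simple non-abelian factor of $G$ has trivial center, Lemma \ref{Lemma_commutator_prod} gives $\Z_G=\{e\}$, so $\Ker(\glorule)=\{e^{\ZZ}\}$ and $\glorule$ is injective. The converse is the classical fact that an injective CA on a finite alphabet is surjective.

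For (2), I would apply Lemma \ref{lemma_product_simples} to write $\glorule^N=\sigma^{-M}$, with $N=o\gcd(o_1,\ldots,o_m)$ and $M=\gcd(o_1,\ldots,o_m)\sum_i ir_i$. If $\sum_i ir_i\neq 0$ then $\sigma^{-M}$ is a non-trivial shift and so topologically mixing; given nonempty open $U,V\subseteq G^{\ZZ}$, mixing of $\glorule^N$ yields $n$ with $\glorule^{nN}(U)\cap V\neq\emptyset$, so $\glorule$ is transitive. If instead $\sum_i ir_i=0$ then $\glorule^N=\mathrm{id}$, and a map of finite order cannot be transitive on the infinite Cantor space $G^{\ZZ}$. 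Transitivity always forces surjectivity, so the equivalence is complete. Assertion (3) then follows from (2): if $\glorule$ is surjective and transitive, some iterate $\glorule^{nN}=\sigma^{-nM}$ has radius $|nM|$ unbounded in $n$, so $\glorule$ is sensitive by Lemma \ref{radius_sensitive}; conversely, if $\glorule$ is surjective but not transitive then $\glorule^N=\mathrm{id}$, the sequence $\{\rho(\glorule^n)\}$ is bounded (it is periodic with period dividing $N$), and Lemmas \ref{radius_sensitive} and \ref{equic_or_sens} together yield equicontinuity and in particular non-sensitivity.

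For (4), since the topological entropy of a continuous self-map of a compact space satisfies $\entropy{H}(\glorule^N)=N\,\entropy{H}(\glorule)$ and the shift $\sigma$ on $G^{\ZZ}$ has entropy $\log|G|$, the identity $\glorule^N=\sigma^{-M}$ gives $N\,\entropy{H}(\glorule)=|M|\log|G|$, which simplifies to $\entropy{H}(\glorule)=\frac{|\sum_i ir_i|\log|G|}{o}$; in the degenerate case $\sum_i ir_i=0$ both sides vanish, consistently. Finally, (5) is immediate: $G$ is non-abelian as a product of non-abelian simple groups, so Theorem \ref{str_trans_su_nonabel_nonesiste} rules out both strong transitivity and positive expansivity. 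The main obstacle I anticipate is the non-transitive direction of (3), where one must carefully argue that $\glorule^N=\mathrm{id}$ forces the radius of \emph{every} iterate (not just the multiples of $N$) to stay bounded, and then combine this with the dichotomy of Lemma \ref{equic_or_sens} to conclude non-sensitivity.
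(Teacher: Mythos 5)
Your proposal is correct and follows essentially the same route as the paper: Remark~\ref{FsurKerempty} plus Lemma~\ref{ker_sta__nel_centro_se_sur} and triviality of $\Z_G$ for (1), the relation $\glorule^{N}=\sigma^{-M}$ from Lemma~\ref{lemma_product_simples} for (2)--(4), the entropy identities $\entropy{H}(F^k)=k\entropy{H}(F)$ and $\entropy{H}(\sigma^r)=|r|\log|G|$ for (4), and Theorem~\ref{str_trans_su_nonabel_nonesiste} for (5). The only (harmless) deviation is in the transitive direction of (3), where you derive sensitivity directly from the unbounded radii of $\glorule^{nN}=\sigma^{-nM}$ via Lemma~\ref{radius_sensitive} instead of citing the general fact that transitive CA are sensitive, and your worry about bounding $\rho(\glorule^n)$ for all $n$ when $\glorule^N=\mathrm{id}$ is already resolved by your own periodicity observation.
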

\begin{proof}$\ $\\
$(1)$  Any injective CA is also surjective. If $\glorule$ is surjective, since $\Z_G$ is the trivial group, $\glorule$ is also injective according to  Lemma~\ref{ker_sta__nel_centro_se_sur}. As to the equivalence between the surjectivity of $\glorule$ and the fact that the intersection of all the kernels of the $h_i$'s is trivial, see Remark~\ref{FsurKerempty}.\\
$(2)$  Assume that $\glorule$ is topologically transitive. Clearly, $\glorule$ is also surjective.  If, for a sake of argument,  the equality $\sum_i ir_i=0$ holds, by Lemma \ref{lemma_product_simples} it would follow  that there exists $n$ such that $\glorule^n$ is the identity. But this is impossible since $\glorule$ is topologically transitive. Hence, $\sum_i ir_i\neq 0$. As to the converse implication, if $\glorule$ is surjective and $\sum_i ir_i \neq 0$, there exists $n$ such that $\glorule^n$ is a shift-like GCA. Therefore, $\glorule$ is topologically transitive. \\
$(3)$  It is well known that any topologically transitive CA is also sensitive to initial conditions. 
Assume now that  $\glorule$ is a surjective GCA which is not  topologically transitive. By item (2), $\sum_i ir_i = 0$ and there exists $n$ such that $\glorule^n$ is the identity. Therefore, $\glorule$ is not sensitive  to initial conditions by Lemma~\ref{radius_sensitive}.\\
$(4)$  It is well known \cite[Thm. 1.2, p. 335]{robinson1995dynamical} that, if $F$ is  any continuous map over a compact topological space, it holds that $$\entropy{H}(F^k)=k\entropy{H}(F)$$ for every $k\in \mathbb N$. Moreover, it is also know that the topological entropy of the shift $\sigma^r$ on the alphabet $G$ is $|r|\log(|G|)$.
In our case, by Lemma \ref{lemma_product_simples}, we get 
\begin{align*}
o\gcd(o_1,\ldots,o_m)\entropy{H}(\glorule)&=\entropy{H}(\glorule^{o\gcd(o_1,\ldots,o_m)})\\&=\entropy{H}(\sigma^{-\gcd(o_1,\ldots,o_m)\sum_iir_i})\\&=
\gcd(o_1,\ldots,o_m)\left | \sum_iir_i\right | \log(|G|).
\end{align*}
Thus, $$\entropy{H}(\glorule)=\frac{\left | \sum_iir_i\right | \log(|G|)}{o}.$$\\
$(5)$ 
 It directly follows by Theorem~\ref{str_trans_su_nonabel_nonesiste}.
\end{proof}





\subsection{Decidability of dynamical properties}


Let $C$ be a possibly infinite set of CA, such as the set of GCA.
Let $P$ be a property that a CA may or may not satisfy, such as surjectivity or topological transitivity.
$P$ is decidable for $C$ if and only if there exists an algorithm that, given a CA $F \in C$, returns ``Yes" if $F$ satisfies $P$, and ``No" otherwise. 
If, instead of a property $P$, we consider a numerical function $N: C \to \mathbb{R}$, such as topological entropy or Lyapunov exponents, $N$ is computable for $C$ if and only if there exists an algorithm that, given a CA $F \in C$, computes $N(F)$.
Deciding a property (or computing a function) involves a computational cost that the notion of decidability (or computability) does not take into account. We will therefore say that a property is efficiently decidable (or a function is efficiently computable) if the algorithm that decides whether the property holds (or computes the function) is efficient, where efficient usually means polynomial-time.
As for CA, efficient algorithms typically analyze the structure of the local rule, which is a finite object, whereas inefficient algorithms usually operate on the space-time dynamics of the CA, which is potentially infinite in size.

The literature contains a number of results related to the decidability and computability of properties and functions, respectively,  across various classes of CA. 
Below, we list some of the most significant ones. 
In what follows, we will denote by  $D$-CA the class of $D$-dimensional CA. The same notation  will also be used  for GCA.

\noindent
- Every non-trivial property of limit sets of general   1-CA is undecidable \cite{Kari94}.\\
- Surjectivity and injectivity are decidable for general 1-CA \cite{amoroso1972decision}  and undecidable for general 2-CA   \cite{kari1994reversibility}. Surjectivity and injectivity are  decidable for $D$-GCA with $D\geq 1$\cite{BeaurK24}.\\
- Topological entropy for general   1-CA is  uncomputable~\cite{Hurd_Kari_Culik_1992}. It is computable for  1-GCA on $\ZZ/m\ZZ$ and for general positively expansive 1-CA 
\cite{DamicoMM03}. 
We strongly believe that the topological entropy  is also efficiently computable for 1-GCA on abelian groups.  \\
- Strong transitivity is  efficiently decidable  for $D$-GCA on $\ZZ/m\ZZ$ with  $D\geq 1$~\cite{ManziniM99}. We strongly believe that strong transitivity is also efficiently decidable for $D$-GCA on abelian groups with $D\geq 1$  .\\
- Lyapunov exponents are efficiently computable for 1-GCA on $\ZZ/m\ZZ$~\cite{FinelliMM98}. \\
- Sentitivity to the initial conditions and topological transitivity are undecidable for general 1-CA (even when restricting to the case of reversible 1-CA) ~\cite{Lukkarila10}. \\
- Sentitivity to the initial conditions, equicontinuity, topological transitivity, ergodicity, positive expansivity, and DPO are efficiently decidable for 1-GCA on abelian groups~\cite{
DennunzioFGM2020TCS,
Dennunzio20JCSS,
DennunzioFMMP19,
DBLP:journals/access/DennunzioFM23,
DBLP:journals/isci/DennunzioFM24,
kari2000}.\\ 
- Sentitivity to the initial conditions and equicontinuity
are decidable for $D$-GCA with $D\geq 1$~\cite{BeaurK24}.\\
- Non-transitivity  is semi-decidable for $D$-GCA with $D\geq 1$~\cite{BeaurK24}.\\

The decidability results in this paper are based on the following fundamental remark.
\begin{remark}\label{decid}
Since $G$ is a finite group, the set of GCA produced by the function {\tt Explicit-Decomposition} defined in Section \ref{section_spezzatino} are computable.
\end{remark}

As a consequence of Remark~\ref{decid} and the other results presented in this paper, the following statements are true.\\

\noindent
$(1)$ If Question~\ref{conjecture_trans} had a positive answer, \textit{topological transitivity, totally transitivity, ergodicity, weakly and strongly mixing are decidable properties for  1-GCA.} By Theorem~\ref{equiv_prop}, all these properties are equivalent for 1-GCA. By  Theorems~\ref{transitive} and~\ref{simple_iso_copies_}, topological transitivity is  decidable for 1-GCA.  \\
$(2)$ \textit{If topological entropy is computable  for surjective 1-GCA on abelian groups, then it is computable for all surjective 1-GCA.} It follows by Theorem~\ref{entropy}.\\
$(3)$ \textit{If strong transitivity is decidable for 1-GCA on abelian groups then it is decidable for all 1-GCA.}
Since strongly transitive 1-GCA do not exist unless the underlying group is abelian (Theorem~\ref{str_trans_su_nonabel_nonesiste}), the decidability of strong transitivity for 1-GCA reduces to its decidability for GCA on abelian groups.\\
$(4)$ \textit{Positive expansivity is decidable for 1-GCA.}
The proof follows the same reasoning as the proof of $(3)$. Note that positively expansive CA do not exist in dimensions greater than 1.\\
$(5)$ \textit{DPO is a decidable property for 1-GCA.} It follows by Proposition \ref{surj_open}.\\
$(6)$ \textit{Sensitivity to  initial conditions} and \textit{equicontinuity} are decidable for surjective 1-GCA (Theorem \ref{sensitivity} and Lemma \ref{equic_or_sens}).


Some of the results presented in this paper are related to those proven in \cite{BeaurK24}. We believe it is useful to clarify analogies and differences between the two.

First of all, it is important to emphasize that the results obtained in \cite{BeaurK24} are based on the analysis of the space-time dynamics of the GCA under consideration. This leads to an unsustainable computational cost, even for very simple GCA. In this work, we propose a completely different approach, which relies on the analysis of the structure of the group on which the GCA is defined and on its local rule. This analysis is completely independent of the dynamics of the GCA global rule. 
Moreover, in principle, our technique could be applied to any dynamical property we desire to investigate.

On the other hand, the result obtained in \cite{BeaurK24} are in a sense more general then ours since apply to $D$-dimensional GCA defined on any subshift of $G^\Z$.

As for the computational cost of the process leading to the decidability of dynamical properties of GCA, although we have not carried out a detailed analysis, we believe it to be significantly lower than that of the algorithms proposed in \cite{BeaurK24}.

It is rather remarkable that two such distant approaches have led to a surprisingly strong convergence of results.

\section{Conclusion and further work}\label{conclusion}


This paper establishes that several fundamental set-theoretic and dynamical properties of a GCA hold if and only if the same properties are satisfied by a corresponding set of GCA defined on significantly easier to study finite groups (abelian groups or products of simple non-abelian isomorphic groups). The set of such GCA is obtained by means of a novel algorithmic technique provided in this paper. 
In our opinion, our results are not only interesting in themselves, but also pave the way for tackling and solving a number of open questions related to the dynamical behavior of GCA. 



Unfortunately, our results do not help to solve the problem of
explicitly establishing the relationship between the local rule defining the cellular automaton and its global dynamical behavior. 
One of the reasons why this problem is so challenging is that the local rule of a GCA cannot be expressed using an easy-to-manage algebraic formulation. In the case of abelian GCA, their study can be reduced to that of Linear CA where the local rule can be represented by a square matrix whose elements are Laurent polynomials. This allows characterizing the dynamical properties of the GCA in terms of specific properties of the characteristic polynomial of the matrix representing the local rule of the  Linear CA associated with the GCA.
As far as we know, in the case of non-abelian groups, the only practical way to represent a local rule is by describing the behavior of its associated endomorphisms, e.g., by assigning their values when applied to the generators of the group. A method to overcome this limitation would be a significant step toward an easy-to-check characterization of dynamical properties.

\paragraph{Acknowledgements}
This work was partially supported by the PRIN 2022 PNRR project ``Cellular Automata Synthesis for Cryptography Applications (CASCA)'' (P2022MPFRT) funded by the European Union – Next Generation EU, and by the HORIZON-MSCA-2022-SE-01 project 101131549 ``Application-driven Challenges for Automata Networks and Complex Systems (ACANCOS)''.

\bibliographystyle{elsarticle-num}
\bibliography{matrix_groups_bib}




\appendix

\section{Fundamentals of groups and fields}\label{A1}

In this appendix, we recall some standard definitions and concepts from the theory of finite groups and finite fields.  

\subsection{Basic Definitions in Finite Group Theory}

\subsubsection{Groups}
A \textit{group} is a set \( G \) equipped with a binary operation \( \cdot : G \times G \to G \) (typically written as multiplication) that satisfies the following axioms:
\begin{enumerate}
    \item \textbf{Associativity}: For all \( a, b, c \in G \), we have \( (a \cdot b) \cdot c = a \cdot (b \cdot c) \).
    \item \textbf{Neutral element (identity)}: There exists an element \( e \in G \) (called the identity element) such that for all \( a \in G \), \( e \cdot a = a \cdot e = a \).
    \item \textbf{Inverse element}: For each \( a \in G \), there exists an element \( a^{-1} \in G \) (called the inverse of \( a \)) such that \( a \cdot a^{-1} = a^{-1} \cdot a = e \).
\end{enumerate}
If \( G \) is a finite set, we call \( G \) a \textit{finite group}.

A group $G$ such that $a\cdot b= b \cdot a$ for all $a,b \in G$ is said to be \textit{abelian}.

A \textit{cyclic group} is a group that is generated by a single element $g$, i.e., all its elements can be written as powers of $g$.

If $A$ and $B$ are subsets of the group $G$, $AB$ denotes the set $\{a\cdot b : a\in A;\,b\in B\}$.

\subsubsection{Homomorphisms and Isomorphisms}
A \textit{homomorphism} between two groups \( G \) and \( H \) is a map \( \varphi: G \to H \) such that for all \( a, b \in G \),
\[
\varphi(a \cdot b) = \varphi(a) \cdot \varphi(b).
\]

The set of all homomorphisms from  \( G \) to \(H\) is  denoted \( \operatorname{Hom}(G,H) \),

If the homomorphism is bijective, then \( \varphi \) is called an \textit{isomorphism}, and \( G \) and \( H \) are said to be \textit{isomorphic} (denoted \( G \cong H \)).

\subsubsection{Endomorphisms and Automorphisms}
An \textit{endomorphism} is a homomorphism from a group to itself, i.e., a homomorphism \( \varphi: G \to G \).
The set of all endomorphisms of a group \( G \) is denoted by \( \operatorname{End}(G) \).
An \textit{automorphism} is an isomorphism from a group to itself, i.e., a bijective endomorphism. The set of all automorphisms of a group \( G \), denoted \( \operatorname{Aut}(G) \), forms a group under composition of functions.

\subsubsection{Subgroups and Normal Subgroups}
A \textit{subgroup} of a group \( G \) is a subset \( H \subseteq G \) that is itself a group under the operation inherited from \( G \). That is, \( H \) must satisfy the following conditions:
\begin{enumerate}
    \item \( H \) contains the identity element of \( G \),
    \item \( H \) is closed under the group operation, i.e., for all \( h_1, h_2 \in H \), \( h_1 \cdot h_2 \in H \),
    \item \( H \) is closed under inverses, i.e., for all \( h \in H \), \( h^{-1} \in H \).
\end{enumerate}

A subgroup \( N \) of \( G \) is said to be \textit{normal} (denoted \( N \trianglelefteq G \)) if for all \( g \in G \) and \( n \in N \), we have \( g n g^{-1} \in N \). This is equivalent to saying that \( g N g^{-1} = N \) for all \( g \in G \).

\subsubsection{Kernel of a Homomorphism}
The \textit{kernel} of a homomorphism \( \varphi: G \to H \) is the set 
\[
\Ker(\varphi) = \{ g \in G : \varphi(g) = e_H \},
\]
where \( e_H \) is the identity element of \( H \). The kernel is a normal subgroup of \( G \).

\subsubsection{Center and Commutator Subgroup}
The \textit{center} of a group \( G \), denoted \( Z(G) \), is the set of elements in \( G \) that commute with every element of \( G \):
\[
Z_G = \{ g \in G : g h = h g \text{ for all } h \in G \}.
\]

The \textit{commutator} of two elements \( g, h \in G \) is defined as \( [g, h] = g h g^{-1} h^{-1} \). The \textit{commutator subgroup} (or derived subgroup) of \( G \), denoted \( G' \) or \( [G, G] \), is the subgroup generated by all commutators of elements of \( G \):
\[
G' = \langle [g, h] : g, h \in G \rangle.
\]

A group $G$ is said to be \textit{perfect} if $G=[G,G]$.

\subsubsection{Order of an Element and the Group}
The \textit{order} of an element \( g \in G \) is the smallest positive integer \( n \) such that \( g^n = e \), where \( e \) is the identity element of \( G \). If no such integer exists, \( g \) is said to have infinite order, though this is irrelevant for finite groups.

The \textit{order} of a finite group \( G \) is the number of elements in \( G \), denoted \( |G| \).

\subsubsection{Direct Product of Groups}
Given two groups \( G \) and \( H \), their \textit{direct product} is the group \( G \times H \) consisting of ordered pairs \( (g, h) \), where \( g \in G \) and \( h \in H \), with the group operation defined component-wise:
\[
(g_1, h_1) \cdot (g_2, h_2) = (g_1 \cdot g_2, h_1 \cdot h_2).
\]
The identity element of \( G \times H \) is \( (e_G, e_H) \), where \( e_G \) and \( e_H \) are the identity elements of \( G \) and \( H \), respectively.

\subsubsection{Semidirect Product}
Given two groups \( G \) and \( H \), and a homomorphism \( \varphi: H \to \operatorname{Aut}(G) \) (where \( \operatorname{Aut}(G) \) is the group of automorphisms of \( G \)), the \textit{semidirect product} of \( G \) and \( H \), denoted \( G \rtimes_\varphi H \), is the set \( G \times H \) with the operation defined by:
\[
(g_1, h_1) \cdot (g_2, h_2) = (g_1 \cdot \varphi(h_1)(g_2), h_1 \cdot h_2).
\]
If \( \varphi \) is the trivial homomorphism (i.e., \( \varphi(h) \) is the identity automorphism for all \( h \in H \)), then \( G \rtimes_\varphi H \) is the direct product \( G \times H \).

\subsubsection{Quotient Groups}
Let \( N \) be a normal subgroup of \( G \). The \textit{quotient group} \( G/N \) is defined as the set of left cosets of \( N \) in \( G \):
\[
G/N = \{ gN : g \in G \},
\]
where each coset \( gN = \{ g n : n \in N \} \). The operation on \( G/N \) is given by coset multiplication:
\[
(gN)(hN) = (gh)N \quad \text{for all } g, h \in G.
\]

The coset $gN$ is denoted by $[g]$.


The order of the quotient group \( G/N \) is given by
\[
|G/N| = \frac{|G|}{|N|}.
\]

It is customary to denote by $\pi$ the map $\pi:G\to G/N$ that associate each element $g$ to its coset $gN.$

\subsubsection{First Isomorphism Theorem}

The \textit{First Isomorphism Theorem} for groups states that if \( \varphi: G \to H \) is a homomorphism between two groups, then the quotient group \( G / \Ker(\varphi) \) is isomorphic to the image of \( \varphi \). 

More formally, the theorem asserts that the map 
\[
\psi: G / \Ker(\varphi) \to \operatorname{Im}(\varphi), \quad g \Ker(\varphi) \mapsto \varphi(g)
\]
is a well-defined isomorphism. 

For more details about group theory see \cite{Rotman_groups}.

\subsubsection{Conjugacy Classes}

The \textit{conjugacy class} of an element \( g \in G \) is the set of elements in \( G \) that are \textit{conjugate} to \( g \), in symbols
\[
\mathcal{C}(g) = \{ xgx^{-1} : x \in G \}.
\]

The size of the conjugacy class of \( g \) can be calculated using the formula
\[
|\mathcal{C}(g)| = \frac{|G|}{|C_G(g)|},
\]
where \( C_G(g) \) is the \textit{centralizer} of \( g \) in \( G \), defined as
\[
C_G(g) = \{ x \in G : xg = gx \}.
\]

More generally, the \textit{centralizer} of a subset $S\subseteq G$ is the subgroup 
\[
C_G(S) = \{ x \in G : xg = gx \; \forall g \in S\}.
\]




\subsubsection{Simple Groups and Quasi-Simple Groups}
A \textit{simple group} is a nontrivial group \( G \) that has no proper nontrivial normal subgroups.

A \textit{quasi-simple group} is a perfect group \( G \) such that \( G/Z(G) \) is a simple group.

\subsection{Finite Fields}

Here we recall the basic notions about finite fields. See \cite{Rotman_groups} for further details. 

A field $\mathbb K$ is is a set equipped with two operations: addition \( + \) and multiplication \( \cdot \), such that the following properties hold:

\begin{enumerate}
    \item $(\mathbb K, +)$ is a group with $0$ as identity element,
    \item $(\mathbb K\setminus{0}, \cdot )$ is a group with $1$ as identity element and
    \item multiplication distributes over addition.
\end{enumerate}

A finite field, also known as a Galois field, is a field with a finite number of elements. Let $\mathbb{F}_q$ denote a finite field with $q$ elements. It is a well-known result that the number of elements $q$ in a finite field is always a power of a prime, i.e., $q = p^m$, where $p$ is a prime number, and $m$ is a positive integer. When $m = 1$, the field $\mathbb{F}_q$ is simply the cyclic group $\ZZ_p$.

Finite fields $\mathbb{F}_q$ are unique up to isomorphism, meaning that all finite fields with $q$ elements are structurally the same. The multiplicative group of non-zero elements of $\mathbb{F}_q$, denoted by $\mathbb{F}_q^\times$, is cyclic of order $q-1$ \cite[Thm. 2.18]{Rotman_groups}.

\end{document}